  \providecommand\BibTeX{{%
    \normalfont B\kern-0.5em{\scshape i\kern-0.25em b}\kern-0.8em\TeX}}}
\newcommand\infer[3]{\ensuremath{{\dfrac{#3}{#2}}{\scriptstyle #1}}}
\newcommand{\hastypei}[2]{\ensuremath{\Gamma \vdash #1 : #2}}
\newcommand{\hastypeg}[3]{\ensuremath{\Gamma, #1 \vdash #2 : #3}}
\newcommand\TVar{\mathcal{TV}ar}
\newcommand\Var{\mathcal{V}ar}
\newcommand{\pf}[1]{\textsl{PF}(#1)}
\newcommand{\conj}[1]{\ensuremath{\bigwedge(#1)}}
\newcommand{\jud}[3]{\ensuremath{#1 \vdash #2 : #3}}
\newcommand\pair[2]{\ensuremath{\langle #1,#2 \rangle}}
\newcommand{\subst}[3]{\ensuremath{[#1:=#2](#3)}}
\newcommand{\substa}[3]{\ensuremath{[#1:=#2]#3}}
\newcommand{\app}[2]{\ensuremath{#1 #2}}
\newcommand{\abs}[2]{\ensuremath{\lambda #1 . #2}}
\newcommand{\tapp}[2]{\ensuremath{#1[#2]}}
\newcommand{\pApp}[2]{\ensuremath{#1[#2]}}
\newcommand{\pAbs}[2]{\ensuremath{\Lambda #1 . #2}}
\newcommand{\proj}[2]{\ensuremath{\pi_{#1}(#2)}}
\newcommand{\proja}[2]{\ensuremath{\pi_{#1}(#2)}}
\newcommand{\ax}{\ensuremath{(\text{\textsl{ax}})}}
\newcommand{\conji}{\ensuremath{(\wedge_i)}}
\newcommand{\imple}{\ensuremath{(\Rightarrow_e)}}
\newcommand{\req}{\ensuremath{(\equiv)}}
\newcommand{\hastype}[2]{\ensuremath{#1:#2}}
\newcommand{\timpl}[2]{\ensuremath{#1 \Rightarrow #2}}
\newcommand{\tconj}[2]{\ensuremath{#1 \wedge #2}}
\newcommand{\tfor}[2]{\ensuremath{\forall #1 . #2}}
\newcommand{\sep}{\ |\ }
\newcommand{\re}{\ensuremath{\hookrightarrow}}
\newcommand{\ftva}[1]{\ensuremath{FTV}(#1)}
\newcommand{\kk}[3]{\ensuremath{K#1^{#2}_{#3}}}
\newcommand{\ka}[4]{\ensuremath{K#1^{#2}_{#3}\fhole{#4}}}
\newcommand\T{\ensuremath{{\mathcal T}}}
\newcommand\fhole[1]{\ensuremath{\llparenthesis{#1}\rrparenthesis}}
\newcommand\hole[1]{\ensuremath{\llparenthesis\,\rrparenthesis^{#1}}}
\newcommand\interp[1]{\ensuremath{\left\llbracket{#1}\right\rrbracket}}
\newcommand\SN{\ensuremath{\mathsf{SN}}}
\newcommand\eq{\ensuremath{\rightleftarrows}}
\newcommand\nneq{\ensuremath{\not\rightleftarrows}}
\newcommand\toreq{\ensuremath{\rightarrow}}
\newcommand\rulelabel[1]{\mbox{\scriptsize\sc{#1}}}
\newtheorem{theorem}{Theorem}[section]
\newtheorem{lemma}[theorem]{Lemma}
\newtheorem{corollary}[theorem]{Corollary}
\theoremstyle{definition}
\newtheorem{definition}[theorem]{Definition}
\newtheorem{example}[theorem]{Example}
\newcommand\xrecap[4]{{\noindent\sffamily\bfseries}{\bf #1 \ref{#3}}{ (#2)}{\bf.} \emph{#4}}
\newcommand\recap[3]{{\noindent\sffamily\bfseries}{\bf #1 \ref{#2}.} \emph{#3}}
\newcommand{\conje}{\ensuremath{(\wedge_e)}}
\newcommand{\impli}{\ensuremath{(\Rightarrow_i)}}
\newcommand{\fori}{\ensuremath{(\forall_i)}}
\newcommand{\fore}{\ensuremath{(\forall_e)}}
\newcommand{\p}[2]{\ensuremath{\pair{#1}{#2}}}
\newcommand{\fvt}[1]{\ensuremath{FV}(#1)}
\newcommand{\deductiona}[2]{\item \ensuremath{#1} \hfill (#2)}
\newcommand{\deductionb}[3]{\item \ensuremath{#1 \\ #2} \hfill (#3)}
\newcommand{\deductionc}[4]{\item \ensuremath{#1 \\ #2 \\ #3} \hfill (#4)}
\newcommand{\deductionsb}{\begin{enumerate}}
\newcommand{\deductionse}{\end{enumerate}}
\newcommand{\judi}[2]{\jud{\Gamma}{#1}{#2}}
\newcommand{\judg}[3]{\jud{\Gamma, #1}{#2}{#3}}
\begin{document}

\title{Polymorphic System I}

\author{Cristian F. Sottile$^1$ \and Alejandro D\'iaz-Caro$^{1,2}$ \and Pablo E. Mart\'inez L\'opez$^2$ \\
  {\small $^1$ Instituto de Investigaci\'on en Ciencias de la Computaci\'on (ICC).} \\ {\small CONICET--Universidad de Buenos Aires. Argentina.} \\
  {\small $^2$ Departamento~de Ciencia y Tecnolog\'ia.} \\ 
  {\small Universidad Nacional de Quilmes. Argentina.}}

\date{}
\maketitle

\begin{abstract}
  System I is a simply-typed lambda calculus with pairs, extended with an
  equational theory obtained from considering the type isomorphisms as equalities.
  In this work we propose an extension of System I to polymorphic types, adding
  the corresponding isomorphisms. We provide non-standard proofs of subject
  reduction and strong normalisation, extending those of System I.
\end{abstract}


\section{Introduction}
Two types $A$ and $B$ are considered isomorphic ($\equiv$) if there exist two
functions $f$ of type $A\Rightarrow B$ and $g$ of type $B\Rightarrow A$ such
that the composition $g\circ f$ is semantically equivalent to the identity in
$A$ and the composition $f\circ g$ is semantically equivalent to the identity in
$B$. Di~Cosmo et al.~\cite{DiCosmo95} characterised the isomorphic types in
different systems: simple types, simple types with pairs, polymorphism, etc.
Using this characterisation, System I has been
defined~\cite{DiazcaroDowekFSCD19}. It is a simply-typed lambda calculus with
pairs, where isomorphic types are considered equal. In this way, if $A$ and $B$
are isomorphic, every term of type $A$ can be used as a corresponding term of
type $B$. For example, the currying isomorphism $(A\wedge B)\Rightarrow C \equiv
A\Rightarrow B\Rightarrow C$ allows passing arguments one by one to a function
expecting a pair. Normally, this would imply for a function $f:(A\wedge
B)\Rightarrow C$ to be transformed through a term $r$ into $rf:A\Rightarrow
B\Rightarrow C$. System I goes further, by considering that $f$ has both types
$(A\wedge B)\Rightarrow C$ and $A\Rightarrow B\Rightarrow C$, and so, the
transformation occurs implicitly without the need for the term $r$. To make this
idea work, System I includes an equivalence between terms; for example:
$r\langle s,t \rangle\rightleftarrows rst$, since if $r$ expects a pair, it can
also take each component at a time. Also, $\beta$-reduction has to be
parametrized by the type: if the expected argument is a pair, then $r\langle s,t
\rangle$ $\beta$-reduces; otherwise, it does not $\beta$-reduce, but $rst$ does.
For example, $(\lambda x^{A\wedge B}.u)\langle r,s \rangle$ $\beta$-reduces if
$r$ has type $A$ and $s$ has type $B$. Instead, $(\lambda x^A.u)\langle r,s
\rangle$ does not reduce directly, but since it is equivalent to $(\lambda
x^A.u)rs$, which does reduce, then it also reduces, modulo this equivalence.

The idea of identifying some propositions has already been investigated, for
example, in Martin-L\"of's type theory \cite{MartinLof84}, in the Calculus of
Constructions \cite{CoquandHuetIC88}, and in Deduction modulo theory
\cite{DowekHardinKirchnerJAR03,DowekWernerJSL98}, where definitionally
equivalent propositions, for instance $A \subseteq B$, $A \in\mathcal P(B)$, and
$\forall x~(x \in A \Rightarrow x \in B)$ can be identified. But definitional
equality does not handle isomorphisms. For example, $A \wedge B$ and $B \wedge
A$ are not identified in these logics. Besides definitional equality,
identifying isomorphic types in type theory is also a goal of the univalence
axiom \cite{HoTT}. From the programming perspective, isomorphisms capture the
computational meaning correspondence between types. Taking currying again, for
example, we have a function $f$ of type $\timpl{\tconj{A}{B}}{C}$ that can be
transformed, because there exists an isomorphism, into a function $f'$ of type
$\timpl{A}{\timpl{B}{C}}$. These two functions differ in how they can be
combined with other terms, but they share a purpose: they both compute the same
value of type $C$ given two arguments of types $A$ and $B$. In this sense,
System I's proposal is to allow a programmer to focus on the meaning of
programs, combining any term with the ones that are combinable with its
isomorphic counterparts (e.g. $fx^Ay^B$ and $f'\pair{x^A}{y^B}$), ignoring the
rigid syntax of terms within the safe context provided by type isomorphisms.
From the logic perspective, isomorphisms make proofs more natural. For instance,
to prove $(A\wedge(A\Rightarrow B))\Rightarrow B$ in natural deduction we need
to introduce the conjunctive hypothesis $A\wedge (A\Rightarrow B)$, which has to
be decomposed into $A$ and $A\Rightarrow B$, while using currying allows to
transform the goal to $A\Rightarrow (A\Rightarrow B)\Rightarrow B$ and to
directly introduce the hypotheses $A$ and $A\Rightarrow B$, completely
eliminating the need for the conjunctive hypotheses.

One of the pioneers on using isomorphisms in programming languages has been
Rittri~\cite{RittriCADE90}, who used the types, equated by isomorphisms, as
search keys in program libraries.

An interpreter of a preliminary version of System I extended with a recursion
operator has been implemented in Haskell~\cite{DiazcaroMartinezlopezIFL15}. Such
a language have peculiar characteristics. For example, using the existing
isomorphism between $A\Rightarrow (B\wedge C)$ and $(A\Rightarrow
B)\wedge(A\Rightarrow C)$, we can project a function computing a pair of
elements, and obtain, through evaluation, a simpler function computing only one
of the elements of the pair, discarding the unused code that computes the output
that is not of interest to us. That paper includes some non trivial examples.

In this work we propose an extension of System I to polymorphism, considering
the corresponding isomorphisms.

\paragraph{Plan of the paper.}
The paper is organised as follows: Section~\ref{sec:SystemI} introduces the
proposed system, and Section~\ref{sec:ej1} gives examples to better clarify the
constructions. Section~\ref{sec:SR} proves the Subject Reduction property and
Section~\ref{sec:SN} the Strong Normalisation property, which are the main
theorems in the paper. Finally, Section~\ref{sec:conclusiones} discusses some
design choices, as well as possible directions for future work.

%

\section{Intuitions and Definitions}\label{sec:SystemI}
We define Polymorphic System I (PSI) as an extension of System
I~\cite{DiazcaroDowekFSCD19} to polymorphic types. The syntax of types coincides
with that of System F~\cite[Chapter~11]{Girard89} with pairs:
$$
A \quad:=\quad X \sep \timpl{A}{A} \sep \tconj{A}{A} \sep \forall X.A
$$
where $X\in\TVar$, a set of type variables.

The extension with respect to System F with pairs consists of adding a typing
rule such that if $t$ has type $A$ and $A\equiv B$, then $t$ has also type $B$,
which is valid for every pair of isomorphic types $A$ and $B$. This non-trivial
addition induces a modification of the operational semantics of the calculus.

There are eight isomorphisms characterising all the valid isomorphisms of System
F with pairs (cf.~\cite[Table~1.4]{DiCosmo95}). From those eight, we consider
the six given as a congruence in Table~\ref{tab:isos}, where $FTV(A)$ is the set
of free type variables defined as usual.

\begin{table}[t]
  \centering
  \setcounter{equation}{0}
  \begin{align}
    A \land B & \equiv B \land A\label{iso:comm} \\
    A \land (B \land C) & \equiv (A \land B) \land C\label{iso:asso} \\
    A \Rightarrow (B \land C) & \equiv (A \Rightarrow B) \land (A \Rightarrow C)\label{iso:dist} \\
    (A \land B) \Rightarrow C & \equiv A \Rightarrow B \Rightarrow C\label{iso:curry} \\
    \mbox{\scriptsize If $X \not \in FTV(A)$,}\hspace{2mm} \forall X . (A \Rightarrow B) & \equiv A \Rightarrow \forall X . B \label{iso:pcommi} \\
    \forall X .(A \land B) & \equiv \forall X . A \land \forall X . B \label{iso:pdist}
  \end{align}
  \caption{Isomorphisms considered in PSI}
  \label{tab:isos}
\end{table}
The two non-listed isomorphisms are the following:
\begin{equation}
  \label{iso:alpha}
  \forall X.A  \equiv \forall Y.[X:=Y]A
\end{equation}
\begin{equation}
  \label{iso:swap}
  \forall X.\forall Y.A \equiv \forall Y.\forall X.A
\end{equation}
The isomorphism \eqref{iso:alpha} is in fact an $\alpha$-equivalence, and we
indeed consider terms and types modulo $\alpha$-equivalence. We simply do not
make this isomorphism explicit in order to avoid confusion. The isomorphism
\eqref{iso:swap}, on the other hand, is not treated in this paper because PSI is
presented in Church style (as System I), and so, being able to swap the
arguments of a type abstraction would imply swapping the typing arguments with a
cumbersome notation and little gain. We discuss this in
Section~\ref{sec:swap}.

The added typing rule for isomorphic types induces certain equivalences between
terms. In particular, the isomorphism \eqref{iso:comm} implies that the pairs
$\pair rs$ and $\pair sr$ are indistinguishable, since both are typed as
$A\wedge B$ and also as $B\wedge A$, independently of which term has type $A$
and which one type $B$. Therefore, we consider that those two pairs are
equivalent. In the same way, as a consequence of isomorphism \eqref{iso:asso},
$\pair r{\pair st}$ is equivalent to $\pair{\pair rs}t$.

Such an equivalence between terms implies that the usual projection, which is
defined with respect to the position (i.e.~$\pi_i(\pair{r_1}{r_2})\re r_i$), is
not well-defined in this system. Indeed, $\pi_1(\pair rs)$ would reduce to $r$,
but since $\pair rs$ is equivalent to $\pair sr$, it would also reduce to $s$.
Therefore, PSI (as well as System I), defines the projection with respect to a
type: if $\Gamma\vdash r:A$, then $\proja{A}{\pair{r}{s}}\re r$.

This rule turns PSI into a non-deterministic (and therefore non-confluent)
system. Indeed, if both $r$ and $s$ have type $A$, then $\proja A{\pair rs}$
reduces non-deterministically to $r$ or to $s$. This non-determinism, however,
can be argued not to be a major problem: if we think of PSI as a proof
system, then the non-determinism, as soon as we have type preservation, implies
that the system identifies different proofs of isomorphic propositions (as a form
of proof-irrelevance). On the other hand, if PSI is thought as a programming
language, then the determinism can be recovered by the following encoding: if
$r$ and $s$ have the same type, it suffices to encode the deterministic
projection of $\pair rs$ into $r$ as
\(
  \pi_{B\Rightarrow A}(\pair{\lambda x^B.r}{\lambda x^C.s})t
\)
where $B\not\equiv C$ and $t$ has type $B$. Hence, the non-determinism of System
I (inherited in PSI) is considered a feature and not a flaw
(cf.~\cite{DiazcaroDowekFSCD19} for a longer discussion).

Thus, PSI (as well as System I) is one of the many non-deterministic calculi in
the literature,
e.g.~\cite{BoudolIC94,BucciarelliEhrhardManzonettoAPAL12,deLiguoroPipernoIC95,DezaniciancagliniDeliguoroPipernoSIAM98,PaganiRonchidellaroccaFI10}
and so our pair-construction operator can also be considered as the parallel
composition operator of a non-deterministic calculus.

In non-deterministic calculi, the non-deterministic choice is such that if $ r$
and $ s$ are two $\lambda$-terms, the term $ r\oplus s$ represents the
computation that runs either $ r$ or $ s$ non-deterministically, that is such
that $( r \oplus s) t$ reduces either to $ r t$ or $ s t$. On the other hand,
the parallel composition operator $\parallel$ is such that the term $(
r\parallel s) t$ reduces to $ r t\parallel s t$ and continue running both $ r t$
and $ s t$ in parallel. In our case, given $ r$ and $ s$ of type $A\Rightarrow
B$ and $ t$ of type $A$, the term $\pi_B(\pair rs t)$ is equivalent to
$\pi_B(\pair{r t}{st})$, which reduces to $ r t$ or $ s t$, while the term $
\pair{rt}{st}$ itself would run both computations in parallel. Hence, our
pair-constructor is equivalent to the parallel composition while the
non-deterministic choice $\oplus$ is decomposed into the pair-constructor
followed by its destructor.

In PSI and System I, the non-determinism comes from the interaction of two
operators, $\pair{}{}$ and $\pi$. This is also related to the algebraic
calculi~\cite{ArrighiDiazcaroLMCS12,ArrighiDiazcaroValironIC17,ArrighiDowekLMCS17,VauxMSCS09,DiazcaroDowekTPNC17,DiazcaroGuillermoMiquelValironLICS19},
some of which have been designed to express quantum algorithms. There is a clear
link between our pair constructor and the projection $\pi$, with the
superposition constructor $+$ and the measurement $\pi$ on these algebraic
calculi. In these cases, the pair $s + t$ is not interpreted as a
non-deterministic choice, but as a superposition of two processes running $s$
and $t$, and the operator $\pi$ is the projection related to the measurement,
which is the only non-deterministic operator. In such calculi, the
distributivity rule $(r+ s) t\rightleftarrows r t+ s t$ is seen as the
point-wise definition of the sum of two functions.
\medskip

The syntax of terms is then similar to that of System F with pairs, but with the
projections depending on types instead of position, as discussed:
\[
  r\quad:=\quad x^A\mid \lambda x^A.r\mid rr\mid \pair{r}{r}\mid \pi_A(r)\mid
  \Lambda X.r\mid \tapp{r}{A}
\]
where $x^A\in\Var$, a set of typed variables. We omit the type of variables when it is
evident from the context. For example, we write $\lambda x^A.x$ instead of
$\lambda x^A.x^A$.

The type system of PSI is standard, with only two modifications with respect to
that of System F with pairs: the projection $(\wedge_e)$, and the added rule for
isomorphisms ($\equiv$). The full system is shown in Table~\ref{fig:typing}. We write
$\Gamma\vdash r:A$ to express that $r$ has type $A$ in context $\Gamma$. Notice,
however, that since the system is given in Church-style (i.e.~variables have
their type written), the context is
redundant~\cite{GeuversKrebbersMcKinnaWiedijkLFMTP10,ParkSeoParkLeeJAR13}.
Hence, we may write ``$r$ has type $A$'' with no ambiguity. From now on, except
where indicated, we use the first upper-case letters of the Latin alphabet ($A,
B, C, \dots$) for types, the last upper-case letters of the Latin alphabet ($W,
X, Y, Z$) for type variables, lower-case Latin letters ($r, s, t, \dots$)
for terms, the last lower-case letters of the Latin alphabet ($x, y, z$) for
term variables, and upper-case Greek letters ($\Gamma, \Delta, \dots$) for
contexts.

\begin{table}[t]
  \[
    \begin{array}{c}
      \infer{{\ax}}{\Gamma,x:A\vdash x:A}{\phantom{x:A}}
      \\[3ex]
      \infer{{(\equiv)}}{\Gamma\vdash r:B}{\Gamma\vdash r:A \qquad A\equiv B}
      \\[3ex]
      \infer{{(\Rightarrow_i)}}{\Gamma\vdash\lambda x^A.r:A\Rightarrow B}{\Gamma,x:A\vdash r:B}
      \\[3ex]
      \infer{{(\Rightarrow_e)}}{\Gamma\vdash rs:B}{\Gamma\vdash r:A\Rightarrow B \qquad \Gamma\vdash s:A}
      \\[3ex]
      \infer{{(\wedge_i)}}{\Gamma\vdash \pair{r}{s}:A\wedge B}{\Gamma\vdash r:A \qquad \Gamma\vdash s:B}
      \\[3ex]
      \infer{{(\wedge_e)}}{\Gamma\vdash\pi_A(r):A}{\Gamma\vdash r:A\wedge B}
      \\[3ex]
      \infer{{(\forall_i)}}{\Gamma\vdash\Lambda X.r:\forall X.A}{\Gamma\vdash r:A \qquad X\notin FTV(\Gamma)}
      \\[3ex]
      \infer{{(\forall_e)}}{\Gamma\vdash \tapp{r}{B}:[X:=B]A}{\Gamma\vdash r:\forall X.A}
    \end{array}
  \]
  \caption{Typing rules}
  \label{fig:typing}
\end{table}

In the same way as isomorphisms~\eqref{iso:comm} and \eqref{iso:asso} induce the
commutativity and associativity of pairs, as well as a modification in the
elimination of pairs (i.e.~the projection), the isomorphism \eqref{iso:dist}
induces that an abstraction of type $A\Rightarrow (B\wedge C)$ can be considered
as a pair of abstractions of type $(A\Rightarrow B)\wedge(A\Rightarrow C)$, and
so it can be projected. Therefore, an abstraction returing a pair is identified
with a pair of abstractions, and a pair applied distributes its argument---that
is, $\lambda x^A.\pair{r}{s}\rightleftarrows \pair{\lambda x^A.r}{\lambda
x^A.s}$, and $\pair{r}{s}t\rightleftarrows \pair{\app{r}{t}}{\app{s}{t}}$, where
$\rightleftarrows$ is a symmetric symbol (and $\rightleftarrows^*$ its
transitive closure).

In addition, isomorphism \eqref{iso:curry} induces the following equivalence:
$r\pair{s}{t}\rightleftarrows rst$. However, this equivalence produces an
ambiguity with the $\beta$-reduction. For example, if $s$ has type $A$ and $t$
has type $B$, the term $(\lambda x^{\tconj{A}{B}}.r)\pair{s}{t}$ can
$\beta$-reduce to $[x~:=~\pair{s}{t}]r$, but also, since this term is equivalent
to $(\lambda x^{\tconj{A}{B}}.r)st$, which $\beta$-reduces to $([x:=s]r)t$,
reduction would not be stable by equivalence. To ensure the stability of
reduction through equivalence, the $\beta$-reduction must be performed only when the
type of the argument is the same as the type of the abstracted variable: if
$\hastypei{s}{A}$, then $(\lambda x^A.r)s\re [x:=s]r$.

The two added isomorphisms for polymorphism (\eqref{iso:pcommi} and
\eqref{iso:pdist}) also add several equivalences between terms. Two induced by
\eqref{iso:pcommi}, and four induced by \eqref{iso:pdist}.

Summarising, the operational semantics of PSI is given by the relation $\re$
modulo the symmetric relation $\rightleftarrows$. That is, we consider the
relation
\[
  \to\quad :=\quad\rightleftarrows^*\circ\re\circ\rightleftarrows^*
\]
As usual, we write $\to^*$ the reflexive and transitive closure of $\to$.
We also may write $\re^n$ to express $n$ steps in relation $\re$, and $\re_R$ to
specify that the used rule is $R$. Both relations for PSI are given in
Table~\ref{tab:SemOp}.

\begin{table}[t]
  \centering
  \begin{align*}
    \pair{r}{s} & \rightleftarrows \pair{s}{r}
                  \tag{\text{\tiny{COMM}}} \label{si:eq:comm}\\
    \pair{r}{\pair{s}{t}} & \rightleftarrows \pair{\pair{r}{s}}{t}
                            \tag{\text{\tiny{ASSO}}} \label{si:eq:asso}\\
    \abs{x^{A}}{\pair{r}{s}} & \rightleftarrows \pair{\abs{x^{A}}{r}}{\abs{x^{A}}{s}}
                               \tag{\text{\tiny{DIST$_\lambda$}}} \label{si:eq:dist-abs}\\
    \app{\pair{r}{s}}{t} & \rightleftarrows \pair{\app{r}{t}}{\app{s}{t}}
                           \tag{\text{\tiny{DIST$_{\text{\tiny{app}}}$}}} \label{si:eq:dist-app}\\
    \app{r}{\pair{s}{t}} & \rightleftarrows \app{\app{r}{s}}{t}
                           \tag{\text{\tiny{CURRY}}} \label{si:eq:curry} \\
    \mbox{\scriptsize If $ X \notin \ftva{A}, $}\hspace{2mm}
    \Lambda X . \lambda x^A . r & \rightleftarrows \lambda x^A . \Lambda X . r
                                  \tag{\text{{\tiny{P-COMM$_{\forall_i\Rightarrow_i}$}}}} \label{spcommii} \\
    \mbox{\scriptsize If $ X \notin \ftva{A}, $}\hspace{2mm}
    \pApp{(\abs{x^A}{r})}B & \rightleftarrows \abs{x^A}{\pApp{r}B}
                             \tag{\text{{\tiny{P-COMM$_{\forall_e\Rightarrow_i}$}}}} \label{spcommei} \\
    \Lambda X . \pair{ r }{ s } &  \rightleftarrows \pair{ \Lambda X . r }{ \Lambda X . s }
                                  \tag{\text{\tiny{P-DIST$_{\forall_i\land_i}$}}} \label{spdistii} \\
    \pair{ r }{ s }[A] &  \rightleftarrows \pair{ \tapp{r}{A} }{ \tapp{s}{A} }
                         \tag{\text{\tiny{P-DIST$_{\forall_e\land_i}$}}} \label{spdistei} \\
    \pi_{\forall X . A}(\Lambda X . r) &  \rightleftarrows \Lambda X . \pi_A (r)
                                         \tag{\text{\tiny{P-DIST$_{\forall_i\land_e}$}}} \label{spdistie} \\
    \mbox{\scriptsize If $ \hastype{r}{\tfor{X}{(\tconj{B}{C})}}, $}\hspace{2mm}
    \pApp{(\proja{\tfor{X}{B}}{r})}A & \rightleftarrows \proj{\substa{X}{A}{B}}{\pApp{r}A}
                                       \tag{\text{\tiny{P-DIST$_{\forall_e\land_e}$}}} \label{spdistee} \\
    \\\notag
    \mbox{If $ \hastypei{s}{A}, $}\hspace{2mm}
    (\abs{ x^A }{ r }) s & \hookrightarrow \substa{x}{s}{r} 
                         \tag{\text{\small{$\beta_{\lambda}$}}} \label{red:smallbeta} \\
    (\Lambda X.r)[A] & \hookrightarrow [X:=A]r
                       \tag{\text{\small{$\beta_\Lambda$}}} \label{red:bigbeta} \\
    \mbox{If $ \hastypei{r}{A},$}\hspace{2mm}
    \pi_A (\pair{ r }{ s }) & \hookrightarrow r 
                          \tag{\text{\small{$\pi$}}} \label{red:pi}
  \end{align*}
  \[
    \infer{}{\lambda x^A.r\rightleftarrows\lambda x^A.s}{r\rightleftarrows s}
    \quad
    \infer{}{rt\rightleftarrows st}{r\rightleftarrows s}
    \quad
    \infer{}{tr\rightleftarrows ts}{r\rightleftarrows s}
  \]
  \[
    \infer{}{\pair{t}{r}\rightleftarrows \pair{t}{s}}{r\rightleftarrows s}
    \quad
    \infer{}{\pi_A(r)\rightleftarrows\pi_A(s)}{r\rightleftarrows s}
    \quad
    \infer{}{\pair{r}{t}\rightleftarrows \pair{s}{t}}{r\rightleftarrows s}
  \]
  \[
    \infer{}{\Lambda X.r\rightleftarrows\Lambda X.s}{r\rightleftarrows s}
    \quad
    \infer{}{\tapp{r}{A}\rightleftarrows \tapp{s}{A}}{r\rightleftarrows s}
  \]\vspace{-1mm}
  \[
    \infer{}{\lambda x^A.r\re\lambda x^A.s}{r\re s}
    \quad
    \infer{}{rt\re st}{r\re s}
    \quad
    \infer{}{tr\re ts}{r\re s}
  \]\vspace{-1mm}
  \[
    {\infer{}{\pair{t}{r}\re \pair{t}{s}}{r\re s}}
    \quad
    {\infer{}{\pi_A(r)\re\pi_A(s)}{r\re s}}
    \quad
    \infer{}{\pair{r}{t}\re \pair{s}{t}}{r\re s}
  \]\vspace{-1mm}
  \[
    {\infer{}{\Lambda X.r\re\Lambda X.s}{r\re s}}
    \quad
    {\infer{}{\tapp{r}{A}\re \tapp{s}{A}}{r\re s}}
  \]
  \caption{Relations defining the operational semantics of PSI}
  \label{tab:SemOp}
\end{table}

%

\section{Examples}
\label{sec:ej1}

In this Section we present some examples to discuss uses and necessity for the
rules presented.

\newcommand{\apply}{\abs{f^{\timpl{A}{B}}}{\abs{x^A}{\app{f}{x}}}}

\begin{example}\label{ex:ex1}
We show the use of term equivalences to allow applications that are not possible
to build in System F. For instance, the ``apply'' function
$$\apply$$
can be applied to a pair, e.g. \pair{g}{r} with $\vdash g : \timpl{A}{B}$ and
$\vdash r : A$, because, due to isomorphism \eqref{iso:curry}, the type
derivation from Table~\ref{tab:ex1} is valid. Then we have
$$
  \app{(\apply)}{\pair{g}{r}}
  \rightleftarrows  \app{\app{(\apply)}{g}}{r}
  \hookrightarrow_{\beta_\lambda}^2  \app{g}{r}
$$
\begin{table*}
  \centering
  \infer{\imple}{\jud{}{\app{(\apply)}{\pair{g}{r}}} {B}}{
    \infer{\req}{\jud{}{\apply} {\timpl{(\tconj{(\timpl{A}{B})}{A})}{B}}}
    {\jud{}{\abs{f^{\timpl{A}{B}}}{\abs{x^A}{\app{f}{x}}}} {\timpl{(\timpl{A}{B})}{\timpl{A}{B}}}}
    \qquad
    \infer{\conji}{\jud{}{\pair{g}{r}} {\tconj{(\timpl{A}{B})}{A}}}
    {
      {\jud{}{g} {\timpl{A}{B}}}
      \qquad
      {\jud{}{r} {A}}
    }
  }
  \caption{Type derivation of example~\ref{ex:ex1}.}
  \label{tab:ex1}
\end{table*}
\end{example}

\begin{example}
Continuing with the previous example, equivalent applications can be build in other ways. For instance, the term
$$\app{\app{(\apply)}{r}}{g}$$
is well-typed using isomorphisms \eqref{iso:comm} and \eqref{iso:curry}, and reduces to \app{g}{r}:
\begin{align*}
  \app{\app{(\apply)}{r}}{g} &\rightleftarrows \app{(\apply)}{\pair{r}{g}}
  \\ &\rightleftarrows  \app{(\apply)}{\pair{g}{r}} \\ &\rightarrow^*  \app{g}{r}
\end{align*}  
\end{example}

\begin{example}
Concluding with the previous example, the uncurried ``apply'' function 
\[
  \abs{z^{\tconj{(\timpl{A}{B})}{A}}}{\app{\proj{\timpl{A}{B}}{z}}{\proj{A}{z}}}
\] can be applied to
$\vdash g : \timpl{A}{B}$ and $\vdash r : A$ as if it was curried:
\begin{align*}
  &\app{\app{(\abs{z^{\tconj{(\timpl{A}{B})}{A}}}{\app{\proj{\timpl{A}{B}}{z}}{\proj{A}{z}}})}{g}}{r}\\
  &\rightleftarrows\app{(\abs{z^{\tconj{(\timpl{A}{B})}{A}}}{\app{\proja{\timpl{A}{B}}{z}}{\proja{A}{z}}})}{\pair{g}{r}}\\
  &\hookrightarrow_{\beta_\lambda}\app{\proja{\timpl{A}{B}}{\pair{g}{r}}}{\proja{A}{\pair{g}{r}}}\\
  &\hookrightarrow_\pi^2 \app{g}{r}
\end{align*}
\end{example}
In the three previous examples, the $\beta$-reduction cannot occur
before the equivalences because of the typing condition in rule ($\beta_\lambda$).


\begin{example}
Another use of interest is the one mentioned in Section~\ref{sec:SystemI}: a
function returning a pair can be projected even while not being applied,
computing another function. Consider the term
$$\proja{\timpl{A}{B}}{\abs{x^A}{\pair{r}{s}}}$$
where $x:A\vdash r : B$ and $x:A\vdash s : C$.
This term is typable using isomorphism \eqref{iso:dist}, since $\timpl
A{(\tconj BC)}\equiv\tconj{(\timpl AB)}{(\timpl AC)}$.
The reduction goes as follows:
\begin{align*}
  {\proj{\timpl{A}{B}}{\abs{x^A}{\pair{r}{s}}}}
  &\rightleftarrows \proja{\timpl{A}{B}}{\pair{\abs{x^A}{r}}{\abs{x^A}{s}}} \\
  &\hookrightarrow_\pi \abs{x^A}{r}
\end{align*}
\end{example}

\begin{example}\label{ex:ex2}
Rule (\ref{spcommii}) is a consequence of isomorphism~\eqref{iso:pcommi}. For instance, the term $$\app{(\pAbs{X}{\abs{x^A}{\abs{f^{\timpl{A}{X}}}{\app{f}{x}}}})}{r}$$ is well-typed assuming $\vdash r:A$ and $X \not \in \ftva{A}$, as shown in~Table~\ref{tab:ex2},
and we have
\begin{align*}
  \app{(\pAbs{X}{\abs{x^A}{\abs{f^{\timpl{A}{X}}}{\app{f}{x}}}})}{r}
  &\rightleftarrows
  \app{(\abs{x^A}{(\pAbs{X}{\abs{f^{\timpl{A}{X}}}{\app{f}{x}}})})}{r}\\
  &\hookrightarrow_{\beta_\lambda} (\pAbs{X}{\abs{f^{\timpl{A}{X}}}{\app{f}{r}}})
\end{align*}
\begin{table*}
\centering
    \infer{\imple}{\jud{}{\app{(\pAbs{X}{\abs{x^A}{\abs{f^{\timpl{A}{X}}}{\app{f}{x}}}})}{r}} {\tfor{X}{(\timpl{(\timpl{A}{X})}{X}})}}
    {
      \infer{\req}{\jud{}{\pAbs{X}{\abs{x^A}{\abs{f^{\timpl{A}{X}}}{\app{f}{x}}}}} {\timpl{A}{\tfor{X}{(\timpl{(\timpl{A}{X})}{X}})}}}
      {\jud{}{\pAbs{X}{\abs{x^A}{\abs{f^{\timpl{A}{X}}}{\app{f}{x}}}}} {\tfor{X}{(\timpl{A}{\timpl{(\timpl{A}{X})}{X}})}}}
      \qquad
      \begin{array}{c}
	\\
	\jud{}{r}{A}
      \end{array}
    }
  \caption{Type derivation of example~\ref{ex:ex2}.}
  \label{tab:ex2}
\end{table*}
\end{example}

\begin{example}
Rule (\ref{spcommei}) is also a consequence of isomorphism~\eqref{iso:pcommi}. 
Consider the term
$$\app{\pApp{(\abs{x^{\tfor{X}{(\timpl{X}{X})}}}{x})}A}{\pAbs{X}{\abs{x^X}{x}}}$$
Let ${B}=\tfor X{(\timpl XX)}$. Since ${B}\Rightarrow{B}\equiv\forall Y.({B}\Rightarrow (Y\Rightarrow Y))$ (renaming the variable for readability), then
\[{\vdash{\app{\pApp{(\abs{x^{{B}}}{x})}A}{\pAbs{X}{\abs{x^X}{x}}}}:{\timpl{A}{A}}}\]
The reduction goes as follows:
\begin{align*}
  &\app{\pApp{(\abs{x^{\tfor{X}{(\timpl{X}{X})}}}{x})}A}{\pAbs{X}{\abs{x^X}{x}}} \\
  &\rightleftarrows
  \app{(\abs{x^{\tfor{X}{(\timpl{X}{X})}}}{\pApp{x}A})}{\pAbs{X}{\abs{x^X}{x}}} \\
  &\hookrightarrow_{\beta_\lambda}  \pApp{(\pAbs{X}{\abs{x^X}{x}})}A  \hookrightarrow_{\beta_\Lambda}  \abs{x^A}{x}
\end{align*}
\end{example}

\begin{example}
  Rules (\ref{spdistii}) and (\ref{spdistie}) are both consequences of the same
  isomorphism: \eqref{iso:pdist}. Consider the term
  \[
    \proj{\tfor{X}{(\timpl{X}{X})}}{\pAbs{X}{\pair{\abs{x^X}{x}}{r}}}
  \]
  where $\vdash r:A$. Since $\tfor X{(\tconj{(\timpl XX)}A)}\equiv\tconj{(\tfor
    X{(\timpl XX)})}{\tfor XA}$, we can derive
  $${\jud{}{\proj{\tfor{X}{(\timpl{X}{X})}}{\pAbs{X}{\pair{\abs{x^X}{x}}{r}}}}
    {\tfor{X}{(\timpl{X}{X})}}}$$
  A possible reduction is:
  \begin{align*}
    &\proj{\tfor{X}{(\timpl{X}{X})}}{\pAbs{X}{\pair{\abs{x^X}{x}}{r}}} \\ &\rightleftarrows \proja{\tfor{X}{(\timpl{X}{X})}}{\pair{\pAbs{X}{\abs{x^X}{x}}}{\pAbs{X}{r}}}\\
                                                                      &\hookrightarrow_\pi \pAbs{X}{\abs{x^X}{x}}
  \end{align*}
\end{example}

\begin{example}
Rule (\ref{spdistei}) is also a consequence of isomorphism \eqref{iso:pdist}.
  Consider
$$\pApp{\pair{\pAbs{X}{\abs{x^X}{\abs{y^A}{r}}}}{~\pAbs{X}{\abs{x^X}{\abs{z^B}{s}}}}}C$$
where $\vdash r:D$ and $\vdash s:E$.
    It has type $(C\Rightarrow A\Rightarrow
      D)\wedge(C\Rightarrow B\Rightarrow E)$, and
reduces as follows:
\begin{align*}
  &\pApp{\pair{\pAbs{X}{\abs{x^X}{\abs{y^A}{r}}}}{~\pAbs{X}{\abs{x^X}{\abs{z^B}{s}}}}}C \\
  &\rightleftarrows\pair{\pApp{(\abs{x^X}{\abs{y^A}{r}})}C}{\allowbreak\pApp{(\abs{x^X}{\abs{z^B}{s}})}C}\\
  &\hookrightarrow_{\beta_\Lambda}\pair{\abs{x^C}{\abs{y^A}{r}}}{~\abs{x^C}{\abs{z^B}{s}}}
\end{align*}
\end{example}

\begin{example}
Rule (\ref{spdistee}), too, is a consequence of isomorphism
\eqref{iso:pdist}. Consider the term
\[
  \pApp{(\proj{\tfor{X}{(\timpl{X}{X})}}{\pAbs{X}{\pair{\abs{x^X}{x}}{r}}})}A
\]
with type $A\Rightarrow A$, which reduces as follows:
\begin{align*}
  &\pApp{(\proj{\tfor{X}{(\timpl{X}{X})}}{\pAbs{X}{\pair{\abs{x^X}{x}}{r}}})}A \\ &\rightleftarrows \proj{\timpl{A}{A}}{\pApp{(\pAbs{X}{\pair{\abs{x^X}{x}}{r}})}{A}}  \\
  &\hookrightarrow_{\beta_\Lambda} \proja{\timpl{A}{A}}{\pair{\abs{x^A}{x}}{\substa{X}{A}{r}}} \\
  &\hookrightarrow_\pi \abs{x^A}{x}
\end{align*}
\end{example}
%
\section{Subject reduction}\label{sec:SR}
In this section we prove the preservation of typing through reduction. First we
need to characterise the equivalences between types, for example, if $\forall
X.A\equiv B\wedge C$, then $B\equiv\forall X.B'$ and $C\equiv\forall X.C'$, with
$A\equiv B'\wedge C'$ (Lemma~\ref{psi-equiv-tfor-tconj}). Due to the amount of
isomorphisms, this kind of lemmas are not trivial. To prove these relations, we
first define the multiset of prime factors of a type (Definition~\ref{def:PF}).
That is, the multiset of types that are not equivalent to a conjunction, such
that the conjunction of all its elements is equivalent to a certain type. This
technique has already been used in System I~\cite{DiazcaroDowekFSCD19}, however,
it has been used with simple types with only one basic type $\tau$. In PSI,
instead, we have an infinite amount of variables acting as basic types, hence
the proof becomes more complex.

We write $\vec X$ for $X_1,\dots, X_n$ and $\forall\vec X.A$ for $\forall
X_1.\dots\forall X_n.A$, for some $n$ (where, in the second case, if $n=0$,
$\forall\vec X.A=A$).
In addition, we write $[A_1,\dots,A_n]$ or $[A_i]_{i=1}^n$ for the multiset
containing the elements $A_1$ to $A_n$.


\begin{definition}[Prime factors]\label{def:PF}
  \begin{align*}
    \pf{ X } &= [X] \\
    \pf{ \timpl{A}{B} } &= [\tfor{\vec{X}_i}{(\timpl{(\tconj{A}{B_i})}{Y_i})}]_{i=1}^n\\
             &\qquad\qquad \text{ where } \pf{B} = [\tfor{\vec{X}_i}{(\timpl{B_i}{Y_i})}]_{i=1}^n \\
    \pf{ \tconj{A}{B} } &= \pf{A} \uplus \pf{B} \\
    \pf{ \tfor{X}{A} } &= [\tfor{X}{\tfor{\vec{Y}_i}{(\timpl{A_i}{Z_i})}}]_{i=1}^n\\
             &\qquad\qquad \text{ where } \pf{A} = [\tfor{\vec{Y}_i}{(\timpl{A_i}{Z_i})}]_{i=1}^n 
  \end{align*}  
\end{definition}

Lemma~\ref{lem:correctnessOne} and~Corollary~\ref{cor:correctness} state the correctness
of Definition~\ref{def:PF}. We write $\conj{[A_i]_{i=1}^n}$ for $\bigwedge_{i=1}^n A_i$.

\begin{lemma}\label{lem:correctnessOne}
  For all $A$, there exist $\vec X,n,B_1,\ldots,B_n,Y_1,\ldots,Y_n$ such that $PF(A)=[\tfor{\vec X_i}{(\timpl{B_i}{Y_i})}]_{i=1}^n$.
\end{lemma}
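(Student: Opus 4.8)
The plan is to argue by structural induction on the type $A$, following the four clauses of Definition~\ref{def:PF}. The content of the claim is that every element of $\pf{A}$ has the uniform shape $\tfor{\vec X_i}{(\timpl{B_i}{Y_i})}$: a (possibly empty) block of universal quantifiers wrapping an arrow whose codomain is a type variable $Y_i$. Each recursive clause of the definition is written exactly so as to produce elements of this shape out of elements of this shape, so the substance of the proof is simply to check that the three type constructors transport the invariant, using the induction hypothesis on the immediate subterms.

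First I would dispatch the base case $A=X$, where $\pf{X}=[X]$; this is the degenerate instance of the claimed form, obtained by taking $n=1$, the block $\vec X_1$ empty, and $Y_1=X$, reading the bare variable as the domain-free case in which $\timpl{B_1}{Y_1}$ collapses to $Y_1$. For the inductive step, the clause for $A=\timpl{C}{D}$ supplies, by the hypothesis applied to $D$, a factorisation $\pf{D}=[\tfor{\vec X_i}{(\timpl{D_i}{Y_i})}]_{i=1}^n$, and the definition returns $[\tfor{\vec X_i}{(\timpl{(\tconj{C}{D_i})}{Y_i})}]_{i=1}^n$, still a quantifier block over an arrow into the variable $Y_i$ (now with $B_i=\tconj{C}{D_i}$). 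For $A=\tconj{C}{D}$ we have $\pf{\tconj CD}=\pf{C}\uplus\pf{D}$, and since by hypothesis both operands are multisets of elements in the required form, so is their union. For $A=\tfor{X}{C}$ the hypothesis gives $\pf{C}=[\tfor{\vec Y_i}{(\timpl{C_i}{Z_i})}]_{i=1}^n$, and the definition prepends $X$ to each block, yielding $[\tfor{X}{\tfor{\vec Y_i}{(\timpl{C_i}{Z_i})}}]_{i=1}^n$, again of the required form with enlarged block $X,\vec Y_i$ and unchanged codomain variable $Z_i$.

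The delicate point, and what I expect to be the main obstacle, is reconciling the variable base case with the uniform arrow shape, together with the faintly self-referential appearance of Definition~\ref{def:PF}: the clauses for $\timpl{}{}$ and $\tfor{}{}$ already pattern-match the recursive value of $\pf{}$ against the very form this lemma asserts. The right way to read the statement is precisely as the claim that this form is an invariant of the recursion—so that the definition is in fact well posed—with the bare-variable factor understood as the domain-free degenerate case. Once this reading is fixed, the induction closes without further calculation, because no clause ever emits a factor equivalent to a conjunction, nor one whose codomain, after stripping the quantifier block, is anything other than a type variable.
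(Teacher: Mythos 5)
Your proof is correct and takes the same route as the paper, which dispatches this lemma with a one-line ``straightforward induction on the structure of $A$''; your case analysis over the four clauses of Definition~\ref{def:PF} is exactly that induction, written out. Your explicit handling of the base case --- reading a bare variable as the degenerate, domain-free instance of $\tfor{\vec X_i}{(\timpl{B_i}{Y_i})}$ with empty quantifier block --- makes precise a subtlety the paper leaves implicit (a variable is not syntactically an arrow), and it is the reading required for the pattern-matching in the $\timpl{}{}$ and $\tfor{}{}$ clauses of the definition, and for the lemma's later uses, to be well posed.
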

\begin{proof}
  Straightforward induction on the structure of $A$. 
\end{proof}

\begin{corollary}\label{cor:correctness}
  For all $A$, $A \equiv \conj{\pf{A}}$.
\end{corollary}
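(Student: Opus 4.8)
The plan is to prove Corollary~\ref{cor:correctness} by structural induction on $A$, following the recursive structure of Definition~\ref{def:PF}. By Lemma~\ref{lem:correctnessOne} we already know that $\pf{A}$ is a well-formed multiset of the required shape, so the task reduces to showing that conjoining all its prime factors yields a type isomorphic to $A$. The four cases of the definition mirror the four type constructors, so the induction is guided directly by the defining clauses.

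Let me sketch each case. For a type variable $X$, we have $\pf{X}=[X]$ and $\conj{[X]}=X$, so the equivalence holds trivially. For a conjunction $\tconj{A}{B}$, the definition gives $\pf{\tconj{A}{B}}=\pf{A}\uplus\pf{B}$; using the induction hypotheses $A\equiv\conj{\pf{A}}$ and $B\equiv\conj{\pf{B}}$, together with the associativity and commutativity isomorphisms \eqref{iso:comm} and \eqref{iso:asso} to rearrange the big conjunction, we obtain $\tconj{A}{B}\equiv\conj{\pf{A}}\land\conj{\pf{B}}\equiv\conj{\pf{A}\uplus\pf{B}}$. The implication and universal cases are where the distributivity isomorphisms do the real work. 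For $\timpl{A}{B}$, writing $\pf{B}=[\tfor{\vec X_i}{(\timpl{B_i}{Y_i})}]_{i=1}^n$, the induction hypothesis gives $B\equiv\bigwedge_i\tfor{\vec X_i}{(\timpl{B_i}{Y_i})}$; I would then push the $A\Rightarrow(-)$ across the conjunction using \eqref{iso:dist}, across the quantifiers using \eqref{iso:pcommi} (which requires the bound variables $\vec X_i$ to be chosen fresh for $A$, legitimate up to $\alpha$-equivalence \eqref{iso:alpha}), and finally merge the two nested implications $\timpl{A}{(\timpl{B_i}{Y_i})}$ into $\timpl{(\tconj{A}{B_i})}{Y_i}$ via the currying isomorphism \eqref{iso:curry}, matching exactly the formula in the defining clause. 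The $\tfor{X}{A}$ case is analogous but uses the polymorphic distributivity \eqref{iso:pdist} to distribute $\forall X$ over the conjunction $\conj{\pf{A}}$, after which each factor $\tfor{\vec Y_i}{(\timpl{A_i}{Z_i})}$ is prefixed by $X$ to give $\tfor{X}{\tfor{\vec Y_i}{(\timpl{A_i}{Z_i})}}$ as required.

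The main obstacle, I expect, is bookkeeping rather than conceptual. The isomorphisms in Table~\ref{tab:isos} are stated as binary congruences, whereas here they must be applied simultaneously across an $n$-fold conjunction and under a list of quantifiers $\vec X_i$ of varying length; so the honest argument requires auxiliary facts such as ``$\forall X$ distributes over an arbitrary finite conjunction'' and ``$A\Rightarrow(-)$ distributes over an arbitrary finite conjunction,'' each itself proved by a short induction on $n$ from the binary version plus associativity/commutativity. The freshness side-conditions are the other delicate point: applying \eqref{iso:pcommi} to commute $\forall\vec X_i$ past $A\Rightarrow$ is only sound when no variable of $\vec X_i$ occurs free in $A$, which is why the definition is stated up to the choice of bound names and why \eqref{iso:alpha} must be invoked silently to rename the $\vec X_i$ apart from $\ftva{A}$. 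Since $\equiv$ is a congruence, all these rearrangements may be performed under the surrounding constructors, so once the two distributivity lemmas and the freshness conventions are in place the induction closes routinely.
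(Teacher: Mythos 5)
Your proposal is correct and follows essentially the same route as the paper: structural induction on $A$ guided by the clauses of Definition~\ref{def:PF}, using Lemma~\ref{lem:correctnessOne} to expose the shape of $\pf{B}$, then isomorphisms \eqref{iso:comm}--\eqref{iso:curry} for the conjunction and implication cases and \eqref{iso:pcommi}, \eqref{iso:pdist} for the quantifier manipulations. The only difference is one of detail, not substance: the paper compresses the chain of isomorphisms in the $\timpl{A}{B}$ and $\tfor{X}{B}$ cases into single $\equiv$ steps, whereas you make explicit the $n$-fold generalisations of the binary isomorphisms and the $\alpha$-renaming needed for the freshness side-condition of \eqref{iso:pcommi} --- both of which are genuinely required and left implicit in the paper.
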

\begin{proof}
  By induction on the structure of A.

  \begin{itemize}
    \item Let $A = X$. Then $\pf{X} = [X]$, and \conj{[X]} $= X$.

    \item Let $A = \timpl{B}{C}$.
      By Lemma~\ref{lem:correctnessOne}, $\pf{C} =
      [\tfor{\vec{X}_i}{(\timpl{C_i}{Y_i})}]_{i=1}^n$.
      Hence, by definition, $\pf{A} =
      [\tfor{\vec{X}_i}{(\timpl{\tconj{B}{C_i}}{Y_i})}]_{i=1}^n$.
      By the induction hypothesis, $C \equiv \conj{\pf C} = \bigwedge_{i=1}^n \tfor{\vec{X}_i}{(\timpl{C_i}{Y_i})}$.
      Therefore,
      \begin{align*}
        A = \timpl{B}{C} 
	&\equiv \timpl{B}{\bigwedge_{i=1}^n \tfor{\vec{X}_i}{(\timpl{C_i}{Y_i})}}\\
	&\equiv \bigwedge_{i=1}^n \tfor{\vec{X}_i}{\allowbreak(\timpl{(\tconj{B}{C_i})}{Y_i})}\\
	&= \conj{[\tfor{\vec{X}_i}{(\timpl{\tconj{B}{C_i}}{Y_i})}]_{i=1}^n} = \conj{\pf{A}}
      \end{align*}
    \item Let $A = \tconj{B}{C}$.
      By the induction hypothesis, $B\equiv\conj{\pf B}$ and $C\equiv\conj{\pf
        C}$. Hence,
	\begin{align*}
        A 
	&=\tconj BC\equiv\conj{\pf B}\wedge\conj{\pf C}\\
	&\equiv \conj{\pf B\uplus\pf C} = \conj{\pf A}
	\end{align*}

    \item Let $A = \tfor{X}{B}$.
      By Lemma~\ref{lem:correctnessOne}, $\pf B=[\tfor{\vec
        Y_i}{(\timpl{B_i}{Z_i})}]_{i=1}^n$.
      Hence, by definition, $\pf{A} =
      [\tfor{X}{\tfor{\vec{Y}_i}{(\timpl{B_i}{Z_i})}}]_{i=1}^n$.
      By the induction hypothesis, $B \equiv \conj{\pf B} = \bigwedge_{i=1}^n
      \tfor{\vec{Y}_i}{(\timpl{B_i}{Z_i})}$.
      Therefore,
      \begin{align*}
        A &= \tfor{X}{B} \equiv \tfor{X}{\bigwedge_{i=1}^n  \tfor{\vec{Y}_i}{(\timpl{B_i}{Z_i})}}\\
	&\equiv \bigwedge_{i=1}^n \tfor{X}{\tfor{\vec{Y}_i}{\allowbreak(\timpl{B_i}{Z_i})}}\\
	&= \conj{[\tfor{X}{\tfor{\vec{Y}}{(\timpl{B_i}{Z})}}]_{i=1}^n} = \conj{\pf{A}}
	\end{align*}
	\qedhere
  \end{itemize}
\end{proof}

Lemma~\ref{lem:stability} states the stability of prime factors through
equivalence and Lemma~\ref{pf-sim} states a kind of reciprocal result. 
\begin{definition}
  $[A_1,\dots,A_n] \sim [B_1,\dots,B_m]$ if $n = m$ and $A_i \equiv B_{p(i)}$, for $i = 1,\dots,n$ and $p$ a permutation on $\{1,\dots.n\}$.
\end{definition}

\begin{lemma}\label{lem:stability}
  For all $A,B$ such that $A \equiv B$, we have $\pf{A} \sim \pf{B}$.
\end{lemma}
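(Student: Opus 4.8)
The plan is to induct on the derivation of $A \equiv B$, viewing $\equiv$ as the least congruence on types generated by the isomorphisms \eqref{iso:comm}--\eqref{iso:pdist}. As a preliminary I would record that $\sim$ is an equivalence relation on multisets and is compatible with multiset union: reflexivity, symmetry and transitivity of $\sim$ follow from the corresponding properties of $\equiv$ on the individual factors (composing the witnessing permutations in the transitive case), and $\sim$ is preserved by $\uplus$ against a common multiset. With these facts in hand, the reflexivity, symmetry and transitivity steps of the derivation are discharged immediately, so all the content lies in the base cases (one per isomorphism) and the congruence cases (one per constructor position).

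For the base cases I would simply unfold Definition~\ref{def:PF} on both sides. For \eqref{iso:comm} and \eqref{iso:asso} the two multisets are literally equal, since $\pf{\tconj{A}{B}} = \pf A \uplus \pf B$ and $\uplus$ is commutative and associative. For \eqref{iso:dist} and \eqref{iso:pdist} the computations of $\pf{\timpl{A}{(\tconj{B}{C})}}$ and $\pf{\tfor X{(\tconj AB)}}$ distribute over $\uplus$ in exactly the way the right-hand sides prescribe, again yielding equal multisets. The isomorphism \eqref{iso:pcommi} is the only slightly delicate equality: writing $\pf B = [\tfor{\vec Y_i}{(\timpl{B_i}{Z_i})}]_{i=1}^n$, unfolding shows that both $\pf{\tfor X{(\timpl AB)}}$ and $\pf{\timpl A{\tfor XB}}$ equal $[\tfor X{\tfor{\vec Y_i}{(\timpl{(\tconj A{B_i})}{Z_i})}}]_{i=1}^n$, the hypothesis $X \notin \ftva A$ guaranteeing that conjoining $A$ underneath the prefix $\forall X$ causes no capture. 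Finally \eqref{iso:curry} produces factors differing only by a reassociation of the conjunction in the domain, $\tconj{(\tconj AB)}{C_i}$ versus $\tconj A{(\tconj B{C_i})}$, so the two multisets are related by $\sim$ (rather than equality) through a single use of \eqref{iso:asso} under congruence.

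For the congruence cases the premise is $A \equiv A'$ (or $B \equiv B'$) together with the induction hypothesis $\pf A \sim \pf{A'}$, and I must propagate $\sim$ through each constructor. The conjunction cases are immediate from compatibility of $\sim$ with $\uplus$. For $\tfor X{(-)}$ the map $\pf A \to \pf{\tfor XA}$ prepends $\forall X$ to every factor, and since $C \equiv C'$ implies $\tfor XC \equiv \tfor X{C'}$ by congruence, $\sim$ is preserved. For the domain of an implication, $\timpl{(-)}{B}$, the factors of $\pf{\timpl AB}$ are obtained from the fixed multiset $\pf B$ by conjoining $A$ into each innermost domain, so $A \equiv A'$ gives $\tconj A{B_i} \equiv \tconj{A'}{B_i}$ and hence factor-wise equivalence. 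The remaining case, congruence in the codomain of an implication, is the real obstacle: the map $\pf B \to \pf{\timpl AB}$ replaces each factor $\tfor{\vec X}{(\timpl CY)}$ by $\tfor{\vec X}{(\timpl{(\tconj AC)}Y)}$, and I must show this operation transports $\equiv$.

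The key lemma I would isolate for that last case is
\[
  \tfor{\vec X}{(\timpl{(\tconj AC)}{Y})} \equiv \timpl A{\tfor{\vec X}{(\timpl CY)}}
  \qquad \text{whenever } \vec X \cap \ftva A = \emptyset,
\]
which follows by pulling $A$ out past the whole prefix with repeated use of \eqref{iso:pcommi} (after $\alpha$-renaming $\vec X$ away from $\ftva A$) and then uncurrying with \eqref{iso:curry}. Granting this, if $\pf B \sim \pf{B'}$ via a permutation matching $\tfor{\vec X}{(\timpl CY)} \equiv \tfor{\vec{X}'}{(\timpl{C'}{Y'})}$, then congruence of $\equiv$ under $\timpl A{(-)}$ gives $\timpl A{\tfor{\vec X}{(\timpl CY)}} \equiv \timpl A{\tfor{\vec{X}'}{(\timpl{C'}{Y'})}}$, and the key lemma rewrites both sides back into the modified factors, yielding $\pf{\timpl AB} \sim \pf{\timpl A{B'}}$. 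I expect the main difficulties to lie entirely in this codomain case and in the attendant bookkeeping: handling the vectors $\vec X$ of bound variables, performing the $\alpha$-renamings cleanly so that \eqref{iso:pcommi} applies, and tracking the witnessing permutations as they are composed through the inductive steps.
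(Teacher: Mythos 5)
Your proposal is correct and follows essentially the same route as the paper's (very terse) proof: verify $\pf{A}\sim\pf{B}$ for each of the six generating isomorphisms, propagate $\sim$ through the type constructors by induction, and close under reflexivity, symmetry and transitivity by induction on the derivation of $A\equiv B$. The only difference is one of detail: you make explicit the key lemma $\tfor{\vec X}{(\timpl{(\tconj{A}{C})}{Y})}\equiv\timpl{A}{\tfor{\vec X}{(\timpl{C}{Y})}}$ (for $\vec X$ fresh in $A$) needed for the congruence case in the codomain of an implication, a step the paper's one-line proof leaves implicit.
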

\begin{proof}
  First we check that \pf{\tconj{A}{B}} $\sim$ \pf{\tconj{B}{A}} and similar for the other five isomorphisms. Then we prove by structural induction that if $A$ and $B$ are equivalent in one step, then \pf{A} $\sim$ \pf{B}. We conclude by an induction on the length of the derivation of the equivalence $A \equiv B$.
\end{proof}

\begin{lemma}
  \label{pf-sim}
  For all $R,S$ multisets such that $R \sim S$, we have $\conj{R} \equiv \conj{S}$.
  \qed
\end{lemma}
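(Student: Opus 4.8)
The plan is to reduce the statement to two facts about the conjunction operator: that $\equiv$ is a congruence, so equivalent conjuncts can be replaced pointwise, and that $\bigwedge$ is invariant, up to $\equiv$, under permutation of its arguments. Throughout I use that $\equiv$ is an equivalence relation and that, being presented as a congruence in Table~\ref{tab:isos}, it satisfies: if $A \equiv A'$ and $B \equiv B'$ then $\tconj{A}{B} \equiv \tconj{A'}{B'}$.

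First I would prove permutation invariance: for every sequence $A_1,\dots,A_n$ and every permutation $q$ of $\{1,\dots,n\}$, $\bigwedge_{i=1}^n A_i \equiv \bigwedge_{i=1}^n A_{q(i)}$. Since the adjacent transpositions generate the symmetric group, transitivity of $\equiv$ reduces this to the swap of two neighbouring conjuncts $A_k$ and $A_{k+1}$. For such a swap I would use associativity \eqref{iso:asso} to re-bracket the conjunction so that the sub-term $\tconj{A_k}{A_{k+1}}$ is exposed, apply commutativity \eqref{iso:comm} to rewrite it as $\tconj{A_{k+1}}{A_k}$, and then re-associate; congruence guarantees that this local rewrite lifts to the whole conjunction regardless of the surrounding bracketing.

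Next, given $R = [A_i]_{i=1}^n \sim S = [B_i]_{i=1}^n$ with $A_i \equiv B_{p(i)}$ for a permutation $p$, I would first replace each conjunct $A_i$ by the equivalent $B_{p(i)}$. By a straightforward induction on $n$ (the base case immediate by reflexivity) using congruence at each step, this gives $\conj R = \bigwedge_{i=1}^n A_i \equiv \bigwedge_{i=1}^n B_{p(i)}$. Writing $C_i = B_{p(i)}$ and applying the permutation-invariance step with $q = p^{-1}$ yields $\bigwedge_{i=1}^n C_i \equiv \bigwedge_{i=1}^n C_{p^{-1}(i)} = \bigwedge_{i=1}^n B_i = \conj S$, and transitivity concludes $\conj R \equiv \conj S$.

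The only delicate point is the adjacent-swap argument: exposing a neighbouring pair deep inside a long conjunction via re-association, and checking that the congruence rules close the resulting local equivalence under arbitrary contexts. This is exactly the coherence of $\wedge$ as a commutative semigroup up to $\equiv$; it is conceptually routine but demands care in the bracketing bookkeeping, which is why I would isolate the single-transposition case and appeal to the generation of the symmetric group by adjacent transpositions rather than manipulate a general permutation directly.
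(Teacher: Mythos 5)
Your proof is correct: pointwise replacement of conjuncts via congruence, followed by permutation invariance of $\bigwedge$ obtained from adjacent transpositions realised by re-bracketing with \eqref{iso:asso} and swapping with \eqref{iso:comm}, is exactly the standard argument. The paper offers no proof at all for this lemma (it is stated as immediate), and what the authors evidently take for granted is precisely what you have spelled out, so your proposal matches the intended argument rather than diverging from it.
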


\begin{lemma}\label{lem:eqprimes}
  For all $\vec X,\vec Z,A,B,Y,W$ such that $\forall\vec X.(A\Rightarrow Y)\equiv\forall\vec Z.(B\Rightarrow W)$, we have $\vec X=\vec Z$, $A\equiv B$, and $Y=W$.
\end{lemma}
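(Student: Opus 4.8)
The plan is to show that the three data recorded in a prime factor of the shape $\tfor{\vec X}{(\timpl AY)}$ with $Y$ a variable---the quantifier prefix $\vec X$, the domain $A$ up to $\equiv$, and the head variable $Y$---are invariants of its $\equiv$-class. The first thing to observe is that the prime-factor machinery alone is \emph{not} enough here: since the domain is never decomposed and a head variable contributes a single factor, one computes $\pf{\tfor{\vec X}{(\timpl AY)}}=[\tfor{\vec X}{(\timpl AY)}]$, so feeding the hypothesis into Lemma~\ref{lem:stability} only returns $\tfor{\vec X}{(\timpl AY)}\sim\tfor{\vec Z}{(\timpl BW)}$, i.e.\ the hypothesis itself. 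Hence I would instead analyse directly how the six isomorphisms of Table~\ref{tab:isos} can act on a type of this restricted shape, proceeding by induction on the total number of quantifiers $|\vec X|+|\vec Z|$.

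In the base case $|\vec X|=|\vec Z|=0$ we have $\timpl AY\equiv\timpl BW$ with $Y,W$ variables. To force $Y=W$ I would use a probe: substitute a fresh $\tconj{\sigma_1}{\sigma_2}$ for $Y$. The left-hand side becomes $\timpl A{(\tconj{\sigma_1}{\sigma_2})}$, which by \eqref{iso:dist} is equivalent to a conjunction and therefore has a non-singleton $\pf{}$; if $W\neq Y$ the right-hand side keeps the variable head $W$ and stays prime, contradicting Lemma~\ref{lem:stability}, so $Y=W$. With the heads identified, $\timpl AY\equiv\timpl BY$ reduces to the cancellation property of $\Rightarrow$ over a fixed variable head: currying \eqref{iso:curry} brings both sides to fully uncurried form, and the invariance of the domain's prime factorisation (via $A\equiv\conj{\pf A}$ from Corollary~\ref{cor:correctness} and Lemma~\ref{lem:stability}) forces $\pf A\sim\pf B$, whence $A\equiv B$ by Lemma~\ref{pf-sim}. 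This base case is essentially the simply-typed System~I statement with the single base type replaced by an arbitrary variable.

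For the inductive step the delicate point is that the \emph{outermost} constructor of a prime factor is not preserved by $\equiv$: the isomorphism \eqref{iso:pcommi} rewrites $\tfor{X}{(\timpl AY)}$ into $\timpl A{\tfor{X}{Y}}$ when $X\notin\ftva A$, so quantifiers can slide across the arrow. I would handle this by bringing each side to the canonical prefix form in which every quantifier has been pulled as far out as \eqref{iso:pcommi} permits; since \eqref{iso:pdist} needs a top-level conjunction, which $\timpl AY$ never has, the leading block of quantifiers cannot be duplicated or reordered, and the exclusion of the swap isomorphism \eqref{iso:swap} keeps their sequence rigid. Matching the outermost quantifier of the two canonical forms then gives the first variable of $\vec X$ equal to that of $\vec Z$, and peeling it off reduces $|\vec X|+|\vec Z|$ by two, so the induction hypothesis yields $\vec X=\vec Z$, $A\equiv B$, and $Y=W$ (the case in which a head variable is itself bound is subsumed here, since after the prefixes are matched its binder is located at the same position on both sides).

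The main obstacle, and the reason this is harder than in System~I, is precisely the interaction of the quantifier prefix with \eqref{iso:pcommi}: because a quantifier may legitimately migrate across the arrow, one must prove that the \emph{canonical} (maximally pulled-out) form is well defined and that it is reached regardless of the derivation chosen---effectively a small strong-normalisation/confluence argument for the $\forall$-commutation, run against the background of the many free variables now playing the role of base types. The domain cancellation in the base case is the second sensitive ingredient, as it is where the cancellativity of the underlying multiplicative structure is really used; I would isolate it as a separate lemma so that the polymorphic induction above only has to manage the prefix and the head.
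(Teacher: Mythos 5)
Your route is very different from the paper's, and two of its three pillars have genuine holes. The paper's entire proof is a one-line direct inspection of the isomorphisms, the point being that for a type of the shape $\tfor{\vec X}{(\timpl{A}{Y})}$ with $Y$ a variable, rules \eqref{iso:dist} and \eqref{iso:pdist} require a conjunction that is not there (the head is a variable, and the body under the prefix is an arrow), rules \eqref{iso:curry} and \eqref{iso:pcommi} take you out of that shape, and congruence steps can only act inside $A$; it involves no induction on quantifier count, no canonical forms, and no probes. Within your plan, the one solid new ingredient is the probe: substituting a fresh $\tconj{\sigma_1}{\sigma_2}$ for $Y$ and comparing cardinalities of prime factorisations via Lemma~\ref{lem:stability} does force $Y=W$ (using, implicitly, that $\equiv$ is closed under capture-avoiding substitution). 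Your opening remark that the prime-factor machinery is of no direct help is also correct, and consistent with the paper, which uses this lemma to support the prime-factor lemmas rather than the other way around.

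The gaps are these. First, the base-case cancellation $\timpl{A}{Y}\equiv\timpl{B}{Y}\Rightarrow A\equiv B$ is the actual core of the lemma, and your argument for it is circular: as you note yourself, $\pf{\timpl{A}{Y}}=[\timpl{A}{Y}]$, so Lemma~\ref{lem:stability} merely returns the hypothesis, and there is no ``invariance of the domain's prime factorisation'' to appeal to, because the domain is not a sub-multiset of the type's prime factors. Fully uncurrying via Corollary~\ref{cor:correctness} and \eqref{iso:curry} does not repair this: matching $A_1\Rightarrow\cdots\Rightarrow A_k\Rightarrow Y$ against $B_1\Rightarrow\cdots\Rightarrow B_l\Rightarrow Y$ componentwise requires exactly the injectivity of $\Rightarrow$ modulo $\equiv$ that is being proven. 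Second, the inductive step rests on two facts you never establish: that your ``maximally pulled-out'' prefix form is well defined and invariant under $\equiv$ (the confluence claim you acknowledge but defer), and, even granting that, that peeling a matched outermost quantifier is sound, i.e.\ $\tfor{X}{C}\equiv\tfor{X}{D}\Rightarrow C\equiv D$. This last point is yet another cancellation statement of the same nature as the lemma itself --- a chain between $\tfor{X}{C}$ and $\tfor{X}{D}$ may push $X$ inward through \eqref{iso:pcommi} and need not stay under a top-level $\forall X$ --- and it is nowhere addressed. As it stands, your proposal reduces the lemma to two sub-lemmas, each roughly as hard as the lemma, and proves neither.
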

\begin{proof}
  By simple inspection of the isomorphisms. 
\end{proof}

\begin{lemma}
  \label{psi-equiv-timpl-tconj}
  For all $A,B,C_1,C_2$ such that $\timpl{A}{B} \equiv \tconj{C_1}{C_2}$, there exist $B_1,B_2$ such that $C_1 \equiv \timpl{A}{B_1}$, $C_2\equiv \timpl{A}{B_2}$ and $B \equiv \tconj{B_1}{B_2}$.
\end{lemma}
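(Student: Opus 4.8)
The plan is to reduce the statement to the prime-factor machinery. By Corollary~\ref{cor:correctness} together with Lemma~\ref{lem:stability}, the hypothesis $\timpl{A}{B}\equiv\tconj{C_1}{C_2}$ yields $\pf{\timpl{A}{B}}\sim\pf{\tconj{C_1}{C_2}}=\pf{C_1}\uplus\pf{C_2}$. Writing $\pf{B}=[\tfor{\vec X_i}{(\timpl{B_i}{Y_i})}]_{i=1}^n$ (which is legitimate by Lemma~\ref{lem:correctnessOne}), the definition of prime factors gives $\pf{\timpl{A}{B}}=[\tfor{\vec X_i}{(\timpl{(\tconj{A}{B_i})}{Y_i})}]_{i=1}^n$. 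So I would be comparing this explicit list of $n$ prime factors, each with $A$ conjoined into its domain, against the disjoint union $\pf{C_1}\uplus\pf{C_2}$.

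Next I would use the bijection underlying $\sim$ to partition the index set. Since $\sim$ matches each factor of $\pf{\timpl{A}{B}}$ with exactly one element of $\pf{C_1}\uplus\pf{C_2}$ up to $\equiv$, I split $\{1,\dots,n\}$ into $I_1\uplus I_2$ according to whether the matched element lies in $\pf{C_1}$ or in $\pf{C_2}$; by construction $|I_1|=|\pf{C_1}|$ and the restricted matching gives $[\tfor{\vec X_i}{(\timpl{(\tconj{A}{B_i})}{Y_i})}]_{i\in I_1}\sim\pf{C_1}$, and symmetrically for $I_2$ and $C_2$. I then define $B_1:=\conj{[\tfor{\vec X_i}{(\timpl{B_i}{Y_i})}]_{i\in I_1}}$ and $B_2:=\conj{[\tfor{\vec X_i}{(\timpl{B_i}{Y_i})}]_{i\in I_2}}$, i.e.\ the conjunctions of the corresponding factors of $\pf{B}$.

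Finally I would verify the three required equivalences. For $C_1\equiv\timpl{A}{B_1}$, Corollary~\ref{cor:correctness} and Lemma~\ref{pf-sim} give $C_1\equiv\conj{\pf{C_1}}\equiv\conj{[\tfor{\vec X_i}{(\timpl{(\tconj{A}{B_i})}{Y_i})}]_{i\in I_1}}$, so it remains to check $\bigwedge_{i\in I_1}\tfor{\vec X_i}{(\timpl{(\tconj{A}{B_i})}{Y_i})}\equiv\timpl{A}{B_1}$. This is the same sort of computation as in the proof of Corollary~\ref{cor:correctness}: apply currying \eqref{iso:curry} inside each factor to rewrite $\timpl{(\tconj{A}{B_i})}{Y_i}$ as $\timpl{A}{(\timpl{B_i}{Y_i})}$, pull $A$ out of each quantifier block by \eqref{iso:pcommi}, and finally pull $A$ out of the conjunction by \eqref{iso:dist}. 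The relation $B\equiv\tconj{B_1}{B_2}$ is then immediate from $B\equiv\conj{\pf{B}}$ and $I_1\uplus I_2=\{1,\dots,n\}$, since the conjunction of all factors of $\pf{B}$ splits, modulo associativity and commutativity, into $\conj{[\dots]_{i\in I_1}}\wedge\conj{[\dots]_{i\in I_2}}=\tconj{B_1}{B_2}$. The main obstacle is the side condition on \eqref{iso:pcommi}: pulling $A$ out of $\tfor{\vec X_i}{\cdots}$ requires $\vec X_i$ to be absent from $\ftva{A}$, which I would secure through the standard $\alpha$-renaming convention that the bound variables produced by $\pf{\cdot}$ are fresh for $A$; I should also treat the degenerate shapes (empty $\vec X_i$, or a factor that is a bare variable) uniformly, cases in which the commutation step is simply vacuous.
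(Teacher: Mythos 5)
Your proposal is correct and follows essentially the same route as the paper's proof: both pass to prime factors via Lemma~\ref{lem:stability} and Lemma~\ref{lem:correctnessOne}, partition the index set of $\pf{\timpl{A}{B}}$ according to whether each factor is matched into $\pf{C_1}$ or $\pf{C_2}$, define $B_1,B_2$ as the conjunctions of the corresponding factors of $\pf{B}$, and conclude with Corollary~\ref{cor:correctness} together with the \eqref{iso:curry}/\eqref{iso:pcommi}/\eqref{iso:dist} computation. The only cosmetic differences are that the paper matches factors componentwise via Lemma~\ref{lem:eqprimes} where you apply Lemma~\ref{pf-sim} to the restricted multisets (both are adequate), and that you spell out the final isomorphism computation and its $\alpha$-renaming side condition, which the paper leaves implicit.
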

\begin{proof} By Lemma \ref{lem:stability}, $\pf{\timpl AB}\sim
  \pf{\tconj{C_1}{C_2}}=\pf{C_1}\uplus \pf{C_2}$.

  By Lemma~\ref{lem:correctnessOne}, let $\pf B = [\tfor{\vec
    X_i}{(\timpl{D_i}{Z_i})}]_{i=1}^n$, $ \pf{C_1} = [\tfor{\vec
    Y_j}{(\timpl{E_j}{Z'_j})}]_{j=1}^k$, and $\pf{C_2} = [\tfor{\vec
    Y_j}{(\timpl{E_j}{Z'_j})}]_{j=k+1}^m$. Hence, $[{\tfor{\vec X_i}{(\timpl{(\tconj
        A{D_i})}{Z_i})}}]_{i=1}^n\sim[\tfor{\vec Y_j}{(\timpl{E_j}{Z'_j})}]_{j=1}^m$.
  So, by definition of $\sim$, $n=m$ and for $i=1,\dots,n$ and a permutation $p$,
  we have \( {\tfor{\vec X_i}{(\timpl{(\tconj A{D_i})}{Z_i})}} \equiv \tfor{\vec
    Y_{p(i)}}{(\timpl{E_{p(i)}}{Z'_{p(i)}})} \).
  Hence, by Lemma~\ref{lem:eqprimes}, we have $\vec X_i = \vec Y_{p(i)}$, $\tconj
  A{D_i}\equiv E_{p(i)}$, and $Z_i=Z'_{p(i)}$.

  Thus, there exists $I$ such that $I\cup \bar I=\{1,\dots,n\}$, such that
  \begin{align*} \pf{C_1} &=[{\tfor{\vec Y_{p(i)}}{(E_{p(i)}\Rightarrow
                            Z'_{p(i)})}}]_{i\in I}\\ \pf{C_2} &=[{\tfor{\vec Y_{p(i)}}{(E_{p(i)}\Rightarrow
                                                                Z'_{p(i)})}}]_{i\in\bar I}
  \end{align*} Therefore, by Corollary~\ref{cor:correctness},
  \[ C_1\equiv\bigwedge_{i\in I}\tfor{\vec Y_{p(i)}}{(E_{p(i)}\Rightarrow
      Z'_{p_i})} \equiv \bigwedge_{i\in I}{{\tfor{\vec X_i}{((\tconj
          A{D_i})\Rightarrow Z_i)}}}
  \] and
  \[
    C\equiv\bigwedge_{i\in\bar I}{{\tfor{\vec X_i}{((\tconj A{D_i})\Rightarrow Z_i)}}}
  \]
  Let $B_1={\bigwedge_{i\in I}{\tfor{\vec X_i}{(D_i\Rightarrow Z_i)}}}$ and $B_2
  ={\bigwedge_{i\in\bar I}{\tfor{\vec X_i}{(D_i\Rightarrow Z_i)}}}$. So,
  $C_1\equiv \timpl A{B_1}$ and $C_2\equiv\timpl A{B_2}$. In addition, also by
  Corollary~\ref{cor:correctness}, we have $B\equiv\bigwedge_{i=1}^n{\tfor{\vec
      X_i}{(D_i\Rightarrow Z_i)}}\equiv B_1\wedge B_2$.
\end{proof}

The proofs of the following two lemmas are similar to the proof of
Lemma~\ref{psi-equiv-timpl-tconj}. Full details are given in the technical appendix published at~\cite{arXiv}. 
\begin{lemma}
  \label{psi-equiv-tfor-tconj}
  For all $X,A,B,C$ such that $\tfor{X}{A} \equiv \tconj{B}{C}$, there exist
  $B',C'$ such that $B \equiv \tfor{X}{B'}$, $C \equiv \tfor{X}{C'}$ and $A \equiv
  \tconj{B'}{C'}$. \qed
\end{lemma}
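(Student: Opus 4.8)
The plan is to mirror the proof of Lemma~\ref{psi-equiv-timpl-tconj}, replacing the leading implication by a leading universal quantifier and using isomorphism~\eqref{iso:pdist} in place of~\eqref{iso:dist}. First I would apply Lemma~\ref{lem:stability} to the hypothesis $\forall X.A \equiv \tconj{B}{C}$ to obtain $\pf{\forall X.A} \sim \pf{B} \uplus \pf{C}$. Using Lemma~\ref{lem:correctnessOne} I would write $\pf{A} = [\forall \vec Y_i.(\timpl{A_i}{Z_i})]_{i=1}^n$, so that by Definition~\ref{def:PF} we have $\pf{\forall X.A} = [\forall X.\forall \vec Y_i.(\timpl{A_i}{Z_i})]_{i=1}^n$; likewise I would express $\pf{B}$ and $\pf{C}$ as the first $k$ and the remaining $m-k$ entries of a single enumerated multiset $[\forall \vec W_j.(\timpl{E_j}{V_j})]_{j=1}^m$.

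From the definition of $\sim$ I would extract $n = m$ together with a permutation $p$ matching each factor $\forall X.\forall \vec Y_i.(\timpl{A_i}{Z_i})$ with $\forall \vec W_{p(i)}.(\timpl{E_{p(i)}}{V_{p(i)}})$. Applying Lemma~\ref{lem:eqprimes} to each of these equivalences, reading the left-hand quantifier block as the sequence $X, \vec Y_i$, yields $\vec W_{p(i)} = X, \vec Y_i$, together with $A_i \equiv E_{p(i)}$ and $Z_i = V_{p(i)}$. The crucial consequence is that every prime factor of $B$ and of $C$ is headed by the very same $\forall X$; this is precisely what permits factoring the quantifier out at the end.

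I would then split $\{1,\dots,n\}$ into the set $I$ of indices whose factor lands in $\pf{B}$ and its complement $\bar I$ for $\pf{C}$, so that $\pf{B} = [\forall X.\forall \vec Y_i.(\timpl{A_i}{Z_i})]_{i \in I}$ and $\pf{C} = [\forall X.\forall \vec Y_i.(\timpl{A_i}{Z_i})]_{i \in \bar I}$. Setting $B' = \bigwedge_{i \in I}\forall \vec Y_i.(\timpl{A_i}{Z_i})$ and $C' = \bigwedge_{i \in \bar I}\forall \vec Y_i.(\timpl{A_i}{Z_i})$, Corollary~\ref{cor:correctness} gives $B \equiv \bigwedge_{i \in I}\forall X.\forall \vec Y_i.(\timpl{A_i}{Z_i})$, and repeated use of isomorphism~\eqref{iso:pdist} pulls the common $\forall X$ outside the conjunction to yield $B \equiv \forall X.B'$; symmetrically $C \equiv \forall X.C'$. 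Finally $A \equiv \conj{\pf{A}} = \bigwedge_{i=1}^n \forall \vec Y_i.(\timpl{A_i}{Z_i}) \equiv \tconj{B'}{C'}$ by recombining the two index sets, which establishes the three required equivalences.

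The main obstacle I anticipate is bookkeeping rather than conceptual: I must be careful that Lemma~\ref{lem:eqprimes} is applied with the correct grouping of the leading quantifiers, so that the head $\forall X$ is genuinely shared by every factor of $B$ and $C$—without this the factoring step via~\eqref{iso:pdist} would fail—and I must track the permutation $p$ consistently when reindexing the two sub-multisets. Since the structure is otherwise identical to Lemma~\ref{psi-equiv-timpl-tconj}, the remaining manipulations are routine.
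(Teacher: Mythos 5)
Your proposal is correct and follows essentially the same route as the paper's own proof: stability of prime factors under $\equiv$ (Lemma~\ref{lem:stability}), the decomposition via Lemma~\ref{lem:correctnessOne} and the permutation from $\sim$, the application of Lemma~\ref{lem:eqprimes} with the quantifier block read as $X,\vec Y_i$, the index split into $I$ and $\bar I$, and the factoring of the shared $\forall X$ via isomorphism~\eqref{iso:pdist} and Corollary~\ref{cor:correctness}. The bookkeeping concern you flag—ensuring every prime factor of $B$ and $C$ is headed by the same $\forall X$—is exactly the step the paper handles the same way, so nothing is missing.
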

\begin{lemma}
  \label{psi-equiv-tfor-timpl}
  For all $X,A,B,C$ such that $\tfor{X}{A} \equiv \timpl{B}{C}$, there exists
  $C'$ such that $C \equiv \tfor{X}{C'}$ and $A \equiv \timpl{B}{C'}$. \qed
\end{lemma}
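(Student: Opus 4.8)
The plan is to mirror the proof of Lemma~\ref{psi-equiv-timpl-tconj}, comparing the prime factors of the two sides. By Lemma~\ref{lem:stability}, from $\tfor{X}{A}\equiv\timpl{B}{C}$ we obtain $\pf{\tfor{X}{A}}\sim\pf{\timpl{B}{C}}$. Using Lemma~\ref{lem:correctnessOne} I would write $\pf{A}=[\tfor{\vec Y_i}{(\timpl{A_i}{Z_i})}]_{i=1}^n$ and $\pf{C}=[\tfor{\vec W_j}{(\timpl{C_j}{V_j})}]_{j=1}^m$. Then by Definition~\ref{def:PF}, $\pf{\tfor{X}{A}}=[\tfor{X}{\tfor{\vec Y_i}{(\timpl{A_i}{Z_i})}}]_{i=1}^n$ and $\pf{\timpl{B}{C}}=[\tfor{\vec W_j}{(\timpl{(\tconj{B}{C_j})}{V_j})}]_{j=1}^m$, where (as always for prime factors) the bound variables $\vec W_j$ are taken disjoint from $\ftva{B}$, since otherwise the antecedent $\tconj{B}{C_j}$ would capture them.

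Next I would match the two multisets. The definition of $\sim$ gives $n=m$ and a permutation $p$ such that $\tfor{X}{\tfor{\vec Y_i}{(\timpl{A_i}{Z_i})}}\equiv\tfor{\vec W_{p(i)}}{(\timpl{(\tconj{B}{C_{p(i)}})}{V_{p(i)}})}$ for each $i$. Lemma~\ref{lem:eqprimes} then forces the quantifier prefixes to coincide, $X,\vec Y_i=\vec W_{p(i)}$, together with $A_i\equiv\tconj{B}{C_{p(i)}}$ and $Z_i=V_{p(i)}$. The crucial consequence is that every $\vec W_{p(i)}$, hence every $\vec W_j$, begins with $X$; writing $\vec W_j=X,\vec W'_j$ we read off $\vec W'_{p(i)}=\vec Y_i$.

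From this I would reconstruct the two claimed equivalences. Setting $C'=\bigwedge_{j=1}^m\tfor{\vec W'_j}{(\timpl{C_j}{V_j})}$, Corollary~\ref{cor:correctness} gives $C\equiv\conj{\pf{C}}=\bigwedge_{j=1}^m\tfor{X}{\tfor{\vec W'_j}{(\timpl{C_j}{V_j})}}$, and pulling $X$ out of the conjunction by isomorphism~\eqref{iso:pdist} yields $C\equiv\tfor{X}{C'}$. For the other equivalence, Corollary~\ref{cor:correctness} gives $A\equiv\bigwedge_{i=1}^n\tfor{\vec Y_i}{(\timpl{A_i}{Z_i})}\equiv\bigwedge_{i=1}^n\tfor{\vec W'_{p(i)}}{(\timpl{(\tconj{B}{C_{p(i)}})}{V_{p(i)}})}$, which after reindexing along $p$ equals $\bigwedge_{j=1}^m\tfor{\vec W'_j}{(\timpl{(\tconj{B}{C_j})}{V_j})}$. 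Starting instead from $\timpl{B}{C'}$ and applying in turn distributivity~\eqref{iso:dist}, the quantifier/arrow commutation~\eqref{iso:pcommi}, and currying~\eqref{iso:curry} conjunct by conjunct produces the same multiset, so $\timpl{B}{C'}\equiv A$.

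The main obstacle is the use of isomorphism~\eqref{iso:pcommi}, which requires the bound variables $\vec W'_j$ (and the outer $X$) to be absent from $\ftva{B}$. This freshness is exactly what is built into the prime-factor computation for an implication, and it is also forced semantically: $X$ and $\vec W'_j$ are bound in $\tfor{X}{A}\equiv\timpl{B}{C}$, and since the six isomorphisms preserve free type variables, none of them can occur in $\ftva{B}$. A secondary point deserving care is that Lemma~\ref{lem:eqprimes} compares quantifier prefixes as ordered sequences rather than up to permutation; this is what lets me conclude that $X$ is precisely the outermost quantifier of each factor of $C$, and it is legitimate here only because the swap isomorphism~\eqref{iso:swap} is deliberately excluded from PSI.
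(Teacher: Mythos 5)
Your proposal is correct and follows essentially the same route as the paper's proof: stability of prime factors (Lemma~\ref{lem:stability}), the decomposition from Lemma~\ref{lem:correctnessOne}, matching factors via $\sim$ and Lemma~\ref{lem:eqprimes} to force each $\vec W_{p(i)}$ to begin with $X$, and then reconstructing $C'$ with Corollary~\ref{cor:correctness}. The only difference is that you make explicit the $\alpha$-freshness of the bound prefixes with respect to $\ftva{B}$ and the ordered-prefix reading of Lemma~\ref{lem:eqprimes}, both of which the paper leaves implicit.
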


Since the calculus is presented in Church-style, excluding rule $\req$, PSI is
syntax directed. Therefore, the generation lemma (Lemma~\ref{psi-gen}) is
straightforward, and we have the following unicity lemma (whose proof is given
in the technical appendix published at~\cite{arXiv}): 
\begin{lemma}
  [Unicity modulo]\label{lem:unicity}
  For all $\Gamma,r,A,B$ such that $\Gamma\vdash r:A$ and $\Gamma\vdash r:B$, we have $A\equiv B$.\qed
\end{lemma}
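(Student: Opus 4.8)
The plan is to prove Lemma~\ref{lem:unicity} (Unicity modulo) by induction on
the structure of the term $r$, exploiting the fact that the calculus is
presented in Church-style, so that---apart from the non-syntax-directed rule
$\req$---each term shape determines a unique typing rule. The crucial
observation is that any derivation of $\Gamma\vdash r:A$ can be decomposed into
the last ``structural'' step (determined by the head constructor of $r$) possibly
wrapped in, and with premises wrapped in, uses of $\req$. I would first make
this precise via a generation lemma (Lemma~\ref{psi-gen}, which the excerpt
already invokes): for each term constructor it gives the shape of the type up to
$\equiv$. For instance, if $\Gamma\vdash \abs{x^C}{r'}:A$ then $A\equiv\timpl
C{B}$ with $\Gamma,x:C\vdash r':B$; if $\Gamma\vdash \pair{r'}{s'}:A$ then
$A\equiv\tconj{B_1}{B_2}$ with $\Gamma\vdash r':B_1$ and $\Gamma\vdash s':B_2$;
and similarly for the other constructors.

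With generation in hand, I would proceed by cases on the head constructor of
$r$, and in each case extract from the two hypotheses $\Gamma\vdash r:A$ and
$\Gamma\vdash r:B$ the corresponding type information, apply the induction
hypothesis to the immediate subterms to get equivalences between their types,
and then conclude $A\equiv B$ by congruence of $\equiv$ together with its
transitivity and symmetry. The cases \ax, \impli, \conji, \fori\ and
\fore\ are direct: the variable case is immediate since the Church-style type
annotation fixes the type up to $\req$; the others combine an induction
hypothesis on subterms with the functoriality of the type constructors under
$\equiv$. The elimination cases $\imple$ and $\conje$ are the interesting ones.

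The main obstacle will be the two elimination rules, where the generation lemma
yields type information about a subterm that is only fixed \emph{up to} an
isomorphism, and one must recover the needed equivalence on the
\emph{conclusion}. Concretely, for application $rs$, generation gives
$\Gamma\vdash r:\timpl{C_1}{A}$ and $\Gamma\vdash r:\timpl{C_2}{B}$ (with
$\Gamma\vdash s:C_1$, $\Gamma\vdash s:C_2$); the induction hypothesis on $r$
gives $\timpl{C_1}{A}\equiv\timpl{C_2}{B}$ and on $s$ gives $C_1\equiv C_2$, and
I must deduce $A\equiv B$. This requires an injectivity property of $\Rightarrow$
under $\equiv$, which does \emph{not} hold at the level of the raw syntactic
rule but does hold after quotienting by the isomorphisms, and is exactly the
kind of fact supplied by the prime-factor machinery
(Corollary~\ref{cor:correctness}, Lemmas~\ref{lem:stability}
and~\ref{lem:eqprimes}, together with the decomposition
Lemmas~\ref{psi-equiv-timpl-tconj}--\ref{psi-equiv-tfor-timpl}). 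The projection
case $\proja{A}{r'}$ is even more delicate, since the eliminated type $A$ appears
\emph{annotated} in the term: here the two derivations both eliminate the same
annotation $A$, so unicity is essentially built into the rule $\conje$ and
reduces to showing that the conjunct not named by the annotation is irrelevant
to the conclusion. Thus the heart of the argument is assembling the
already-proved type-equivalence decomposition lemmas to turn equivalences
between compound types into equivalences between their components, and the rest
is routine congruence reasoning.
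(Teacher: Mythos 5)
Your overall strategy is the paper's own proof in different packaging: the paper argues by induction on the two typing derivations, peeling off applications of \req\ and then invoking syntax-directedness of the remaining rules, while you run a structural induction on $r$ and let Lemma~\ref{psi-gen} do the peeling. This is legitimate (Lemma~\ref{psi-gen} is proved independently of unicity, so there is no circularity), and you are in fact more explicit than the paper about where the real work lies: in the \imple\ case one genuinely needs that $\timpl{C_1}{A}\equiv\timpl{C_2}{B}$ together with $C_1\equiv C_2$ entails $A\equiv B$, a cancellation property modulo $\equiv$ that the paper's ``all the remaining cases are syntax directed'' passes over in silence. One caveat: the lemmas you cite for this are not quite the right tools. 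Lemmas~\ref{psi-equiv-timpl-tconj}--\ref{psi-equiv-tfor-timpl} decompose equivalences between \emph{different} constructors; what the cancellation actually needs is Lemma~\ref{lem:stability} plus Lemma~\ref{lem:eqprimes} (giving $\tconj{C_1}{A_i}\equiv\tconj{C_2}{B_{p(i)}}$ prime factor by prime factor), then a cancellation of the multiset $\pf{C_1}$ against $\pf{C_2}$, and finally Lemma~\ref{pf-sim} with Corollary~\ref{cor:correctness}; the multiset-cancellation step appears nowhere in the paper and would have to be argued.

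The genuine gap is that you classify \fore\ as ``direct''. It is not: it is an elimination case with exactly the same difficulty as \imple. For $r=\tapp{t}{C}$, Lemma~\ref{psi-gen} gives $A\equiv\substa{X}{C}{D_1}$ with \hastypei{t}{\tfor{X}{D_1}} from the first derivation, and $B\equiv\substa{X}{C}{D_2}$ with \hastypei{t}{\tfor{X}{D_2}} from the second; the induction hypothesis on $t$ yields only $\tfor{X}{D_1}\equiv\tfor{X}{D_2}$. To reach $A\equiv B$ you still need (a) injectivity of $\forall$ modulo $\equiv$, i.e.\ that $\tfor{X}{D_1}\equiv\tfor{X}{D_2}$ implies $D_1\equiv D_2$ (again a prime-factor argument via Lemma~\ref{lem:eqprimes}, in the style of the appendix proof of Lemma~\ref{psi-equiv-tfor-tconj}), and (b) that $\equiv$ is closed under type substitution, so that $D_1\equiv D_2$ gives $\substa{X}{C}{D_1}\equiv\substa{X}{C}{D_2}$ --- a lemma the paper itself uses silently (e.g.\ in the \req\ case of the proof of Lemma~\ref{substitution}) but never states, and which needs care with the side condition of isomorphism~\eqref{iso:pcommi}. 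As written, your proof stalls at this case. Conversely, \conje\ is the one elimination that is immediate rather than ``more delicate'': item 5 of Lemma~\ref{psi-gen}, applied to the two typings of $\proja{C}{r'}$, gives $C\equiv A$ and $C\equiv B$, so $A\equiv B$ follows by symmetry and transitivity alone, without even invoking the induction hypothesis.
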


\begin{lemma}[Generation]
\label{psi-gen}
For all $\Gamma,x,r,s,X,A,B$:
\begin{enumerate}
	\item If $\hastypei{x}{A}$ and $\hastypei{x}{B}$, then $A \equiv B$.
	\item If $\hastypei{\abs{x^A}{r}}{B}$, then there exists $C$ such that $\hastypeg{\hastype{x}{A}}{r}{C}$ and $B \equiv A \Rightarrow C$.
	\item If $\hastypei{\app{ r }{ s }}{A}$, then there exists $C$ such that $\hastypei{r}{C \Rightarrow A}$ and $\hastypei{s}{C}$.
	\item If $\hastypei{\pair{ r }{ s }}{A}$, then there exist $C,D$ such that $A \equiv C \wedge D$, $\hastypei{r}{C}$ and $\hastypei{s}{D}$.
	\item If $\hastypei{\proja{ A }{ r }}{B}$, then $A \equiv B$ and there exists $C$ such that $\hastypei{r}{B} \wedge C$.
	\item If $\hastypei{\pAbs{ X }{ r }}{A}$, then there exists $C$ such that $A \equiv \forall X . C$, $\hastypei{r}{C}$ and $X \not \in \ftva{\Gamma}$.
	\item If $\hastypei{\pApp{ r }{ A }}{B}$, then there exists $C$ such that $\substa{X}{A}{C} \equiv B$ and $\hastypei{r}{\forall X . C}$.
    \qed
\end{enumerate}
\end{lemma}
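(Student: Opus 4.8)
The plan is to prove all seven statements simultaneously by induction on the typing derivation. The key structural fact, already observed above, is that PSI is syntax directed once rule \req{} is set aside: for each term constructor there is exactly one typing rule whose conclusion matches its head symbol. Consequently, in a derivation of \hastypei{t}{A} the last rule applied is either \req{} or the unique structural rule determined by the head constructor of $t$, and this dichotomy drives the induction for statements 2--7.

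In each structural case I would simply read off the premises of the relevant rule. For statement 2, if the last rule is \impli{}, then by its very form $B$ is literally $\timpl{A}{C}$ with \hastypeg{\hastype{x}{A}}{r}{C} as its premise, so the witness $C$ is at hand and $B \equiv \timpl{A}{C}$ holds by reflexivity; statements 3--7 are analogous, the required type equivalence being in each case an actual equality coming from the conclusion of \imple{}, \conji{}, \conje{}, \fori{}, or \fore{}. The \req{} case is handled uniformly: its premise is a strictly shorter derivation \hastypei{t}{A'} of the \emph{same} term with $A'\equiv A$, so the induction hypothesis supplies all the existential witnesses together with an equivalence mentioning $A'$, and one transitivity step with $A'\equiv A$ closes the goal. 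Occasionally a witness must be retyped, which uses that $\equiv$ is a congruence together with a further application of \req{}: e.g.\ in statement 3 the hypothesis \hastypei{r}{\timpl{C}{A'}} and $A'\equiv A$ give $\timpl CA'\equiv\timpl CA$ and hence \hastypei{r}{\timpl{C}{A}}, and similarly in statement 5 one passes from \hastypei{r}{\tconj{A'}{C}} to \hastypei{r}{\tconj{A}{C}}.

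Statement 1 is treated separately since it compares two derivations. Here I would first show, by induction on a single derivation of \hastypei{x}{A}, that $A$ is equivalent to the type declared for $x$ in $\Gamma$: the \ax{} case gives equality and the \req{} case adds one transitivity step. Applying this to both hypotheses yields $A\equiv T$ and $B\equiv T$ for the declared type $T$, whence $A\equiv B$ by symmetry and transitivity.

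The only point requiring care -- and the reason the argument must be an induction on the derivation rather than a mere inversion of the last rule -- is that \req{} may be applied repeatedly and interleaved with the structural rule, so no single ``canonical'' last rule can be read off. Transitivity and congruence of $\equiv$ absorb all of this bookkeeping, and in particular no appeal to the characterisation lemmas \ref{psi-equiv-timpl-tconj}--\ref{psi-equiv-tfor-timpl} is needed. This is exactly why the generation lemma remains straightforward despite the presence of \req{}.
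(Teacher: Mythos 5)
Your proposal is correct and is precisely the argument the paper has in mind: the paper omits the proof, justifying it by the observation that PSI minus rule \req{} is syntax directed, and your induction on the derivation---reading off the unique structural rule and absorbing interleaved \req{} steps by transitivity and congruence of $\equiv$ (plus the occasional retyping of a witness via \req{})---is exactly the standard expansion of that remark, matching the style of the paper's appendix proof of the unicity lemma. No gaps; in particular you are right that the characterisation lemmas for $\equiv$ are not needed here.
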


The detailed proofs of Lemma~\ref{substitution} (Substitution) and
Theorem~\ref{subjectreduction} (Subject Reduction) are given in the technical appendix at~\cite{arXiv}.

\begin{lemma}[Substitution]
  \label{substitution}~
  \begin{enumerate}
  \item For all $\Gamma,x,r,s,A,B$ such that $\hastypeg{x:B}{r}{A}$ and $\hastypei{s}{B}$, we have $\hastypei{\substa{x}{s}{r}}{A}$.
  \item For all $\Gamma,r,X,A,B$ such that $\Gamma\vdash r:A$, we have $[X:=B]\Gamma\vdash [X:=B]r:[X:=B]A$.
    \qed
  \end{enumerate}
\end{lemma}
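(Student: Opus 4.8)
The plan is to prove both statements by induction on the typing derivation rather than on the structure of $r$, because the rule $\req$ is not syntax-directed and may be the last rule applied to a term of any shape. This lets me treat $\req$ as a single uniform case: if the derivation ends with $\Gamma \vdash r : A'$ together with $A' \equiv A$, I apply the induction hypothesis to the premise and then re-apply $\req$. For part~2 this final step needs the extra fact that type equivalence is stable under type substitution, i.e.~that $A' \equiv A$ implies $[X:=B]A' \equiv [X:=B]A$. I would establish this by a short induction on the derivation of $A' \equiv A$, checking that each of the six isomorphisms of Table~\ref{tab:isos} is closed under substitution, paying attention to the side condition $X \notin \ftva{A}$ of \eqref{iso:pcommi} (which is preserved, up to $\alpha$-renaming the bound variable away from the substituted type).

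For part~1, the base case is the axiom. If $r = y$ with $y \neq x$, then $\substa{x}{s}{y} = y$ is typed by the same axiom in $\Gamma$. If $r = x$, the axiom forces $A = B$ and $\substa{x}{s}{x} = s$, so the goal is exactly the hypothesis $\hastypei{s}{B}$. The rules $\imple$, $\conji$, $\conje$ and $\fore$ follow immediately from the induction hypotheses, since term substitution commutes with the corresponding term constructors and leaves types untouched. The binder cases $\impli$ and $\fori$ use $\alpha$-conversion to choose the bound variable distinct from $x$ and from the variables of $s$, so that substitution commutes with the binder; for $\fori$, where $r = \pAbs{Y}{r'}$ with $Y \notin \ftva{\Gamma}$, picking $Y$ fresh also guarantees $Y \notin \ftva{s}$ (no type-variable capture), and the side condition $Y \notin \ftva{\Gamma}$ needed to reapply $\fori$ is unchanged because the context is still $\Gamma$.

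For part~2, the non-trivial cases are the quantifier rules. In $\fori$ we have $r = \pAbs{Y}{r'}$ with $Y \notin \ftva{\Gamma}$; after $\alpha$-renaming we may assume $Y \neq X$ and $Y \notin \ftva{B}$, so that $[X:=B](\pAbs{Y}{r'}) = \pAbs{Y}{[X:=B]r'}$, the side condition becomes $Y \notin \ftva{[X:=B]\Gamma}$ (which holds by freshness), and the induction hypothesis applies. In $\fore$ we have $r = \pApp{r'}{C}$ of type $[Y:=C]D$ with $\hastypei{r'}{\tfor{Y}{D}}$; the result follows from the induction hypothesis together with the commutation of type substitutions $[X:=B]([Y:=C]D) = [Y:=[X:=B]C]([X:=B]D)$, valid under the same freshness assumption on $Y$. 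The remaining rules again reduce directly to the induction hypotheses, using that $[X:=B]$ distributes over $\Rightarrow$, $\wedge$, and term formation.

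The main obstacle is not any individual case but the consistent bookkeeping of freshness across the quantifier rules: in part~1 one must ensure the bound type variable of $\fori$ does not occur free in the substituted term $s$, and in part~2 one must keep the bound variable $Y$ apart from both $X$ and $\ftva{B}$ so that the substitutions commute. Both are resolved by working modulo $\alpha$-equivalence, as adopted in Section~\ref{sec:SystemI}. The only genuinely new ingredient beyond the routine System~F bookkeeping is the stability of $\equiv$ under type substitution required for the $\req$ case of part~2, and verifying it amounts to inspecting the finitely many isomorphisms of Table~\ref{tab:isos}.
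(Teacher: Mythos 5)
Your proposal is correct, and it is worth comparing the routes. For part~2 your argument coincides with the paper's, which is likewise an induction on the typing derivation with the quantifier cases handled by exactly the freshness and substitution-commutation facts you cite. For part~1, however, you diverge: you induct on the typing derivation and dispose of \req{} as a single uniform case, whereas the paper proceeds by structural induction on $r$ and invokes the Generation lemma (Lemma~\ref{psi-gen}) in every case to recover the premises that the non-syntax-directed rule \req{} hides, re-applying \req{} at the end of each case to convert back to the original type $A$. The two organisations are essentially interchangeable---Generation is precisely the lemma that packages the derivation-induction treatment of \req{}---so the paper's version buys re-use of an already-established lemma at the cost of carrying types only up to $\equiv$ through each case, while yours is more self-contained and keeps exact types in the syntax-directed cases. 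One point where you are more careful than the paper: in the \req{} case of part~2 the paper silently uses that $A \equiv C$ implies $\substa{X}{B}{A} \equiv \substa{X}{B}{C}$, whereas you correctly flag this stability of $\equiv$ under type substitution as a fact requiring its own short induction on the derivation of the equivalence, with attention to the side condition of isomorphism~\eqref{iso:pcommi}. Your freshness bookkeeping in the \fori{} and \fore{} cases is likewise sound and, if anything, more explicit than the paper's, which never mentions capture at all.
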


\begin{theorem}[Subject reduction]
\label{subjectreduction}
For all $\Gamma,r,s,A$ such that $\hastypei{r}{A}$ and $r \hookrightarrow s$ or $r \rightleftarrows s$, we have $\hastypei{s}{A}$.
\qed
\end{theorem}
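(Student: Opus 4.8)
The plan is to prove Subject Reduction by a case analysis on the rules defining $\re$ and $\rightleftarrows$, relying heavily on the generation lemma (Lemma~\ref{psi-gen}), the unicity lemma (Lemma~\ref{lem:unicity}), the substitution lemma (Lemma~\ref{substitution}), and the structural characterisations of type equivalences (Lemmas~\ref{psi-equiv-timpl-tconj}, \ref{psi-equiv-tfor-tconj}, and~\ref{psi-equiv-tfor-timpl}). Since typing is syntax-directed modulo the rule $\req$, the broad strategy for each rule $r \,\sigma\, s$ (where $\sigma$ is $\re$ or $\rightleftarrows$) is: assume $\hastypei{r}{A}$, invert the typing derivation of $r$ using generation to extract the types of its subterms, then reassemble a derivation of $s$ with the \emph{same} type $A$, inserting applications of $\req$ where the rearrangement changes the syntactic type but preserves it up to $\equiv$.

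First I would dispatch the three genuine reduction rules of $\re$. For $\beta_\lambda$, from $\hastypei{(\abs{x^A}{r})s}{B}$ generation gives $\hastypei{\abs{x^A}{r}}{C \Rightarrow B}$ and $\hastypei{s}{C}$; a second application of generation on the abstraction yields $\hastypeg{x:A}{r}{D}$ with $C \Rightarrow B \equiv A \Rightarrow D$. The side condition $\hastypei{s}{A}$ of the rule, together with unicity, forces $C \equiv A$, and I would use Lemma~\ref{lem:eqprimes} (or direct inspection of the isomorphisms on $\Rightarrow$) to conclude $D \equiv B$; then the Substitution Lemma gives $\hastypei{\substa{x}{s}{r}}{D}$, and $\req$ lifts this to $B$. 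The rule $\beta_\Lambda$ is analogous, using the type-substitution clause of the Substitution Lemma and generation clause~7. The rule $\pi$ is the easiest: generation on $\pi_A(\pair{r}{s})$ gives the type directly, and the side condition $\hastypei{r}{A}$ closes it.

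Next come the ten equivalence rules $\rightleftarrows$. Each is handled by the same invert-and-reassemble pattern, and the congruence (context-closure) rules at the bottom of Table~\ref{tab:SemOp} are immediate by the induction hypothesis. The interesting cases are the ones whose well-typedness genuinely depends on a structural equivalence lemma: for instance, to type the right-hand side of \eqref{si:eq:dist-abs} or of the projection-distribution rules, I would first establish that the type of the left-hand side, being equivalent to a conjunction or to a quantified type, splits in the way dictated by Lemmas~\ref{psi-equiv-timpl-tconj}, \ref{psi-equiv-tfor-tconj}, and~\ref{psi-equiv-tfor-timpl}, and then build the pair (or the abstraction under $\Lambda$, etc.) componentwise, closing with $\req$. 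The rules carrying side conditions involving free type variables (\eqref{spcommii}, \eqref{spcommei}) require checking that the variable condition $X \notin \ftva{A}$ in the rule matches exactly the side condition of $\fori$, so that isomorphism~\eqref{iso:pcommi} applies; the rules \eqref{spdistee} and \eqref{spcommei} additionally carry a typing hypothesis in their statement, which I would feed directly into the reconstruction.

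I expect the main obstacle to be the two distributivity-under-projection rules, \eqref{spdistie} and \eqref{spdistee}, and more generally every case where a projection $\pi$ crosses a binder or a pairing. There the subtlety is that the \emph{subscript} of the projection is itself a type that must be shown equivalent to the correct prime component after the rearrangement, so a naive application of generation clause~5 is not enough: I will need to combine it with the splitting lemmas to argue that the projected type on the right is $\equiv$ to the one on the left, and that the residual conjunct still exists to make the $\conje$ rule applicable. Keeping the bookkeeping of which $\equiv$-step is justified by which isomorphism straight across these cases—rather than any single deep idea—is where the real work lies; each individual case is routine once the matching structural lemma is identified.
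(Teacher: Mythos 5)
Your proposal follows essentially the same route as the paper's own proof: a case analysis over every rule of Table~\ref{tab:SemOp} (plus the contextual closures by induction), inverting with the generation lemma (Lemma~\ref{psi-gen}), splitting types with Lemmas~\ref{psi-equiv-timpl-tconj}, \ref{psi-equiv-tfor-tconj} and~\ref{psi-equiv-tfor-timpl}, invoking unicity (Lemma~\ref{lem:unicity}) and the Substitution Lemma (Lemma~\ref{substitution}) for the $\beta$-rules, and reassembling each right-hand side with $\req$. Your handling of the $\beta_\lambda$ cancellation step ($A\Rightarrow B\equiv A\Rightarrow D$ implies $B\equiv D$, via Lemma~\ref{lem:eqprimes}) is if anything slightly more explicit than the paper's, so the proposal stands as correct.
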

%

\section{Strong Normalisation}\label{sec:SN}
In this section we prove the strong normalisation of the relation $\to$, that
is, every reduction sequence fired from a typed term eventually terminates. The
set of typed strongly normalising terms with respect to reduction~$\to$ is
written $\SN$. The size of the longest reduction issued from $t$ is written
$|t|$.

We extend to polymorphism the proof of System I~\cite{DiazcaroDowekFSCD19}. To
prove that every term is in $\SN$, we associate, as usual, a set $\interp A$ of
strongly normalising terms to each type $A$. A term $\vdash r:A$ is said to be
reducible when $r\in\interp A$. We then prove an adequacy theorem stating that
every well typed term is reducible.

The set $\interp{A_1\Rightarrow A_2\Rightarrow \cdots\Rightarrow A_n\Rightarrow
X}$ can be defined either as the set of terms $r$ such that for all
$s\in\interp{A_1}$, $rs\in\interp{A_2\Rightarrow\cdots\Rightarrow A_n\Rightarrow
X}$ or, equivalently, as the set of terms $r$ such that for all $
s_i\in\interp{A_i}$, $rs_1\dots s_n\in\interp X=\SN$. To prove that a term of
the form $\lambda x^A.t$ is reducible, we need to use the so-called CR3
property~\cite{Girard89}, in the first case, and the property that a term whose
all one-step reducts are in $\SN$ is in $\SN$, in the second. In PSI, an
introduction can be equivalent to an elimination e.g.~$\pair{rt}{st}\eq
\pair{r}{s}t$, hence, we cannot define a notion of neutral term and have an
equivalent to the CR3 property. Therefore, we use the second definition, and
since reduction depends on types, the set $\interp A$ is defined as a set of
typed terms.

Before proving the normalisation of PSI, we reformulate the proof of strong
normalisation of System F along these lines.

\subsection{Normalisation of System F}\label{sec:SNinSF}

\begin{definition}[Elimination context] 
  Consider an extension of the language where we introduce an extra symbol
  $\hole A$, called hole of type $A$. We define the set of elimination contexts
  with a hole $\hole A$ as the smallest set such that:
  \begin{itemize}
  \item $\hole A$ is an elimination context of type $A$,
  \item if $\kk{}{B\Rightarrow C}{A}$ is an elimination context of type $B
    \Rightarrow C$ with a hole of type $A$ and $r\in\SN$ is a term of type $B$, then
    $\kk{}{B\Rightarrow C}{A}r$ is an elimination context of type $C$ with a hole of
    type $A$,
  \item and if $\kk{}{\tfor{X}{B}}{A}$ is an elimination context of type
    $\tfor{X}{B}$ with a hole of type $A$, then $\kk{}{\tfor{X}{B}}{A}[C]$ is an
    elimination context of type $\substa{X}{C}{B}$ with a hole of type $A$.
  \end{itemize}
  We write $\ka{}{B}{A}{r}$ for $\subst{\hole{A}}{r}{\kk{}{B}{A}}$, where
  $\hole A$ is the hole of $\kk{}{B}{A}$. In particular, $r$ may be an elimination
  context.
\end{definition}

Notice that the shape of every context \kk{}{B}{A} is $\hole{A} \alpha_1 \dots
\alpha_n$, where each $\alpha_i$ is either a term or a type argument.

\begin{example}
  Let
  \begin{align*}
    \kk{}{{X}}{{X}}&=\hole{X}\\
    \kk{'}{{X}}{{{X}\Rightarrow{X}}}&=\ka{}{X}{X}{\hole{{X}\Rightarrow{X}}x}\\
    \kk{''}{{X}}{{\tfor{X}{\timpl{X}{X}}} }&= \ka{'}{X}{{X}\Rightarrow{X}}{\tapp{\hole{\tfor{X}{\timpl{X}{X}}}}{X}}\\ &=\ka{}{{X}}{{X}}{\tapp{\hole{\tfor{X}{\timpl{X}{X}}}}{X}x)}
  \end{align*}
  
  Then $\ka{''}{{X}}{{\tfor{X}{\timpl{X}{X}}}}{\Lambda X.\lambda y^{X}.y}=(\Lambda X.\lambda y^{X}.y)[X]x$.
\end{example}

\begin{definition}
  [Terms occurring in an elimination context]
  Let $\kk{}{B}{A}$ be an elimination context.
  The multiset of terms occurring in $\kk{}{B}{A}$ is defined as
  \begin{align*}
    \T(\hole A) &= \emptyset\\
    \T(\kk{}{{B\Rightarrow C}}{A}r) &= \T(\kk{}{{B\Rightarrow C}}{A}) \uplus \{r\}\\
    \T(\tapp{\kk{}{{\tfor{X}{B}}}{A}}{C}) &= \T(\kk{}{{\tfor{X}{B}}}{A})
  \end{align*}
  We write $|\kk{}{B}{A}|$ for $\sum_{i=1}^n|r_i|$ where $[r_1,\dots,r_n]=\T(\kk{}{B}{A})$.
\end{definition}

\begin{definition}[Reducibility]
  The set of reducible terms of type $A$ (notation $\interp A$) is defined as
the set of terms $r$ of type $A$ such that for any elimination context
$\kk{}{X}{A}$ where all the terms in $\T(\kk{}{X}{A})$ are in $\SN$, we have
$\ka{}{X}{A}{r}\in\SN$.
\end{definition}

\begin{lemma}
  \label{lem:CR1SF}
  For all $A$, $\interp A\subseteq\SN$.
\end{lemma}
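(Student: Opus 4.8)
The plan is to turn the defining property of $\interp A$ against a single, carefully chosen context. First I would establish an auxiliary claim: \emph{for every type $A$ there exists an elimination context $\kk{}{Y}{A}$, with $Y$ a type variable, all of whose occurring terms lie in $\SN$.} Granting this, for $r\in\interp A$ I instantiate the definition of reducibility with $\kk{}{Y}{A}$ (whose occurring terms are in $\SN$), obtaining $\ka{}{Y}{A}{r}\in\SN$. By the remark following the definition of elimination contexts, $\ka{}{Y}{A}{r}$ has the shape $r\,\alpha_1\cdots\alpha_n$; since every reduction step out of $r$ lifts, through the left-congruence rules for application and type application (Table~\ref{tab:SemOp}), to a reduction step out of $r\,\alpha_1\cdots\alpha_n$, an infinite reduction from $r$ would yield an infinite reduction from $\ka{}{Y}{A}{r}$, contradicting $\ka{}{Y}{A}{r}\in\SN$. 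Hence $r\in\SN$, which is the claim of the lemma.

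To prove the auxiliary claim I would proceed by induction on the structure of $A$, recalling that every System~F type is a (possibly empty) sequence of arrows and universal quantifiers ending in a type variable. If $A=X$, the bare hole $\hole{X}$ already is an elimination context of type $X$ with no occurring terms. If $A=\timpl BC$, the induction hypothesis supplies a context $\kk{}{Y}{C}$ for $C$; picking a variable $x^B$ --- a normal form, hence in $\SN$ --- the context $\hole{A}\,x^B$ has type $C$, and plugging it into the hole of $\kk{}{Y}{C}$ yields a context of the required form for $A$, whose occurring terms are those of $\kk{}{Y}{C}$ together with $x^B$. If $A=\tfor XC$, I would instantiate the hole with the bound variable itself: $\hole{A}[X]$ has type $\substa{X}{X}{C}=C$, so plugging it into the hole of the inductively obtained $\kk{}{Y}{C}$ produces the desired context for $A$, adding no new occurring terms.

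The only genuinely delicate point is the quantifier case: a type application substitutes into the hole's ambient type, and a careless instantiation could either destroy the type-variable target or break the match with the induction hypothesis. Instantiating with the \emph{identity} $[X]$ sidesteps both issues, keeping the ambient type literally equal to $C$ so that $\kk{}{Y}{C}$ applies verbatim. The remaining ingredients --- that variables are strongly normalising, that plugging one elimination context into the hole of another again yields an elimination context with the union of their occurring terms (by the spine characterisation), and that $\SN$ is closed under taking the head of an application/type-application spine --- are routine consequences of the definitions and the congruence rules of Table~\ref{tab:SemOp}.
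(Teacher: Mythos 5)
Your proposal is correct and follows essentially the same route as the paper: the paper's (very terse) proof likewise rests on the existence, for every type $A$, of an elimination context $\kk{}{X}{A}$ whose occurring terms are variables (hence in $\SN$), and then concludes $r\in\SN$ from $\ka{}{X}{A}{r}\in\SN$ exactly as you do. Your structural induction, with the identity instantiation $\hole{A}[X]$ in the quantifier case and the lifting of reductions along the spine, simply makes explicit the details the paper leaves implicit.
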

\begin{proof}
  For all $A$, there exists an elimination context \kk{}{X}{A}, since variables are in $\SN$ and they can have any type. Hence, given that if $r\in\interp A$ then $\ka{}{X}{A}{r} \in \SN$, we have $r \in \SN$.
\end{proof}

\begin{lemma}[Adequacy of variables]
  \label{lem:varSF}
  For all $A$ and $x^A$, we have $x^A\in\interp A$.
\end{lemma}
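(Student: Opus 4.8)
The plan is to peel off the definition of reducibility and reduce the statement to an elementary fact about reductions from a variable-headed term whose arguments are strongly normalising. By definition of $\interp{A}$, it suffices to fix an arbitrary elimination context $\kk{}{X}{A}$ all of whose occurring terms, i.e.\ the elements of $\T(\kk{}{X}{A})$, lie in $\SN$, and to show $\ka{}{X}{A}{x^A}\in\SN$. By the structural observation following the definition of elimination contexts, this term is of the form $x^A\,\alpha_1\cdots\alpha_n$, where each $\alpha_i$ is either a term argument---which, being an element of $\T(\kk{}{X}{A})$, belongs to $\SN$ by hypothesis---or a type argument $[C]$.

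The crux is the observation that, because the head is the variable $x^A$, the term $x^A\,\alpha_1\cdots\alpha_n$ has no redex at its root: the head is neither a $\lambda$-abstraction (so no $\beta_\lambda$ step applies) nor a type abstraction under a type application (so no $\beta_\Lambda$ step applies). Crucially, no reduction performed inside the arguments can ever replace the variable head by an abstraction, so a root redex can never arise along any reduction sequence. Hence every reduction step of $x^A\,\alpha_1\cdots\alpha_n$ occurs strictly inside one of the term arguments $\alpha_i$, and since no head step is ever fired, these internal reductions neither duplicate nor erase any argument; they act independently within each $\alpha_i$.

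It follows that the length of any reduction sequence issued from $x^A\,\alpha_1\cdots\alpha_n$ is bounded above by $\sum_i|\alpha_i|$, summed over the term arguments, which is exactly $|\kk{}{X}{A}|$ and is finite because each term argument is in $\SN$. Therefore $\ka{}{X}{A}{x^A}\in\SN$, and since $\kk{}{X}{A}$ was an arbitrary suitable elimination context, $x^A\in\interp{A}$.

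The one delicate point, and the step I expect to require the most care, is the formal justification that a root redex can never appear---equivalently, that reductions confined to the arguments of a variable-headed term never create a head redex. In a Girard-style development this is precisely what the notion of neutral term together with property CR3 would encapsulate; since that machinery is deliberately avoided here, I would argue it directly, by a short induction on the structure of $\kk{}{X}{A}$ (equivalently, on the number of appended arguments), showing that variable-headed applications enjoy this stability and that the property is preserved when a further term or type argument is appended.
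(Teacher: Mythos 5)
Your proof is correct and follows essentially the same route as the paper's (very terse) argument: the paper likewise observes that $\ka{}{X}{A}{x}$ has the shape $x\,\alpha_1\dots\alpha_n$ with the variable in head position, so no redex can ever be created at the root and the only reductions are those inside the term arguments, which are in $\SN$ by hypothesis. Your additional explicit bound $\sum_i|\alpha_i|=|\kk{}{X}{A}|$ on reduction length and the flagged induction on the context structure are just a more detailed write-up of the same observation.
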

\begin{proof}
  Let $\kk{}{X}{A} = \hole{A} \alpha_1 \dots \alpha_n$, where for all $i$ such
  that $\alpha_i$ is a term, we have $\alpha_i \in \SN$, then for all $x$,
  $\ka{}{X}{A}{x} \in \SN$.
\end{proof}

\begin{lemma}[Adequacy of application]\label{lem:appSF}
  For all $r, s, A, B$ such that $r \in \interp{A \Rightarrow B}$ and $s \in
  \interp{A}$, we have $r s \in \interp{B}$.
\end{lemma}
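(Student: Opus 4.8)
The plan is to unfold the definition of $\interp{B}$ and reduce the goal to a statement about an arbitrary elimination context whose hole has type $B$, and then to reshape that context into one whose hole has type $A \Rightarrow B$, so that the reducibility hypothesis on $r$ applies verbatim. First I would fix an arbitrary elimination context $\kk{}{X}{B}$, all of whose occurring terms (i.e.\ those in $\T(\kk{}{X}{B})$) lie in $\SN$, and note that since $r$ has type $A \Rightarrow B$ and $s$ has type $A$, the term $rs$ indeed has type $B$. The goal is then to show $\ka{}{X}{B}{rs} \in \SN$.

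The key construction is to promote $\kk{}{X}{B}$ to an elimination context whose hole has type $A \Rightarrow B$. Since $s \in \interp{A} \subseteq \SN$ by Lemma~\ref{lem:CR1SF}, the expression $\hole{A \Rightarrow B}\,s$ is, by the application clause of the definition of elimination context, itself an elimination context of type $B$ with a hole of type $A \Rightarrow B$. Because the definition explicitly allows a hole to be filled by another elimination context, I substitute $\hole{A \Rightarrow B}\,s$ into the type-$B$ hole of $\kk{}{X}{B}$, obtaining a new elimination context $\kk{'}{X}{A \Rightarrow B}$ of type $X$ with a hole of type $A \Rightarrow B$. By the defining equations of $\T$, the multiset $\T(\kk{'}{X}{A \Rightarrow B})$ is exactly $\T(\kk{}{X}{B}) \uplus \{s\}$, so every term occurring in it is in $\SN$.

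Therefore the reducibility hypothesis $r \in \interp{A \Rightarrow B}$ applies to the context $\kk{'}{X}{A \Rightarrow B}$ and yields $\ka{'}{X}{A \Rightarrow B}{r} \in \SN$. The last step is the observation that $\ka{'}{X}{A \Rightarrow B}{r}$ is syntactically identical to $\ka{}{X}{B}{rs}$: filling the $A \Rightarrow B$ hole with $r$ replaces the occurrence $\hole{A \Rightarrow B}\,s$ by $rs$ inside $\kk{}{X}{B}$. Since $\kk{}{X}{B}$ was arbitrary, this establishes $rs \in \interp{B}$. The only subtle point—really the crux of the argument—is recognising that the definition permits plugging an elimination context into a hole, which is what lets me compose $\kk{}{X}{B}$ with the single application $\hole{A \Rightarrow B}\,s$ and treat the result as a genuine elimination context for $A \Rightarrow B$; once that composition is set up, the bookkeeping on $\T(\cdot)$ and the typing are entirely routine.
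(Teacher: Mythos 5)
Your proof is correct and follows essentially the same route as the paper's: both plug the one-step context $\hole{A\Rightarrow B}\,s$ (with $s\in\SN$ by Lemma~\ref{lem:CR1SF}) into the arbitrary context $\kk{}{X}{B}$ to form $\kk{'}{X}{A\Rightarrow B}$, then apply the hypothesis $r\in\interp{A\Rightarrow B}$ and observe $\ka{'}{X}{A\Rightarrow B}{r}=\ka{}{X}{B}{rs}$. Your write-up merely makes explicit the bookkeeping on $\T(\cdot)$ and the context-composition step that the paper leaves implicit.
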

\begin{proof}
  We need to prove that for every elimination context $\kk{}{X}{B}$, we have
  $\ka{}{X}{B}{rs}\allowbreak\in\SN$. Since $s\in\interp A$, $\kk{'}{X}{{A\Rightarrow
      B}}=\ka{}{X}{B}{\hole{A\Rightarrow B}s} \in \SN$, and 
  $r\in\interp{A\Rightarrow B}$,
  we have $\ka{'}{X}{{A\Rightarrow B}}{r}=\ka{}{X}{B}{rs}\in\SN$.
\end{proof}

\begin{lemma}[Adequacy of abstraction]\label{lem:lambdaSF}
  For all $t, r, x, A, B$ such that $t \in \interp{A}$ and $\substa{x}{t}{r} \in
\interp{B}$, we have $\lambda x^A.r \in \interp{A \Rightarrow B}$. 
\end{lemma}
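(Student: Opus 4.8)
The plan is to unfold the definition of reducibility and reduce the statement to a strong-normalisation claim about a single family of terms. To show $\abs{x^A}{r}\in\interp{\timpl{A}{B}}$ I must check that $\ka{}{X}{\timpl{A}{B}}{\abs{x^A}{r}}\in\SN$ for every elimination context $\kk{}{X}{\timpl AB}$ all of whose occurring terms are in $\SN$. Since the hole has the arrow type $\timpl AB$, the first elimination applied to it must be an application to a term; hence every such context decomposes as $\kk{}{X}{\timpl AB}=\ka{'}{X}{B}{\hole{\timpl AB}\,s}$ for some $s\in\SN$ of type $A$ and some elimination context $\kk{'}{X}{B}$ whose occurring terms are again in $\SN$ (indeed $\T(\kk{}{X}{\timpl AB})=\{s\}\uplus\T(\kk{'}{X}{B})$). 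Thus the goal becomes: for every such $s$ and $\kk{'}{X}{B}$, $\ka{'}{X}{B}{\app{(\abs{x^A}{r})}{s}}\in\SN$.

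Before the main induction I collect the ingredients. From $\substa{x}{t}{r}\in\interp{B}$ and Lemma~\ref{lem:CR1SF} we get $\substa{x}{t}{r}\in\SN$, whence $r\in\SN$ (a term is in $\SN$ whenever one of its substitution instances is). Likewise $s\in\SN$, the terms occurring in $\kk{'}{X}{B}$ are in $\SN$, and, crucially, $\ka{'}{X}{B}{\substa{x}{t}{r}}\in\SN$, since $\substa{x}{t}{r}\in\interp{B}$ and $\kk{'}{X}{B}$ is an admissible elimination context for it. I will also use the basic principle recalled at the start of this section: a term all of whose one-step $\re$-reducts are in $\SN$ is itself in $\SN$.

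The core is then a well-founded induction showing $\ka{'}{X}{B}{\app{(\abs{x^A}{r})}{s}}\in\SN$, obtained by checking that all one-step reducts of this term are in $\SN$. A reduction internal to $s$, to a term of $\kk{'}{X}{B}$, or to $r$ strictly decreases the measure $|s|+|r|+|\kk{'}{X}{B}|$ and is handled by the induction hypothesis; for a step $r\re r'$ I additionally need $\substa{x}{t}{r'}\in\interp{B}$, which follows from closure of $\interp{B}$ under reduction (a property I would establish alongside these lemmas by the same one-step principle). The remaining reduct is the head $\beta_\lambda$-step, producing $\ka{'}{X}{B}{\substa{x}{s}{r}}$, and establishing that this is in $\SN$ is the main obstacle.

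To overcome it I compare $\substa{x}{s}{r}$ with the reducible instance $\substa{x}{t}{r}$, whose elimination $\ka{'}{X}{B}{\substa{x}{t}{r}}$ is already known to be in $\SN$. The plan is a nested induction on $|s|$: every reduction of $\ka{'}{X}{B}{\substa{x}{s}{r}}$ either takes place inside a copy of $s$ (finitely many, each strongly normalising), inside the common skeleton $\ka{'}{X}{B}{\substa{x}{-}{r}}$ and is thus mirrored by a reduction of the known-$\SN$ term $\ka{'}{X}{B}{\substa{x}{t}{r}}$, or fires a redex created by placing $s$ in a function position where $r$ applied $x$; the latter is controlled because $t\in\interp{A}$ tames exactly these positions and the created redex exposes a strictly smaller argument to the induction. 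Assembling these three bounded sources of reduction via the one-step principle yields $\ka{'}{X}{B}{\substa{x}{s}{r}}\in\SN$, closing the head case and hence the lemma. The delicate point, and the step I expect to require the most care, is precisely this accounting of the newly created redexes: matching them against the reducible witness $t$ rather than against the arbitrary strongly normalising argument $s$.
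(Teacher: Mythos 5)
Your opening coincides with the paper's proof of Lemma~\ref{lem:lambdaSF}: unfold the definition of reducibility, observe that a context with hole of type $\timpl{A}{B}$ and result type a variable must begin with an application, so it decomposes as $\ka{'}{X}{B}{\hole{\timpl{A}{B}}s}$ with $s\in\SN$ of type $A$ and $\T(\kk{}{X}{\timpl{A}{B}})=\{s\}\uplus\T(\kk{'}{X}{B})$, and then show that every one-step reduct of $\ka{'}{X}{B}{\app{(\abs{x^A}{r})}{s}}$ is in $\SN$, handling reductions inside $r$, inside $s$, or inside the context terms by an induction on a size measure, exactly as the paper does with $|r|+|\kk{}{X}{\timpl{A}{B}}|$.

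The divergence is at the head $\beta$-step, and there the gap is genuine. The paper discharges this case by instantiating the substitution hypothesis at the context argument itself: in the detailed PSI analogue (Lemma~\ref{lem:lamOfInterp}) the reduct $\ka{'}{X}{B}{\substa{x}{s}{r}}$ is in $\SN$ \emph{because} $\substa{x}{s}{r}\in\interp{B}$, the hypothesis being read as available for every argument, not for one fixed witness. (That this is the intended reading is confirmed both by the paper's first step, which instantiates the hypothesis at $t=x$ to get $r\in\interp{B}$ --- your single-witness reading only yields $r\in\SN$ --- and by the Adequacy Theorem, which establishes the hypothesis for \emph{all} reducible $t$ before invoking the lemma.) You instead keep one fixed $t$ and attempt to transfer strong normalisation from $\ka{'}{X}{B}{\substa{x}{t}{r}}$ to $\ka{'}{X}{B}{\substa{x}{s}{r}}$ knowing only $s\in\SN$, by induction on $|s|$.

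That transfer argument does not work, and it is precisely the reducts you describe as ``controlled'' that defeat it. An occurrence $\app{x}{u}$ in $r$ becomes $\app{s}{\substa{x}{s}{u}}$; when $s=\abs{y^{C}}{s_0}$, firing this redex produces $\substa{y}{\substa{x}{s}{u}}{s_0}$, which is \emph{not} a reduct of $s$: your measure $|s|$ has not decreased, and the length of the longest reduction of this contractum is not bounded by $|s|$ at all, since substituting $\substa{x}{s}{u}$ into $s_0$ can create a fresh redex at every occurrence of $y$. Nor does $t\in\interp{A}$ ``tame'' these positions: reducibility of $t$ constrains terms built from $t$, and gives no information whatsoever about the unrelated term $s$ placed in the same positions. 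Indeed, your argument makes no essential use of typing, so it would equally well ``prove'' the false untyped statement that $\substa{x}{s}{r}\in\SN$ whenever $r,s\in\SN$ and some substitution instance of $r$ is well-behaved (take $r=\app{x}{x}$, $s=\abs{y}{\app{y}{y}}$, $t$ a variable): closure of $\SN$ under substitution is exactly the obstruction that the reducibility method exists to circumvent, and it cannot be recovered by a syntactic induction of the kind you sketch. The missing ingredient is the availability of the hypothesis $\substa{x}{s}{r}\in\interp{B}$ for the argument $s$ drawn from the elimination context --- which is what the paper's proof uses, and what your reading of the statement withholds.
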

\begin{proof}
  We need to prove that for every elimination context $\kk{}{X}{{A\Rightarrow
      B}}$, we have $\ka{}{X}{{A\Rightarrow B}}{\lambda x^A.r}\in\SN$, that is that
  all its one step reducts are in $\SN$. By Lemma~\ref{lem:varSF}, $x\in\interp
  A$, so $r\in\interp B\subseteq\SN$. We conclude with an induction on $|r| +
  |\kk{}{X}{{A\Rightarrow B}}|$. 
\end{proof}

\begin{lemma}[Adequacy of type application]\label{lem:tappSF}
For all $r, X, A, B$ such that $r \in \interp{\tfor{X}{A}}$, we have $\tapp{r}{B} \in
\interp{\substa{X}{B}{A}}$.
\end{lemma}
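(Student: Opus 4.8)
The plan is to exploit the compositionality of elimination contexts already built into the notation: any elimination context with a hole of type $\substa{X}{B}{A}$ can be converted into one with a hole of type $\tfor{X}{A}$ simply by prefixing a type application $[B]$ to the hole. This is exactly the operation that the type-application clause of the elimination-context definition supports.

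Concretely, to show $\tapp{r}{B}\in\interp{\substa{X}{B}{A}}$ I would unfold the definition of reducibility: it suffices to prove that for every elimination context $\kk{}{Y}{\substa{X}{B}{A}}$ whose occurring terms all lie in $\SN$, the term $\ka{}{Y}{\substa{X}{B}{A}}{\tapp{r}{B}}$ is in $\SN$. Using the shape observation, I would write $\kk{}{Y}{\substa{X}{B}{A}}=\hole{\substa{X}{B}{A}}\alpha_1\cdots\alpha_n$, and then form the context $\kk{'}{Y}{\tfor{X}{A}}:=\tapp{\hole{\tfor{X}{A}}}{B}\,\alpha_1\cdots\alpha_n$, obtained by first applying $[B]$ to a hole of type $\tfor{X}{A}$ — which, by the type-application clause, has output type $\substa{X}{B}{A}$ — and then appending the same $\alpha_1\cdots\alpha_n$. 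This is a legitimate elimination context of output type $Y$ with a hole of type $\tfor{X}{A}$.

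Two observations then close the argument. First, $\T(\kk{'}{Y}{\tfor{X}{A}})=\T(\kk{}{Y}{\substa{X}{B}{A}})$, since the inserted type argument $[B]$ contributes nothing to $\T$; hence all terms occurring in $\kk{'}{Y}{\tfor{X}{A}}$ are in $\SN$. Second, filling its hole with $r$ gives precisely $\ka{'}{Y}{\tfor{X}{A}}{r}=\tapp{r}{B}\,\alpha_1\cdots\alpha_n=\ka{}{Y}{\substa{X}{B}{A}}{\tapp{r}{B}}$. Because $r\in\interp{\tfor{X}{A}}$ and $\kk{'}{Y}{\tfor{X}{A}}$ has all its occurring terms in $\SN$, the definition of reducibility yields $\ka{'}{Y}{\tfor{X}{A}}{r}\in\SN$, i.e. $\ka{}{Y}{\substa{X}{B}{A}}{\tapp{r}{B}}\in\SN$. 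Since the context was arbitrary, this gives $\tapp{r}{B}\in\interp{\substa{X}{B}{A}}$.

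The step I expect to require the most care is the type bookkeeping in the prefixing construction: verifying that attaching $[B]$ to a hole of type $\tfor{X}{A}$ really yields a context of output type $\substa{X}{B}{A}$, so that the remaining arguments $\alpha_1,\dots,\alpha_n$ reattach with exactly the correct types, and confirming that the hole-filling identity composes as claimed. Everything beyond that is immediate from the definitions of $\T$ and of $\interp{\cdot}$.
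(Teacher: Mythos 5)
Your proposal is correct and takes essentially the same approach as the paper: both construct an elimination context with a hole of type $\tfor{X}{A}$ by inserting the type application $[B]$ immediately after the hole, so that filling it with $r$ yields exactly $\ka{}{Y}{\substa{X}{B}{A}}{\tapp{r}{B}}$, and conclude from $r\in\interp{\tfor{X}{A}}$. The difference is purely presentational: the paper forms $\kk{'}{Y}{\tfor{X}{A}}=\ka{}{Y}{\substa{X}{B}{A}}{\tapp{\hole{\tfor{X}{A}}}{B}}$ by context composition, while you flatten to the argument-list shape $\hole{\substa{X}{B}{A}}\alpha_1\cdots\alpha_n$ and prefix $[B]$, adding the (correct) explicit check that $\T$ is unchanged.
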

\begin{proof}
  We need to prove that for every elimination context
  $\kk{}{Y}{{\substa{X}{B}{A}}}$ we have $\ka{}{Y}{\substa{X}{B}{A}}{\tapp{r}{B}}\in\SN$. Let
  $\kk{'}{Y}{{\tfor{X}{A}}}=\ka{}{Y}{\substa{X}{B}{A}}{\tapp{\hole{\tfor{X}{A}}}{B}}
  \in \SN$, and since $r\in\interp{\tfor{X}{A}}$,
  we have $\ka{'}{Y}{\tfor{X}{A}}{r} = \ka{}{Y}{\substa{X}{B}{A}}{\tapp{r}{B}}\in\SN$.
\end{proof}

\begin{lemma}[Adequacy of type abstraction]\label{lem:tlambdaSF}
For all $r$, $X$, $A$, $B$ such that $\substa{X}{B}{r} \in \interp{\substa{X}{B}{A}}$,
we have $\Lambda X . r \in \interp{\tfor{X}{A}}$.
\end{lemma}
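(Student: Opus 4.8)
The plan is to mirror the proof of Lemma~\ref{lem:lambdaSF}, replacing the term redex by the type redex $\beta_\Lambda$. Fix an arbitrary elimination context $\kk{}{Y}{\tfor XA}$ all of whose occurring terms are in $\SN$; we must show $\ka{}{Y}{\tfor XA}{\pAbs Xr}\in\SN$. Since the hole has the universal type $\tfor XA$, the only way to build a non-trivial context on it is a type application, so the first elimination is forced to be some $\tapp{\hole{\tfor XA}}{C}$. Hence the context decomposes as $\kk{}{Y}{\tfor XA}=\ka{'}{Y}{\substa XCA}{\tapp{\hole{\tfor XA}}{C}}$ for a smaller context $\kk{'}{Y}{\substa XCA}$ with hole of type $\substa XCA$ and $\T(\kk{'}{Y}{\substa XCA})=\T(\kk{}{Y}{\tfor XA})\subseteq\SN$, and the goal becomes $\ka{'}{Y}{\substa XCA}{\tapp{(\pAbs Xr)}C}\in\SN$.

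First I would record two facts. From the hypothesis and Lemma~\ref{lem:CR1SF}, $\substa XBr\in\interp{\substa XBA}\subseteq\SN$; since a type substitution only relabels the type annotations and type arguments of a term, it maps every reduction of $r$ to a reduction of $\substa XBr$, so $\substa XBr\in\SN$ forces $r\in\SN$ (and likewise $\substa XCr\in\SN$). Second, I claim $\substa XCr\in\interp{\substa XCA}$. Granting this claim, the definition of reducibility applied to the context $\kk{'}{Y}{\substa XCA}$ (whose occurring terms are in $\SN$) gives $\ka{'}{Y}{\substa XCA}{\substa XCr}\in\SN$.

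With these in hand I would conclude by the principle that a term all of whose one-step reducts are in $\SN$ is itself in $\SN$, arguing by induction on $|r|+|\kk{'}{Y}{\substa XCA}|$. The one-step reducts of $\ka{'}{Y}{\substa XCA}{\tapp{(\pAbs Xr)}C}$ are of three kinds: the head $\beta_\Lambda$ step $\tapp{(\pAbs Xr)}C\re\substa XCr$, which yields $\ka{'}{Y}{\substa XCA}{\substa XCr}$, already shown to be in $\SN$; an internal reduction $r\re r'$, which yields $\ka{'}{Y}{\substa XCA}{\tapp{(\pAbs X{r'})}C}$ with $|r'|<|r|$ and closes by the induction hypothesis; and a reduction inside one of the terms occurring in $\kk{'}{Y}{\substa XCA}$, which decreases $|\kk{'}{Y}{\substa XCA}|$ and again closes by the induction hypothesis.

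The main obstacle is the claim $\substa XCr\in\interp{\substa XCA}$, since the hypothesis supplies reducibility only at the particular type $B$, whereas $C$ is the arbitrary type provided by the context. This is exactly the point where polymorphism genuinely complicates the argument with respect to System I, in which there are no type variables to instantiate. I would isolate it as a substitution-invariance property of reducibility. One direction, stripping the given $B$ to obtain the generic statement $r\in\interp A$, is straightforward: any elimination context with hole of type $A$ is pushed through the substitution $\substa XB{\cdot}$ into a legitimate context with hole of type $\substa XBA$ (choosing its output variable distinct from $X$), its occurring terms staying in $\SN$, so reducibility at $\substa XBA$ transfers back using that type substitution reflects $\SN$. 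The forward instantiation direction, passing from $r\in\interp A$ to $\substa XCr\in\interp{\substa XCA}$, is the delicate one, because a context whose hole has the instantiated type $\substa XCA$ may perform eliminations that are impossible on the abstract hole of type $A$; the resolution rests on the fact that a term whose type is a type variable is necessarily an elimination (it can be neither an abstraction, nor a pair, nor a type abstraction), so that instantiating $X$ cannot turn $\substa XCr$ into a fresh head redex. I expect verifying this invariance to be the crux of the proof, the remaining reduction bookkeeping being routine and parallel to Lemma~\ref{lem:lambdaSF}.
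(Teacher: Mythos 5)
Your overall skeleton coincides with the paper's: decompose the context as $\kk{}{Y}{\tfor{X}{A}} = \ka{'}{Y}{\substa{X}{C}{A}}{\tapp{\hole{\tfor{X}{A}}}{C}}$ and show that every one-step reduct of $\ka{'}{Y}{\substa{X}{C}{A}}{\tapp{(\pAbs{X}{r})}{C}}$ is in $\SN$, by induction on $|r|+|\kk{'}{Y}{\substa{X}{C}{A}}|$; the two internal cases (a step inside $r$ or inside a term of the context) are handled exactly as in Lemma~\ref{lem:lambdaSF}. Where you part ways with the paper is the head case, and there your proposal has a genuine gap. The lemma's statement is written loosely, but it is meant to be used with its premise available at the instantiating type: in the paper's detailed proof of the PSI analogue (Lemma~\ref{lem:tlambda}, appendix), the type applied by the context \emph{is} the $B$ of the hypothesis, so $\substa{X}{C}{r}\in\interp{\substa{X}{C}{A}}$ is the premise itself and the head reduct $\ka{'}{Y}{\substa{X}{C}{A}}{\substa{X}{C}{r}}$ is in $\SN$ with no further work; this ``premise for every $B$'' reading is also the only one under which the lemma can serve the adequacy theorem.

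You instead take the premise at one fixed $B$ and plan to bridge to the arbitrary $C$ chosen by the context via a substitution-invariance property of $\interp{\cdot}$, which you leave unproved. That bridge is not a routine side lemma: its forward direction, $r\in\interp{A}$ implies $\substa{X}{C}{r}\in\interp{\substa{X}{C}{A}}$, is as strong as the adequacy theorem itself, so it cannot be dispatched by the elementary facts you invoke (type substitution preserves and reflects reduction; terms of variable type are eliminations). Concretely, take $A=\timpl{X}{X}$ and $r=\abs{x^X}{x}$. Then $r\in\interp{\timpl{X}{X}}$ holds by a direct check, because an elimination context over a hole of type $\timpl{X}{X}$ can perform at most one application, the result type being the variable $X$. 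But the instantiated claim $\abs{x^C}{x}\in\interp{\timpl{C}{C}}$ requires $\ka{}{Y}{C}{(\abs{x^C}{x})t}\in\SN$ for every $\SN$ term $t$ of type $C$ and every elimination context $\kk{}{Y}{C}$ with $\SN$ terms; since this term head-reduces to $\ka{}{Y}{C}{t}$, the claim entails that every $\SN$ term of type $C$ lies in $\interp{C}$, i.e.\ (with Lemma~\ref{lem:CR1SF}) that $\interp{C}$ is exactly the set of $\SN$ terms of type $C$, for every $C$. In particular it entails that the application of two $\SN$ terms is $\SN$ --- precisely what the reducibility apparatus (Lemmas~\ref{lem:appSF}, \ref{lem:lambdaSF}, Theorem~\ref{thm:adequacySF}) exists to prove, and a statement with no elementary proof. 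Your neutrality remark covers only occurrences of $X$ where no elimination can apply (as in $A=X$); it says nothing about $A=\timpl{X}{X}$, where instantiating $X$ creates applications that the premise at $B$ never exercised. So the crux of your proof is missing and the route you sketch for it cannot be completed; the repair is simply to invoke the premise at the type $C$ that the context applies, as the paper does.
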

\begin{proof}
  We need to prove that for every elimination context $\kk{}{Y}{{\tfor{X}{A}}}$,
  we have $\ka{}{Y}{{\tfor{X}{A}}}{\Lambda X.r}\in\SN$, that is that all its one
  step reducts are in $\SN$. Since $\substa{X}{B}{r}\in\interp A\subseteq\SN$ and
  every term in $\T(\kk{}{Y}{\tfor{X}{A}})$ is in $\SN$, then all its one step
  reducts are in $\SN$.
\end{proof}

\begin{definition}[Adequate substitution]
  A substitution $\theta$ is adequate for a context $\Gamma$ (notation
  $\theta\vDash\Gamma$) if for all $x:A \in \Gamma$, we have
  $\theta(x)\in\interp{A}$.
\end{definition}

\begin{theorem}[Adequacy]\label{thm:adequacySF}
  For all $\Gamma, r, A$, and substitution $\theta$ such that $\Gamma \vdash r:A$
  and $\theta\vDash\Gamma$, we have $\theta r\in\interp{A}$.
\end{theorem}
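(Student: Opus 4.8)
The plan is to prove the Adequacy Theorem by induction on the derivation of $\Gamma \vdash r : A$. For each typing rule, I would show that if the premises yield reducible terms (under any adequate substitution $\theta$), then so does the conclusion. Most cases follow directly from the adequacy lemmas already established for System F: rule \ax{} is handled by Lemma~\ref{lem:varSF} (adequacy of variables), rule \imple{} by Lemma~\ref{lem:appSF}, rule \impli{} by Lemma~\ref{lem:lambdaSF}, rule \fore{} by Lemma~\ref{lem:tappSF}, and rule \fori{} by Lemma~\ref{lem:tlambdaSF}. The structure of the argument mirrors the standard reducibility proof: I would first apply the induction hypothesis to the subderivations, producing reducibility witnesses for the immediate subterms (instantiated through $\theta$), then invoke the matching adequacy lemma to close the case.

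The two points requiring care involve the binding rules. For rule \impli{}, to apply Lemma~\ref{lem:lambdaSF} I need a term $t \in \interp A$ such that $\substa{x}{t}{(\theta r)} \in \interp B$; the natural choice is to use the variable $x^A$ itself (which is reducible by Lemma~\ref{lem:varSF}) and apply the induction hypothesis to the extended substitution $\theta[x \mapsto x^A]$, which is adequate for $\Gamma, x:A$. For rule \fori{}, applying Lemma~\ref{lem:tlambdaSF} requires that $\substa{X}{B}{(\theta r)} \in \interp{\substa{X}{B}{A}}$ for the relevant $B$; here I would appeal to the second clause of the Substitution Lemma (Lemma~\ref{substitution}) together with the induction hypothesis, using the side condition $X \notin \ftva{\Gamma}$ to ensure $\theta$ and the type substitution commute appropriately.

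The genuinely System-F-specific cases are the two conjunction rules, \conji{} and \conje{}, for which no adequacy lemma has yet been proven in the reformulated style. For these I would need to establish, as auxiliary lemmas analogous to those already stated, that pairing and projection preserve reducibility: namely that $r \in \interp A$ and $s \in \interp B$ imply $\pair{r}{s} \in \interp{\tconj{A}{B}}$, and that $r \in \interp{\tconj{A}{B}}$ implies $\proja{A}{r} \in \interp A$. I expect the main obstacle to lie precisely here, because this is where PSI departs most sharply from System F: the equivalence relation $\rightleftarrows$ identifies pairs with distributed applications (e.g.\ $\pair{rt}{st} \eq \pair{r}{s}t$) and abstractions returning pairs with pairs of abstractions, so an elimination context applied to a pair can interact nontrivially with the pair structure. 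Consequently the reducibility argument for $\pair{r}{s}$ must account for reductions and equivalences that propagate through the pair, and the straightforward induction on context size used in Lemma~\ref{lem:lambdaSF} may need to be strengthened to track these interactions.

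Because the excerpt ends at the statement of Theorem~\ref{thm:adequacySF} and the conjunction and equivalence cases embody the core novelty of PSI relative to System~F, I anticipate that the author's proof either defers to the System~F skeleton for the non-conjunctive cases and then treats the pair and projection cases with the full machinery of the equivalence relation, or reformulates reducibility so that the $\rightleftarrows$-closure is built into $\interp{-}$ from the start, making each case robust under equivalence.
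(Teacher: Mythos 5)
Your first paragraph is, in essence, the paper's entire proof of Theorem~\ref{thm:adequacySF}: induction on $r$ (equivalently on the typing derivation, since this fragment is syntax-directed), closing each case with Lemmas~\ref{lem:varSF}--\ref{lem:tlambdaSF}. But your last two paragraphs rest on a misreading of scope. Theorem~\ref{thm:adequacySF} belongs to Section~\ref{sec:SNinSF}, the warm-up reformulation of Girard's proof for \emph{plain} System~F: there the term language has no pairs and no projections, the elimination contexts have only application and type-application clauses, and there is no rule $(\equiv)$ and no relation $\eq$. So the cases you single out as ``the main obstacle'' --- \conji{}, \conje{}, and robustness under $\eq$ --- simply do not occur in this theorem. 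They occur in the PSI adequacy theorem (Theorem~\ref{thm:adequacy}), and there the paper does exactly what you anticipate: adequacy of product and of projection are proved as separate lemmas (Lemmas~\ref{lem:adequacyOfProd} and~\ref{lem:SNimpliespiSN}), supported by the measure $M$ and the analysis of reducts of pairs; reducibility is \emph{not} redefined to build in the $\eq$-closure.

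There is also a genuine gap in your \impli{} case. You discharge the hypothesis of Lemma~\ref{lem:lambdaSF} with the single witness $t=x^A$, which yields only $\substa{x}{x^A}{(\theta r)}=\theta r\in\interp{B}$. Under that single-witness reading the lemma cannot do its job: to show $\ka{}{X}{\timpl{A}{B}}{\abs{x^A}{\theta r}}\in\SN$, its proof must handle the head $\beta$-reduct $\substa{x}{s}{(\theta r)}$ for the \emph{arbitrary} argument $s$ supplied by the elimination context, and $\theta r\in\interp{B}$ says nothing about $\substa{x}{s}{(\theta r)}$; this is the classical quantifier pitfall of reducibility proofs, and the statement is only saved, circularly, by the strong normalisation one is trying to prove. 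The hypothesis must be secured for \emph{every} $t\in\interp{A}$, which is precisely how the paper uses the corresponding lemma in the PSI proof: the abstraction case of Theorem~\ref{thm:adequacy} first establishes, by the induction hypothesis, that $\substa{x}{t}{(\theta s)}$ is reducible for all $t$ in the interpretation of the domain type, and only then invokes the lemma. The repair is immediate: apply your induction hypothesis to $\theta[x\mapsto t]$ for an arbitrary $t\in\interp{A}$ (every such substitution is adequate for $\Gamma,x:A$), not just to $\theta[x\mapsto x^A]$. The same quantifier care applies to \fori{}: Lemma~\ref{lem:tlambdaSF} needs $\substa{X}{B}{(\theta r)}\in\interp{\substa{X}{B}{A}}$ for the arbitrary $B$, which you obtain by applying the induction hypothesis to the type-substituted body (legitimate because type substitution preserves the structure of terms), with $X\notin\ftva{\Gamma}$ guaranteeing that $\theta$ remains adequate; Lemma~\ref{substitution}(2) by itself gives only typability, not reducibility.
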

\begin{proof}
  By induction on $r$, using Lemmas~\ref{lem:varSF} to~\ref{lem:tlambdaSF}. 
\end{proof}

\begin{theorem}[Strong normalisation]
  For all $\Gamma, r, A$ such that $\Gamma \vdash r:A$, we have $r\in\SN$.
\end{theorem}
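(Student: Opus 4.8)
The plan is to obtain strong normalisation as an immediate consequence of the Adequacy theorem (Theorem~\ref{thm:adequacySF}) together with the fact that every reducible term is strongly normalising (Lemma~\ref{lem:CR1SF}). The only gap to bridge is that Adequacy yields $\theta r\in\interp A$ for an \emph{adequate} substitution $\theta$, whereas we want a statement about $r$ itself; this is resolved by instantiating $\theta$ with the identity substitution.

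First I would take $\theta$ to be the identity substitution, that is, the substitution with $\theta(x)=x^{A}$ for every variable. I then need to check that $\theta\vDash\Gamma$: by the definition of adequate substitution this requires $\theta(x)\in\interp{A}$ for each $x:A\in\Gamma$, and since $\theta(x)=x^{A}$, this is exactly the Adequacy of variables (Lemma~\ref{lem:varSF}), which gives $x^{A}\in\interp{A}$. Hence the identity substitution is adequate for any context $\Gamma$.

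Next, applying the Adequacy theorem to the hypothesis $\Gamma\vdash r:A$ with this $\theta$ gives $\theta r\in\interp{A}$. Because $\theta$ acts as the identity on all variables, $\theta r=r$, so we conclude $r\in\interp{A}$. Finally, by Lemma~\ref{lem:CR1SF} we have $\interp{A}\subseteq\SN$, and therefore $r\in\SN$, as required.

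There is no genuine obstacle remaining at this stage: all the substantive work has already been carried out in establishing the adequacy lemmas for each syntactic constructor (Lemmas~\ref{lem:varSF} through~\ref{lem:tlambdaSF}) and in the proof of Theorem~\ref{thm:adequacySF}. The one point deserving care is simply the observation that the identity substitution qualifies as adequate, which is precisely what Lemma~\ref{lem:varSF} is designed to supply; with that in hand the argument is a three-line specialisation of Adequacy followed by an appeal to $\interp{A}\subseteq\SN$.
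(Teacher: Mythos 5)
Your proof is correct and follows exactly the paper's own argument: instantiate the Adequacy theorem (Theorem~\ref{thm:adequacySF}) with the identity substitution, which is adequate by Lemma~\ref{lem:varSF}, and then conclude $r\in\interp{A}\subseteq\SN$ by Lemma~\ref{lem:CR1SF}. Your write-up merely makes explicit the two small points the paper leaves implicit, namely that adequacy of the identity substitution reduces to adequacy of variables and that $\theta r=r$ for the identity $\theta$.
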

\begin{proof}
  By Lemma~\ref{lem:varSF}, the idendity substitution is adequate. Thus, 
  by Theorem~\ref{thm:adequacySF} and Lemma~\ref{lem:CR1SF}, 
  $r\in\interp{A}\subseteq\SN$.
\end{proof}

\subsection{Measure on terms}\label{sec:MT}

The size of a term is not invariant through the equivalence $\rightleftarrows$.
for example, counting the number of lambda abstractions in a term, we see that
$\lambda x^A.\pair rs$ is different than $\pair{\lambda x^A.r}{\lambda x^A.s}$.
Hence we introduce a measure $M(\cdot)$ on terms.

\begin{definition}
  [Measure on terms]
  \begin{align*}
    P(x)&= 0\\
    P(\lambda x^A.r)&= P(r)\\
    P(rs)&= P(r)\\
    P(\pair{r}{s})&= 1 + P(r) + P(s)\\
    P(\pi_A(r))&= P(r)\\
    P(\Lambda X.r)&= P(r)\\
    P(\tapp{r}{A})&= P(r)
  \end{align*}
  \begin{align*}
    M(x)&= 1\\ 
    M(\lambda x^A.r)&= 1 + M(r) + P(r)\\ 
    M(rs)&= M(r) + M(s) + P(r) M(s)\\ 
    M(\pair{r}{s})&= M(r) + M(s)\\ 
    M(\pi_A(r))&= 1 +  M(r) + P(r)\\
    M(\Lambda X.r)&= 1 + M(r) + P(r)\\ 
    M(\tapp{r}{A})&= 1 + M(r) + P(r)
  \end{align*}
\end{definition}

\begin{lemma}
  \label{lem:eqPr}
  For all $r, s$ such that $ r\eq s$, we have $P( r)=P( s)$.
\end{lemma}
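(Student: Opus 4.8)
The plan is to prove the statement by structural induction on the derivation of $r \eq s$. The relation $\eq$ is generated by the ten equational axioms of Table~\ref{tab:SemOp} together with the congruence rules that close $\eq$ under each term constructor, so the induction splits into two families of cases: one base case per axiom, and one inductive step per congruence rule. In every case the goal $P(r)=P(s)$ reduces to a one-line computation with the clauses defining $P$.

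For the base cases I would unfold $P$ on both sides of each axiom. The key structural feature of $P$ that makes this work is that it is \emph{blind} to the argument of an application and to the type in a type application, i.e.\ $P(\app r s)=P(r)$ and $P(\tapp r A)=P(r)$, while it passes transparently through $\lambda$, $\Lambda$ and $\pi$, and adds exactly one per pair, summing over the two components. Concretely: the commutativity axiom is immediate by symmetry of addition; associativity balances because both $\pair{r}{\pair{s}{t}}$ and $\pair{\pair{r}{s}}{t}$ accumulate the constant $2$ from their two nested pairs; the $\lambda$- and $\Lambda$-distribution axioms for pairs balance because $P$ absorbs the binders, leaving a single pair contributing $+1$ on each side over the same two components; and the currying axiom $\app{r}{\pair{s}{t}}\eq\app{\app{r}{s}}{t}$ together with the application-distribution axiom hold precisely because $P$ discards the argument subterms, so both sides collapse to $P(r)$ (respectively to $1+P(r)+P(s)$). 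The remaining polymorphic-commutation and polymorphic-distribution axioms follow the same pattern, since $P$ ignores the type binders and type arguments involved.

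For the inductive cases I would invoke the induction hypothesis $P(r)=P(s)$ and check that each constructor preserves equality of $P$ in the modified position. For the congruences through $\lambda$, $\Lambda$, $\pi$, and the left premise of an application, $P$ forwards its value from the subterm unchanged, so the conclusion is immediate; for the two congruences through the components of a pair, $P$ adds the same constant and the other component's contribution to both sides, so equality follows from the hypothesis. The only mildly special case is the congruence through the \emph{argument} of an application, $\app{t}{r}\eq\app{t}{s}$: here no hypothesis is even needed, since $P(\app{t}{r})=P(t)=P(\app{t}{s})$ regardless of $r$ and $s$, again reflecting that $P$ ignores application arguments.

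I do not expect a genuine obstacle: the whole lemma is a finite case check, and the definition of $P$ is tailored so that the only clauses that could break invariance—those for application and type application—deliberately drop the subterm that the equivalences rearrange. The one place to be careful is the associativity axiom, where one must confirm that the additive constant is $2$ on both sides rather than accidentally $1$; everything else is routine.
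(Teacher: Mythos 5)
Your proposal is correct and matches the paper's own proof: both check each equational axiom of Table~\ref{tab:SemOp} by unfolding $P$ (with the same computations, e.g.\ the constant $2$ in the associativity case) and then handle the contextual closure by induction, with the congruence cases following from the induction hypothesis (or, for the application-argument congruence, from no hypothesis at all). There is nothing to add.
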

\begin{proof}
  We check the case of each rule of Table~\ref{tab:SemOp}, and then conclude by structural induction to handle the contextual closure.
  \begin{itemize}
  \item \rulelabel{(comm)}: $\begin{aligned}[t]P( \pair{r}{s})&=1+P( r)+P( s)\\
      &=P( \pair{s}{r})\end{aligned}$
  \item \rulelabel{(asso)}: $\begin{aligned}[t] P(\pair{\pair{r}{s}}{t}) &=2+P( r)+P( s)+P( t) \\
      &=P( \pair{r}{\pair{s}{t}}) \end{aligned}$

  \item \rulelabel{(dist$_\lambda$)}: $\begin{aligned}[t]P(\lambda x^A.\pair{r}{s}) &=1+P( r)+P( s) \\
      &=P(\pair{\lambda x^A. r}{\lambda x^A. s})\end{aligned}$

  \item \rulelabel{(dist$_{\mathsf{app}}$)}: $\begin{aligned}[t]P(\pair{r}{s} t)&=1+P(r)+P(s) \\
      &=P( \pair{r t}{s t})\end{aligned}$

  \item \rulelabel{(curry)}: $\begin{aligned}[t]P(( r s) t) &=P( r) \\
      &=P( r\pair{s}{t})\end{aligned}$
    
  \item \rulelabel{(P-COMM$_{\forall_i\Rightarrow_i}$)}: $\begin{aligned}[t]P(\Lambda X . \lambda x^A . r) &= P(r) \\
      &= P(\lambda x^A . \Lambda X . r)\end{aligned}$
    
  \item \rulelabel{(P-COMM$_{\forall_e\Rightarrow_i}$)}: $\begin{aligned}[t]P(\pApp{(\abs{x^A}{r})}B) &= P(r) \\
      &= P(\abs{x^A}{\pApp{r}B})\end{aligned}$
    
  \item \rulelabel{(P-DIST$_{\forall_i\land_i}$)}: $\begin{aligned}[t]P(\Lambda X . \pair{ r }{ s }) &= 1 + P(r) + P(s) \\
      &= P(\pair{ \Lambda X . r }{ \Lambda X . s })\end{aligned}$
    
  \item \rulelabel{(P-DIST$_{\forall_e\land_i}$)}: $\begin{aligned}[t]P(\pair{ r }{ s }[A]) &= 1 + P(r) + P(s) \\
      &= P(\pair{ \tapp{r}{A} }{ \tapp{s}{A} })\end{aligned}$
    
  \item \rulelabel{(P-DIST$_{\forall_i\land_e}$)}: $\begin{aligned}[t]P(\pi_{\forall X . A}(\Lambda X . r)) &= P(r) \\
      &= P(\Lambda X . \pi_A (r))\end{aligned}$
    
  \item \rulelabel{(P-DIST$_{\forall_e\land_e}$)}: 
    $\begin{aligned}[t]P(\pApp{(\proja{\tfor{X}{B}}{r})}A) &= P(r) \\
      &= P(\proj{\substa{X}{A}{B}}{\pApp{r}A}) 
    \qedhere
    \end{aligned}$
  \end{itemize}
\end{proof}

\begin{lemma}
  \label{lem:eqM}
  For all $r, s$ such that $r \eq s$, we have $M(r) = M(s)$.
\end{lemma}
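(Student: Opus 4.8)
The plan is to follow exactly the structure of the proof of Lemma~\ref{lem:eqPr}: first verify that $M$ is preserved by each of the equivalence rules of Table~\ref{tab:SemOp} applied at the root, and then lift this to the full congruence $\eq$ by structural induction on the contextual closure. Since $M$ is defined in terms of $P$, both the root-case verifications and the inductive step will rely on the companion equality $P(r)=P(s)$ established in Lemma~\ref{lem:eqPr}.

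For the root cases, each rule reduces to a direct computation. The genuinely informative cases are those mixing application with pairing, since these are where the nonlinear summand $P(r)M(s)$ of $M(rs)$ comes into play. For \rulelabel{(curry)}, unfolding gives $M(r\pair st)=M(r)+M(s)+M(t)+P(r)M(s)+P(r)M(t)$, while $M(rst)=M(r)+M(s)+M(t)+P(r)M(s)+P(r)M(t)$, the two agreeing precisely because $P(rs)=P(r)$ by definition of $P$. Similarly, for \rulelabel{(dist$_{\mathsf{app}}$)} both $M(\pair rs t)$ and $M(\pair{rt}{st})$ compute to $M(r)+M(s)+2M(t)+P(r)M(t)+P(s)M(t)$, using $P(\pair rs)=1+P(r)+P(s)$. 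The remaining rules—\rulelabel{(comm)}, \rulelabel{(asso)}, \rulelabel{(dist$_\lambda$)}, and the four polymorphic rules—are checked the same way, expanding the definitions of $M$ and $P$, with the $P$-summands matching on the two sides.

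For the contextual closure, I argue by structural induction following the congruence rules at the bottom of Table~\ref{tab:SemOp}. In the positions where the rewritten subterm enters $M$ only through its own $M$-value and not through a $P$-factor—namely both slots of a pair and the argument position of an application, where $M(\pair tr)=M(t)+M(r)$ and $M(tr)=M(t)+M(r)+P(t)M(r)$—the induction hypothesis $M(r)=M(s)$ alone suffices. The delicate positions are those whose contribution to $M$ carries a $P(\cdot)$ factor of the rewritten subterm: the function position of an application and the bodies of $\lambda$, $\Lambda$, $\pi$, and type application. For instance $M(\abs{x^A}{r})=1+M(r)+P(r)$, so to conclude $M(\abs{x^A}{r})=M(\abs{x^A}{s})$ from $r\eq s$ I need both $M(r)=M(s)$ (induction hypothesis) and $P(r)=P(s)$ (Lemma~\ref{lem:eqPr}); the same combination handles $rt\eq st$, $\pi_A(r)\eq\pi_A(s)$, $\pAbs{X}{r}\eq\pAbs{X}{s}$, and $\tapp{r}{A}\eq\tapp{s}{A}$.

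The main obstacle is not any single computation but the reason $M$ carries the $P$-corrections in the first place: without them the application former and the introduction/elimination formers would fail to be invariant under $\eq$. Concretely, the difficulty is to confirm that in the application case the bilinear term $P(r)M(s)$ together with the constant offsets introduced by pairing cancel exactly across \rulelabel{(curry)} and \rulelabel{(dist$_{\mathsf{app}}$)}, which is what forces the specific shape $M(rs)=M(r)+M(s)+P(r)M(s)$ and the identity $P(rs)=P(r)$. Once those two root cases are verified, the rest is routine bookkeeping combining the induction hypothesis with Lemma~\ref{lem:eqPr}.
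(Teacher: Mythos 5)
Your proposal is correct and takes essentially the same route as the paper: verify $M$-invariance for each rule of Table~\ref{tab:SemOp} at the root by direct computation, then close under congruence by structural induction, with the positions that contribute a $P(\cdot)$ factor ($\lambda$-bodies, function position of application, $\pi$, $\Lambda$, type application) handled by combining the induction hypothesis with Lemma~\ref{lem:eqPr}. (One trivial slip: there are six polymorphic equivalence rules, not four, but as you say they are all checked by the same expansion of $M$ and $P$.)
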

\begin{proof}
  We check the case of each rule of Table~\ref{tab:SemOp}, and then conclude by
  structural induction to handle the contextual closure.
  \begin{itemize}
  \item \rulelabel{(comm)}: $\begin{aligned}[t] & M( \pair{r}{s}) \\
      &= M( r) + M( s) \\
      &= M( \pair{s}{r})\end{aligned}$

  \item \rulelabel{(asso)}: $\begin{aligned}[t] & M(\pair{\pair{r}{s}}{t}) \\
      &= M( r) + M( s) + M( t) \\
      &= M( \pair{r}{\pair{s}{t}})\end{aligned}$

  \item \rulelabel{(dist$_{\lambda}$)}: $\begin{aligned}[t] & M(\lambda x^A . \pair{r}{s}) \\
      &= 2 + M( r) + M( s) + P( r) + P( s) \\
      &= M(\pair{\lambda x^A . r}{\lambda x^A . s}) \end{aligned}$

  \item \rulelabel{(dist$_{\mathsf{app}}$)}: 
    $\begin{aligned}[t] & M(\pair{r}{s} t) \\
      &=\begin{aligned}[t]
	&M( r) + M( s) + 2 M( t)\\
	&+ P( r) M( t) + P( s) M( t)
      \end{aligned}\\
    &= M( \pair{r t}{s t}) \end{aligned}$

  \item \rulelabel{(curry)}:
    $\begin{aligned}[t] & M(( r s) t) \\
      &=\begin{aligned}[t]
	&M( r)+M( s)+P( r)M( s)\\
	&+M( t)+P( r)M( t)
      \end{aligned}\\
      &= M( r\pair{s}{t}) \end{aligned}$

  \item \rulelabel{(p-comm$_{\forall_i\Rightarrow_i}$)}: $\begin{aligned}[t] &M(\Lambda X . \lambda x^A . r) \\
      &= 2 + M(r) + 2 P(r) \\
      &= M(\lambda x^A . \Lambda X . r)\end{aligned}$
    
  \item \rulelabel{(p-comm$_{\forall_e\Rightarrow_i}$)}: $\begin{aligned}[t] &M(\pApp{(\abs{x^A}{r})}B) \\
      &= 2 + M(r) + 2 P(r) \\
      &= M(\abs{x^A}{\pApp{r}B})\end{aligned}$
    
  \item \rulelabel{(p-dist$_{\forall_i\land_i}$)}: $\begin{aligned}[t] &M(\Lambda X . \pair{ r }{ s }) \\
      &= 1 + M(r) + M(s) \\
      &= M(\pair{ \Lambda X . r }{ \Lambda X . s })\end{aligned}$
    
  \item \rulelabel{(p-dist$_{\forall_e\land_i}$)}: $\begin{aligned}[t] &M(\pair{ r }{ s }[A]) \\
      &= 2 + M(r) + P(r) + M(s) + P(s) \\
      &= M(\pair{ \tapp{r}{A} }{ \tapp{s}{A} })\end{aligned}$
    
  \item \rulelabel{(p-dist$_{\forall_i\land_e}$)}: $\begin{aligned}[t] &M(\pi_{\forall X . A}(\Lambda X . r)) \\
      &= 2 + M(r) + 2 P(r) \\
      &= M(\Lambda X . \pi_A (r))\end{aligned}$
    
  \item \rulelabel{(p-dist$_{\land_e\forall_e}$)}: $\begin{aligned}[t]
      &M(\pApp{(\proja{\tfor{X}{B}}{r})}A) \\
      &= 2 + M(r) + 2 P(r) \\
      &= M(\proj{\substa{X}{A}{B}}{\pApp{r}A})
      \hspace{2.4cm}
      \qedhere
    \end{aligned}
    $
  \end{itemize}
\end{proof}

\begin{lemma}\label{lem:newlemma}
  For all $r, s, X, A$,
  \[
    \begin{array}{r@{\ }l@{\qquad}r@{\ }l}
      M(\lambda x^A.r)&> M(r)
      &M(\pair{r}{s}) &> M(s)\\
      M(rs) &> M(r)
      &M(\pi_A(r)) &> M(r)\\
      M(rs) &> M(s)
      &M(\Lambda X.r) &> M(r)\\
      M(\pair{r}{s}) &>  M(r)
      &M(\tapp{r}{A}) &> M(r)
    \end{array}
  \]
\end{lemma}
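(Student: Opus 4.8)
The plan is to reduce every one of the eight inequalities to two auxiliary bounds---namely $P(r)\ge 0$ and $M(r)\ge 1$ for every term $r$---and then read each inequality off directly from the defining clauses of $M$.

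First I would prove $P(r)\ge 0$ by a straightforward induction on the structure of $r$. The base case $P(x)=0$ is immediate, and in every remaining clause $P$ is defined as a sum of quantities that are nonnegative by the induction hypothesis, possibly with an extra summand $1$ (as in the pair case $P(\pair{r}{s})=1+P(r)+P(s)$); hence the value stays nonnegative.

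Next I would prove $M(r)\ge 1$ by induction on $r$, using the previous bound. The base case is $M(x)=1$. For each of the remaining clauses, $M(r)$ is a sum in which at least one summand is already $\ge 1$---either the explicit constant $1$ appearing in the abstraction, projection, type-abstraction and type-application clauses, or a subterm measure $M(\cdot)$, which is $\ge 1$ by the induction hypothesis---while all other summands are nonnegative because $P\ge 0$ and $M\ge 1$. In particular $M(\pair{r}{s})=M(r)+M(s)\ge 2$ and $M(rs)=M(r)+M(s)+P(r)M(s)\ge 2$.

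With these bounds established, each inequality follows by unfolding the definition. For the four cases $M(\lambda x^A.r)$, $M(\pi_A(r))$, $M(\Lambda X.r)$ and $M(\tapp{r}{A})$, the measure equals $1+M(r)+P(r)$, which exceeds $M(r)$ because $1+P(r)\ge 1>0$. For the pair, $M(\pair{r}{s})=M(r)+M(s)$ is strictly larger than $M(r)$ since $M(s)\ge 1$, and strictly larger than $M(s)$ since $M(r)\ge 1$. For the application, rewriting $M(rs)=M(r)+M(s)(1+P(r))$ gives $M(rs)\ge M(r)+M(s)$, which is strictly larger than each of $M(r)$ and $M(s)$ by the same positivity argument. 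There is no genuine obstacle here: the only point that requires care is to establish $M(r)\ge 1$ \emph{before} treating the pair and application cases, since strictness there relies on the subterm measures being strictly positive rather than merely nonnegative.
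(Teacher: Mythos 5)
Your proof is correct and follows essentially the same route as the paper: the paper's proof also consists of first noting that $M(t)\geq 1$ for all $t$ and then concluding by case inspection on the defining clauses of $M$. Your write-up merely makes explicit the auxiliary bound $P(r)\geq 0$ and the inductions, which the paper leaves implicit.
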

\begin{proof}
  For all $t$, $M(t)\geq 1$. We conclude by case inspection.
\end{proof}

\subsection{Reduction of a product}\label{sec:RP}

When typed lambda-calculus is extended with pairs, proving that if $r_1\in\SN$
and $r_2\in\SN$ then $ \pair{r_1}{r_2}\in\SN$ is easy. However, in System I and PSI
this property (Lemma~\ref{lem:prodOfSN}) is harder to prove, as it requires a
characterisation of the terms equivalent to the product $ \pair{r_1}{r_2}$
(Lemma~\ref{lem:eqProd}) and of all the reducts of this term
(Lemma~\ref{lem:reProd}).

\begin{lemma} \label{lem:eqProd}
  For all $r, s, t$ such that $\pair{r}{s}\eq^*t$, we have either
  \begin{enumerate}
  \item\label{it:eqProd-sum} $t=\pair{u}{v}$ where either
    \begin{enumerate}
    \item $u\eq^*\pair{t_{11}}{t_{21}}$ and $v\eq^*       \pair{t_{12}}{t_{22}}$ with $r\eq^*\pair{t_{11}}{t_{12}}$ and
      $s\eq^*\pair{t_{21}}{t_{22}}$, or
    \item $v\eq^*\pair{w}{s}$ with $r\eq^*\pair{u}{w}$, or
      any of the three symmetric cases, or
    \item $r\eq^*u$ and $s\eq^*v$, or the symmetric case.
    \end{enumerate}
  \item\label{it:eqProd-lam} $t=\lambda x^A.a$ and $a\eq^*\pair{a_1}{a_2}$ with $r\eq^*\lambda x^A.a_1$ and $s\eq^*\lambda x^A.a_2$.
  \item\label{it:eqProd-app} $t=av$ and $a\eq^*\pair{a_1}{a_2}$,
    with $r\eq^*a_1v$ and $s\eq^*a_2v$.
  \item\label{it:eqProd-tlam} $t=\Lambda X.a$ and $a\eq^*\pair{a_1}{a_2}$ with $r\eq^*\Lambda X.a_1$ and $s\eq^*\Lambda X.a_2$.
  \item\label{it:eqProd-tapp} $t=\tapp{a}{A}$ and $a\eq^*\pair{a_1}{a_2}$, with $r\eq^*a_1[A]$ and $s\eq^*a_2[A]$.
  \end{enumerate}
\end{lemma}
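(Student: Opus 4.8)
The plan is to proceed by well-founded induction on the lexicographic pair $(M(\pair{r}{s}),n)$, where $n$ is the length of the derivation $\pair{r}{s}\eq^* t$; here I use that by Lemma~\ref{lem:eqM} the measure $M$ is constant along any $\eq^*$-derivation, so $M(\pair{r}{s})=M(t)$ and $M$ is a stable primary measure. The base case $n=0$ gives $t=\pair{r}{s}$, which is item~\ref{it:eqProd-sum} with $u=r$, $v=s$ in its third sub-case. For the inductive step I factor the derivation as $\pair{r}{s}\eq^* t'\eq t$, apply the induction hypothesis to $t'$ (same $M$, length $n-1$), and analyse the single final step $t'\eq t$ according to whether it is a contextual (congruence) step or a root rewrite.

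The contextual steps are routine: rewriting a subterm $c$ of $t'$ to an $\eq$-equivalent $c'$ leaves the head constructor unchanged, so $t$ stays in the same case as $t'$ and I merely update the existential witnesses by composing $c\eq c'$ with transitivity and the congruence rules of $\eq^*$. The root rewrites split into two kinds. The first kind keeps a pair a pair, namely the commutativity and associativity rules applied to $t'=\pair{u}{v}$ from item~\ref{it:eqProd-sum}; here I check that each sub-case of item~\ref{it:eqProd-sum} is sent to another (possibly symmetric) sub-case, which is exactly why the statement bundles the three sub-cases together with their symmetric variants — they are designed to be closed under swapping and re-bracketing of components. The second kind changes the head symbol through a distributivity or commutation rule: e.g. the backward direction of \eqref{si:eq:dist-abs} sends $\pair{\lambda x^A.c}{\lambda x^A.d}$ to $\lambda x^A.\pair{c}{d}$, a transition between item~\ref{it:eqProd-sum} and item~\ref{it:eqProd-lam}, and its forward direction does the reverse; the rules \eqref{si:eq:dist-app}, \eqref{spdistii} and \eqref{spdistei} link item~\ref{it:eqProd-sum} with items~\ref{it:eqProd-app}, \ref{it:eqProd-tlam} and~\ref{it:eqProd-tapp} in the same way, while the curry and parametric-commutation rules only permute structure inside items~\ref{it:eqProd-app}--\ref{it:eqProd-tapp}.

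The hard part is precisely the head-changing rewrites, because re-establishing the ``body is equivalent to a pair'' clause of items~\ref{it:eqProd-lam}--\ref{it:eqProd-tapp} (or, in the opposite direction, selecting the correct sub-case of item~\ref{it:eqProd-sum}) requires knowing how the two bodies are related, which is itself an instance of the lemma on strictly smaller terms. For example, passing from $t'=\lambda x^A.\pair{c}{d}$ in item~\ref{it:eqProd-lam} to $t=\pair{\lambda x^A.c}{\lambda x^A.d}$, the induction hypothesis yields $b_1,b_2$ with $\pair{c}{d}\eq^*\pair{b_1}{b_2}$, $r\eq^*\lambda x^A.b_1$ and $s\eq^*\lambda x^A.b_2$; to produce the right sub-case of item~\ref{it:eqProd-sum} I apply Lemma~\ref{lem:eqProd} recursively to $\pair{c}{d}\eq^*\pair{b_1}{b_2}$ and lift the resulting decomposition through the abstraction. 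This recursive appeal is legitimate because $M(\pair{c}{d})=M(\pair{b_1}{b_2})<M(\lambda x^A.\pair{c}{d})=M(\pair{r}{s})$, using Lemma~\ref{lem:eqM} and the strict inequality $M(\lambda x^A.\pair{c}{d})>M(\pair{c}{d})$ from Lemma~\ref{lem:newlemma}, so the primary component of the measure strictly decreases. The remaining head-changing transitions for application, type abstraction and type application follow the same pattern, and it is the bookkeeping of the sub-cases of item~\ref{it:eqProd-sum} under these liftings that accounts for most of the length of the argument.
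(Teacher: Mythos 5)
Your strategy is the paper's own: the lexicographic induction on $(M(\pair{r}{s}),n)$ is exactly the paper's double induction (first on $M(t)$, then on the length of the derivation, the two primary measures coinciding by Lemma~\ref{lem:eqM}), and your key device --- re-invoking the lemma on the strictly smaller inner pair whenever a root step changes the head constructor, with the decrease justified by Lemmas~\ref{lem:eqM} and~\ref{lem:newlemma} --- is precisely how the paper closes those cases; your worked example for \eqref{si:eq:dist-abs} matches the paper's treatment of that transition essentially verbatim.

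However, your taxonomy of the root rewrites has two concrete defects, and since this lemma is nothing but an exhaustive case analysis, they are genuine gaps. First, the claim that ``the curry and parametric-commutation rules only permute structure inside items~\ref{it:eqProd-app}--\ref{it:eqProd-tapp}'' is false: rule \eqref{spcommii} relates $\Lambda X.\lambda x^A.a$ and $\lambda x^A.\Lambda X.a$, i.e.\ it moves between item~\ref{it:eqProd-tlam} and item~\ref{it:eqProd-lam}, and rule \eqref{spcommei} relates $(\lambda x^A.a)[B]$ and $\lambda x^A.(a[B])$, i.e.\ between item~\ref{it:eqProd-tapp} and item~\ref{it:eqProd-lam}. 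Followed literally, your classification would miss these transitions when analysing a $t'$ in item~\ref{it:eqProd-lam} (or in items~\ref{it:eqProd-tlam} and~\ref{it:eqProd-tapp}); the paper handles them in its cases for $\lambda$- and $\Lambda$-headed $t'$ by the same recursive appeal you describe, so the repair is routine but necessary. Second, you never consider the projection rules \eqref{spdistie} and \eqref{spdistee}: from $t'=\Lambda X.\pi_A(a)$ (item~\ref{it:eqProd-tlam}) or $t'=(\pi_{\forall X.B}(a))[A]$ (item~\ref{it:eqProd-tapp}) these rules produce a term headed by $\pi$, a shape covered by \emph{none} of the five cases of the statement. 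The proof must show that these transitions cannot arise, i.e.\ that a projection is never $\eq^*$-equivalent to a pair; this vacuity is itself obtained from the induction hypothesis at smaller measure (the five possible shapes of a term equivalent to a pair exclude projections), and the paper states it explicitly (``this case is absurd''). Without this argument your case analysis is not exhaustive, even though your own machinery suffices to supply it.
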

\begin{proof}
  By a double induction, first on $M(t)$ and then on the length of the relation
  $\eq^*$.
  Full details are given in the technical appendix at~\cite{arXiv}.
\end{proof}

\begin{lemma}
  \label{lem:reProd}
  For all $r_1, r_2, s, t$ such that $\pair{r_1}{r_2}\eq^* s\re t$, there exists
  $ u_1, u_2$ such that $ t\eq^*  \pair{u_1}{u_2}$ and either
  \begin{enumerate}
  \item $ r_1\toreq   u_1$ and $ r_2\toreq u_2$, 
  \item $ r_1\toreq u_1$ and $   r_2\eq^* u_2$, or
  \item $ r_1\eq^* u_1$ and $ r_2\toreq u_2$.
  \end{enumerate}
\end{lemma}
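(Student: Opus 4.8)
The plan is to first use Lemma~\ref{lem:eqProd} to pin down the shape of $s$, and then to analyse, for each shape, where the single $\re$-step from $s$ to $t$ is contracted. In every case I will produce the witnesses $u_1,u_2$ by rebuilding a pair out of $t$ with the distributivity equivalences of Table~\ref{tab:SemOp}, and I will propagate the reduction or equivalence of each component back to $r_1,r_2$ through the contextual closures of $\re$ and $\eq$. The overall argument is an induction on the measure $M(s)$: by Lemma~\ref{lem:eqM} it is invariant under $\eq$, so $M(s)=M(\pair{r_1}{r_2})$, and by Lemma~\ref{lem:newlemma} it strictly decreases on every proper subterm, which is exactly what legitimises the recursive appeals below.

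When $s=\abs{x^A}{a}$, Lemma~\ref{lem:eqProd} additionally gives $a\eq^*\pair{a_1}{a_2}$ with $r_1\eq^*\abs{x^A}{a_1}$ and $r_2\eq^*\abs{x^A}{a_2}$. A $\lambda$-abstraction has no head redex, so the step is $a\re a'$ with $t=\abs{x^A}{a'}$; applying the induction hypothesis to $\pair{a_1}{a_2}\eq^* a\re a'$ yields $a'\eq^*\pair{a_1'}{a_2'}$ together with the relation of each $a_i$ to $a_i'$. Using $\abs{x^A}{\pair{a_1'}{a_2'}}\eq\pair{\abs{x^A}{a_1'}}{\abs{x^A}{a_2'}}$ I take $u_i=\abs{x^A}{a_i'}$, and the link between $r_i$ and $u_i$ follows by pushing the step (resp.\ the equivalence) through the abstraction. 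The shapes $s=\Lambda X.a$, $s=av$ and $s=\tapp{a}{A}$ are treated the same way, using respectively the distributivity equivalences for $\Lambda$, for application, and for type application; the only genuinely new feature is a head redex. If $s=av$ contracts a $\beta_\lambda$-redex then $a=\abs{x^B}{c}$, and Lemma~\ref{lem:eqProd} applied to $\pair{a_1}{a_2}\eq^* a$ gives $c\eq^*\pair{c_1}{c_2}$ with $a_i\eq^*\abs{x^B}{c_i}$; since $\eq$ is stable under substitution, $t=\substa{x}{v}{c}\eq^*\pair{\substa{x}{v}{c_1}}{\substa{x}{v}{c_2}}$, and $r_i\eq^* a_iv\eq^*(\abs{x^B}{c_i})v\re\substa{x}{v}{c_i}$ gives $r_i\toreq u_i$ for both $i$, i.e.\ case~1. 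The head $\beta_\Lambda$-redex of $s=\tapp{a}{A}$ is symmetric, and a step inside the argument $v$ of $s=av$ likewise makes both components reduce.

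The genuinely delicate shape is $s=\pair{u}{v}$, where the step must lie inside $u$ or $v$, say $u\re u'$ and $t=\pair{u'}{v}$. Lemma~\ref{lem:eqProd} offers three ways in which $\pair{u}{v}$ can present $\pair{r_1}{r_2}$, and these carry the bulk of the work. In the simplest one, $r_1\eq^* u$ and $r_2\eq^* v$, so $r_1\eq^* u\re u'$ gives $r_1\toreq u'$ and I take $u_1=u'$, $u_2=v$, which is case~2. In the associativity case, where e.g.\ $r_1\eq^*\pair{u}{w}$ and $v\eq^*\pair{w}{r_2}$, I rebuild $t=\pair{u'}{v}\eq^*\pair{\pair{u'}{w}}{r_2}$ by associativity and commutativity and take $u_1=\pair{u'}{w}$, $u_2=r_2$, with $r_1\eq^*\pair{u}{w}\re\pair{u'}{w}=u_1$. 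The hardest sub-case is the ``two-by-two'' one, $u\eq^*\pair{t_{11}}{t_{21}}$, $v\eq^*\pair{t_{12}}{t_{22}}$, $r_1\eq^*\pair{t_{11}}{t_{12}}$, $r_2\eq^*\pair{t_{21}}{t_{22}}$: since $M(u)<M(s)$ I apply the induction hypothesis to $\pair{t_{11}}{t_{21}}\eq^* u\re u'$, obtaining $u'\eq^*\pair{u_1''}{u_2''}$ with each $t_{i1}$ related to $u_i''$, and then recombine by commutativity and associativity into $t\eq^*\pair{\pair{u_1''}{t_{12}}}{\pair{u_2''}{t_{22}}}$, taking $u_1=\pair{u_1''}{t_{12}}$ and $u_2=\pair{u_2''}{t_{22}}$. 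The precise disjunct (case 1, 2 or 3) is inherited from the induction hypothesis, and the presence of a genuine $\re$-step inside $u$ guarantees that at least one component really reduces. The remaining possibilities (redex in $v$, and the mirror images of the associativity and two-by-two cases) follow by the same manipulations. The part that demands the most care is precisely this bookkeeping: correctly tracking which of $u_1,u_2$ carries a $\toreq$ and which only a $\eq^*$, while repeatedly normalising the nesting of pairs.
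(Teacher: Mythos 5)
Your proof is correct and follows essentially the same route as the paper's own: induction on the $\eq$-invariant measure $M$, use of Lemma~\ref{lem:eqProd} to fix the shape of $s$, a case analysis on where the $\re$-step is contracted (head $\beta_\lambda$/$\beta_\Lambda$ redexes handled via stability of $\eq$ under substitution, subterm steps handled by pushing through the distributivity equivalences), and recursive appeals to the induction hypothesis in the nested-pair sub-cases. The only cosmetic difference is that you induct on $M(s)$ where the paper inducts on $M(\pair{r_1}{r_2})$; these coincide by Lemma~\ref{lem:eqM}, so the arguments are the same.
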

\begin{proof}
  By induction on $M(  \pair{r_1}{r_2})$.
  Full details are given in the technical appendix at~\cite{arXiv}.
\end{proof}

\begin{lemma}\label{lem:prodOfSN}
  For all $r_1, r_2$ such that $r_1\in\SN$ and $r_2\in\SN$, we have $\pair{r_1}{r_2}\in\SN$.
\end{lemma}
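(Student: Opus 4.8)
The plan is to prove the statement by induction on $|r_1|+|r_2|$, which is a well-defined natural number precisely because $r_1,r_2\in\SN$. I would invoke the standard characterisation of strong normalisation already recalled in the introduction to this section: a term is in $\SN$ exactly when all of its one-step $\to$-reducts are. Hence it suffices to show that every $t$ with $\pair{r_1}{r_2}\to t$ lies in $\SN$.

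Before the main argument I would record one preliminary observation, namely that the equivalence $\eq^*$ preserves both membership in $\SN$ and the value $|\cdot|$. Indeed, since $\eq$ is symmetric and $\to\ =\ \eq^*\circ\re\circ\eq^*$, whenever $a\eq^* b$ the terms $a$ and $b$ have exactly the same $\to$-reducts; consequently $a\in\SN$ iff $b\in\SN$, and $|a|=|b|$. This is what allows me to treat a component that has moved only by $\eq^*$ as effectively unchanged for the purpose of the measure.

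For the inductive step, take any $t$ with $\pair{r_1}{r_2}\to t$, that is $\pair{r_1}{r_2}\eq^* s\re s'\eq^* t$ by the definition of $\to$. Applying Lemma~\ref{lem:reProd} to the prefix $\pair{r_1}{r_2}\eq^* s\re s'$ yields $u_1,u_2$ with $s'\eq^*\pair{u_1}{u_2}$, hence $t\eq^*\pair{u_1}{u_2}$, together with one of its three cases. In case~1 we have $r_1\to u_1$ and $r_2\to u_2$, so $u_1,u_2\in\SN$ as reducts of $\SN$ terms, with $|u_1|<|r_1|$ and $|u_2|<|r_2|$. In case~2 we have $r_1\to u_1$ and $r_2\eq^* u_2$, so $u_1\in\SN$ with $|u_1|<|r_1|$, while $u_2\in\SN$ with $|u_2|=|r_2|$ by the preliminary observation; case~3 is symmetric. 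In every case $u_1,u_2\in\SN$ and $|u_1|+|u_2|<|r_1|+|r_2|$, so the induction hypothesis gives $\pair{u_1}{u_2}\in\SN$, and since $t\eq^*\pair{u_1}{u_2}$ the observation gives $t\in\SN$. As $t$ was an arbitrary one-step $\to$-reduct, we conclude $\pair{r_1}{r_2}\in\SN$.

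The hard part is entirely outsourced to Lemmas~\ref{lem:eqProd} and~\ref{lem:reProd}, which shoulder the intricate way pairs interact with the equivalence---a pair can first turn into a $\lambda$-abstraction, an application, a type abstraction, or a type application before it actually $\re$-reduces. Granting those, the only genuine care needed in this proof is checking that the measure $|r_1|+|r_2|$ strictly decreases in each of the three cases, and in particular in cases~2 and~3, where one component moves only by $\eq^*$ and the strict decrease of the sum rests on the invariance $|u_i|=|r_i|$ under $\eq^*$ established in the preliminary observation.
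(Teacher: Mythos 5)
Your proof is correct and takes essentially the same route as the paper's: both rest entirely on Lemma~\ref{lem:reProd} to decompose any $\to$-step from (a term equivalent to) $\pair{r_1}{r_2}$ into steps on its components, and then conclude by well-foundedness. The paper phrases this as extracting from an infinite reduction sequence one starting from $r_1$, from $r_2$, or from both, whereas you do induction on $|r_1|+|r_2|$ supported by the (correct) observation that $\eq^*$-equivalent terms have the same $\to$-reducts and hence the same $|\cdot|$; this is merely a different bookkeeping of the same argument.
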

\begin{proof}
  By Lemma~\ref{lem:reProd}, from a reduction sequence starting from $
  \pair{r_1}{r_2}$ we can extract one starting from $r_1$, from $r_2$ or from
  both. Hence, this reduction sequence is finite.
\end{proof}

\subsection{Reducibility}\label{sec:Red}
\begin{definition}
  [Elimination context]
  Consider an extension of the language where we introduce an extra symbol
  $\hole A$, called hole of type $A$. We define the set of elimination contexts with a hole $\hole A$ as
  the smallest set such that:
  \begin{itemize}
  \item $\hole A$ is an elimination context of type $A$,
  \item if $\kk{}{B\Rightarrow C}{A}$ is an elimination context of type $B
    \Rightarrow C$ with a hole of type $A$ and $r\in\SN$ is a term of type $B$, then
    $\kk{}{B\Rightarrow C}{A}r$ is an elimination context of type $C$ with a hole
    of type $A$,
  \item if $\kk{}{B\wedge C}{A}$ is an elimination context of type $B \wedge
    C$ with a hole of type $A$, then $\pi_B(\kk{}{B\wedge C}{A})$ is an
    elimination context of type $B$ with a hole of type $A$.
  \item and if $\kk{}{\tfor{X}{B}}{A}$ is an elimination context of type
    $\tfor{X}{B}$ with a hole of type $A$, then $\kk{}{\tfor{X}{B}}{A}[C]$ is an
    elimination context of type $\substa{X}{C}{B}$ with a hole of type $A$.
  \end{itemize}
  We write $\ka{}{B}{A}{r}$ for $\subst{\hole{A}}{r}{\kk{}{B}{A}}$, where $\hole A$ is the hole of
  $\kk{}{B}{A}$. In particular, $r$ may be an elimination context.
\end{definition}
\begin{example}
  Let
  \begin{align*}
    \kk{}{{X}}{{X}}&=\hole{X},\\
    \kk{'}{{X}}{{{X}\Rightarrow}({X}\wedge{X})}&=\ka{}{{X}}{{X}}{\pi_{X}(\hole{{X}\Rightarrow({X}\wedge{X})}x)},\\
    \kk{''}{{X}}{{\tfor{X}{\timpl{X}{}(\tconj{X}{X})}}} &= \ka{'}{{X}}{{{X}\Rightarrow({X}\wedge{X})}}{\tapp{\hole{\tfor{X}{\timpl{X}{(\tconj{X}{X})}}}}{X}}\\ &=\ka{}{{X}}{{X}}{\pi_{X}(\tapp{\hole{\tfor{X}{\timpl{X}{(\tconj{X}{X})}}}}{X}x)}.
  \end{align*}
  
  Then,
  \[
    \ka{''}{{X}}{{\tfor{X}{\timpl{X}{(\tconj{X}{X})}}}}{\Lambda X.\lambda
      y^{X}.\pair{y}{y}}=\pi_{X}((\Lambda X.\lambda y^{X}.\pair{y}{y})[X]x).
  \]
\end{example}

\begin{definition}
  [Terms occurring in an elimination context]
  Let $\kk{}{A}{B}$ be an elimination context.
  The multiset of terms occurring in $\kk{}{A}{B}$ is defined as
  \begin{align*}
    \T(\hole A) &= \emptyset\\
    \T(\kk{}{{B\Rightarrow C}}{A}r) &= \T(\kk{}{{B\Rightarrow C}}{A}) \uplus \{r\}\\
    \T(\pi_B(\kk{}{{B\wedge C}}{A})) &=\T(\kk{}{{B\wedge C}}{A})\\
    \T(\tapp{\kk{}{{\tfor{X}{B}}}{A}}{C}) &= \T(\kk{}{{\tfor{X}{B}}}{A})
  \end{align*}
  We write $|\kk{}{B}{A}|$ for $\sum_{i=1}^n|r_i|$ where $[r_1,\dots,r_n]=\T(\kk{}{B}{A})$.
\end{definition}

\begin{example}
  We have that $\T(\hole Ars) = [r,s]$ and that $\T(\hole A\pair{r}{s}) = [\pair{r}{s}]$.
Remark that $\ka{}{B}{A}{t}\eq^* \ka{'}{B}{A}{t}$ does not imply $\T(\kk{}{B}{A})\sim{\mathcal
  T}(\kk{'}{B}{A})$.
\end{example}

\begin{definition}[Reducibility]\label{def:interpretation}
  The set of reducible terms of type $A$ (notation $\interp A$) is defined as
  the set of terms $r$ of type $A$ such that for any elimination context
  $\kk{}{X}{A}$ where all the terms in $\T(\kk{}{X}{A})$ are in $\SN$, we have
  $\ka{}{X}{A}{r}\in\SN$.
\end{definition}

The following lemma is a trivial consequence of the definition of reducibility.
\begin{lemma}\label{lem:eqInterp}
  For all $A, B$ such that $A\equiv B$, we have $\interp A=\interp B$.
  \qed
\end{lemma}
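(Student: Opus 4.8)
The plan is to show that membership in $\interp A$ and membership in $\interp B$ are governed by literally the same condition, once we observe that the equivalence $A\equiv B$ affects neither the class of terms being quantified over nor the class of elimination contexts. Recall that $\interp A$ consists of the terms $r$ of type $A$ such that $\ka{}{X}{A}{r}\in\SN$ for every elimination context $\kk{}{X}{A}$ whose occurring terms are all in $\SN$. I would therefore split the argument into two observations: first, that the terms of type $A$ coincide with the terms of type $B$; and second, that the elimination contexts with a hole of type $A$ coincide with those with a hole of type $B$, and that filling corresponding contexts with the same $r$ yields the same term and the same multiset $\T$ of occurring terms.

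For the first observation, the rule $(\equiv)$ of Table~\ref{fig:typing} gives that $\Gamma\vdash r:A$ implies $\Gamma\vdash r:B$ whenever $A\equiv B$, and symmetrically; hence a term has type $A$ if and only if it has type $B$. For the second, I would proceed by induction on the construction of the elimination context, noting that the hole type enters the definition only through the base case $\hole A$, where its sole effect is to require the filled term to have type $A$, equivalently type $B$. At every extension step (application, projection, type application) the admissibility of the step depends only on the type of the context read up to $\equiv$, which is unchanged when we replace $A$ by the equivalent $B$. Moreover, the filling operation substitutes $r$ for the hole and erases its annotation, so corresponding contexts with holes of type $A$ and of type $B$ produce the same term $r\alpha_1\cdots\alpha_n$ and hence the same $\T$.

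Combining the two observations, for a fixed term $r$ the quantification defining membership in $\interp A$ and the one defining membership in $\interp B$ range over the same contexts (up to the harmless relabelling of the hole annotation) and impose the same requirement, namely that the common filled term $r\alpha_1\cdots\alpha_n$ lie in $\SN$. Hence $r\in\interp A$ if and only if $r\in\interp B$, giving $\interp A=\interp B$.

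The only real subtlety, and the reason the lemma is not entirely vacuous, is making precise the reading of ``an elimination context of type $C$'' as a statement closed under $\equiv$ (justified by the rule $(\equiv)$ applied to the term obtained after filling), so that the inductive step genuinely does not distinguish $A$ from $B$. Once this reading is fixed, the equality is immediate, as the statement anticipates.
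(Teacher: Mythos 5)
Your proposal is correct and matches the paper's intent: the paper gives no written proof, stating only that the lemma is ``a trivial consequence of the definition of reducibility,'' and your argument is precisely the unpacking of that triviality --- typing of terms is closed under $\equiv$ by rule $(\equiv)$, and the hole annotation of an elimination context affects neither the filled term nor $\T$, so the defining condition for $\interp A$ and $\interp B$ is literally the same. Your closing remark about reading ``elimination context of type $C$'' modulo $\equiv$ is exactly the implicit convention the paper relies on (e.g.\ in the proof of Lemma~\ref{lem:adequacyOfProd}), so nothing is missing.
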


\begin{lemma}
  \label{lem:CR1}
  For all $A$, $\interp A\subseteq\SN$.
\end{lemma}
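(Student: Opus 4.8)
The plan is to mimic the proof of the analogous statement for System~F (Lemma~\ref{lem:CR1SF}). Given $r \in \interp A$, it suffices to exhibit a single elimination context $\kk{}{X}{A}$ of type a variable $X$ with a hole of type $A$, all of whose occurring terms lie in $\SN$; then the definition of reducibility forces $\ka{}{X}{A}{r} \in \SN$, and since $r$ sits inside this context, $r \in \SN$ will follow.

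The key step is to show that for every type $A$ such an eliminating context exists. I would argue by induction on the structure of $A$ (equivalently, reading off the prime-factor decomposition of Lemma~\ref{lem:correctnessOne}). If $A = X$ is a variable, the bare hole $\hole{X}$ already works and mentions no terms. If $A = \timpl{B}{C}$, I apply the hole to a fresh variable $z^{B}$, which lies in $\SN$, obtaining a context of type $C$, and recurse on $C$. If $A = \tconj{B}{C}$, I apply $\pi_{B}$ to reach type $B$ and recurse on $B$. If $A = \tfor{X}{B}$, I perform a type application $[Z]$ for a fresh $Z$, reaching $\substa{X}{Z}{B}$, which has the same structure as $B$ and is therefore structurally smaller than $A$, and recurse. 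In every case the only terms introduced are variables, so $\T(\kk{}{X}{A})$ consists solely of $\SN$-terms and the context qualifies in Definition~\ref{def:interpretation}.

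Finally, from $\ka{}{X}{A}{r} \in \SN$ I would conclude $r \in \SN$ using the contextual-closure rules for $\re$ and $\eq$ in Table~\ref{tab:SemOp}: application, projection and type application are all congruences for $\toreq$, so any infinite $\toreq$-sequence issued from $r$ would lift to one issued from $\ka{}{X}{A}{r}$, contradicting its membership in $\SN$.

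I expect the only point requiring care to be the existence of the eliminating context: in the quantifier case one must note that substituting a variable for a variable does not increase the structural size, so the induction is well-founded; the rest is routine, exactly as in Lemma~\ref{lem:CR1SF}.
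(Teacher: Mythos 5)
Your proof is correct and takes essentially the same approach as the paper: exhibit, for each type $A$, one elimination context $\kk{}{X}{A}$ whose occurring terms are all variables (hence in $\SN$), then invoke the definition of $\interp A$ and lift reductions out of the hole by the congruence rules. The paper's proof is just a compressed version of this, leaving implicit both the structural induction (including your correct remark about variable-for-variable substitution in the $\forall$ case) and the final lifting argument.
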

\begin{proof}
  For all $A$, there exists an elimination context \kk{}{X}{A}, since variables
  are in $\SN$ and they can have any type. Hence, given that if $r\in\interp A$ then
  $\ka{}{X}{A}{r} \in \SN$, we have $r \in \SN$.
\end{proof}

\subsection{Adequacy}\label{sec:Adequacy}
We finally prove the adequacy theorem (Theorem~\ref{thm:adequacy}) showing that
every typed term is reducible, and the strong normalisation theorem
(Theorem~\ref{thm:SN}) as a consequence of it.

\begin{lemma}[Adequacy of variables]
  \label{lem:var}
  For all $A$ and $x^A$, we have $x^A\in\interp A$.
\end{lemma}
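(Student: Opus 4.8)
The plan is to unfold Definition~\ref{def:interpretation}: since $x^A$ has type $A$, it suffices to show that for every elimination context $\kk{}{X}{A}$ all of whose occurring terms $\T(\kk{}{X}{A})$ are in $\SN$, the filled term $\ka{}{X}{A}{x^A}$ is in $\SN$. Unlike the System~F case (Lemma~\ref{lem:varSF}), where this was immediate, here the equivalence $\eq$ can reshape the term, so the first step is to record the structural invariant that makes the argument go through: $\ka{}{X}{A}{x^A}$ is the variable $x^A$ sitting at the foot of a spine of eliminations---applications to $\SN$ terms, projections, and type applications---and along this spine no redex is ever exposed. Indeed a $\beta_\lambda$, $\beta_\Lambda$ or $\pi$ step would require respectively an abstraction, a type abstraction, or a pair at the corresponding elimination site, whereas every site is occupied by a further elimination bottoming out at $x^A$, never an introduction.

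Next I would show that this variable-headed shape is stable under $\eq$. Inspecting Table~\ref{tab:SemOp}, every equivalence rule except (\ref{si:eq:curry}) and (\ref{spdistee}) requires an introduction ($\lambda$, $\Lambda$, or a pair) at the rewritten position, so on a variable-headed spine it can only fire strictly inside one of the argument terms. Rule (\ref{spdistee}) merely permutes a projection past a type application along the spine, leaving $\T$ untouched, and rule (\ref{si:eq:curry}) only regroups consecutive application arguments, replacing an argument $\pair{s}{t}$ by the two arguments $s,t$ or conversely. None of these turns $x^A$ into an introduction or exposes a redex, so any term obtained from $\ka{}{X}{A}{x^A}$ via $\to=\eq^*\circ\re\circ\eq^*$ is again variable-headed, with all its arguments still in $\SN$ (reducts and pairs of reducts of the original $\SN$ terms).

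To obtain termination I would then equip such terms with the $\eq$-invariant measure $\mu=\sum_{r\in\T(\kk{}{X}{A})}|r|$, well defined because each argument is in $\SN$. Equivalences acting inside an argument replace $r$ by an $\eq$-equivalent term, which has the same longest-reduction length because $\to$ already computes modulo $\eq$; rule (\ref{spdistee}) does not change $\T$; and rule (\ref{si:eq:curry}) preserves the sum because $|\pair{s}{t}|=|s|+|t|$, which follows from the analysis of products in Lemmas~\ref{lem:reProd} and~\ref{lem:prodOfSN}. Since every genuine $\re$ step must take place inside some argument $r$, strictly decreasing $|r|$, each $\to$ step strictly decreases $\mu$; hence no infinite reduction can start from $\ka{}{X}{A}{x^A}$, giving $\ka{}{X}{A}{x^A}\in\SN$ and therefore $x^A\in\interp A$.

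The main obstacle is exactly this $\eq$-invariance and spine-stability bookkeeping. The delicate point is that (\ref{si:eq:curry}) changes the multiset $\T$ of arguments, so one must verify both that it cannot migrate an introduction onto the spine and that it leaves $\mu$ unchanged via $|\pair{s}{t}|=|s|+|t|$; this is the same subtlety that makes the product characterisations (Lemmas~\ref{lem:eqProd} and~\ref{lem:reProd}) and the invariance of the term measure (Lemma~\ref{lem:eqM}) nontrivial, and I would lean on those results rather than redo the product analysis by hand.
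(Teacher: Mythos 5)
Your proof is correct and takes essentially the same route as the paper's: the paper's own argument is precisely the observation that $x$ sits in a position that creates no redex, so the only redexes of $\ka{}{X}{A}{x}$ are those inside the terms of $\T(\kk{}{X}{A})$, which are already in $\SN$. Your spine-stability analysis under $\eq$ and the measure $\mu$ are exactly the bookkeeping the paper leaves implicit (and correctly identify CURRY and P-DIST$_{\forall_e\land_e}$ as the only rules that touch the spine), so this is a more detailed rendering of the same proof rather than a different one.
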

\begin{proof}
  We need to prove that $\ka{}{X}{A}{x} \in \SN$. The term \ka{}{X}{A}{x} has
  the variable $x$ in a position that does not create any redex, hence the only
  redexes are those in $\T(\kk{}{X}{A})$, which are already in $\SN$. Then,
  $\ka{}{X}{A}{x} \in \SN$.
\end{proof}

\begin{lemma}[Adequacy of projection]\label{lem:SNimpliespiSN}
  For all $r, A, B$, such that $r\in\interp{A\wedge B}$, we have
  $\pi_A(r)\in\interp A$.
\end{lemma}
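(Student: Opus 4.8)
The plan is to mimic the structure of the adequacy lemmas for application (Lemma~\ref{lem:appSF}) and type application (Lemma~\ref{lem:tappSF}). Unfolding Definition~\ref{def:interpretation}, proving $\proja{A}{r}\in\interp A$ amounts to showing that for every elimination context $\kk{}{X}{A}$ (with hole of type $A$) whose occurring terms $\T(\kk{}{X}{A})$ all lie in $\SN$, we have $\ka{}{X}{A}{\proja{A}{r}}\in\SN$. The key observation is that a projection can itself be absorbed into an elimination context: by the third clause of the definition of elimination contexts, $\proja{A}{\hole{A\wedge B}}$ is an elimination context of type $A$ with a hole of type $A\wedge B$.

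First I would, given an arbitrary $\kk{}{X}{A}$, form the composite context $\kk{'}{X}{A\wedge B}=\ka{}{X}{A}{\proja{A}{\hole{A\wedge B}}}$, that is, fill the hole of $\kk{}{X}{A}$ with the one-step projection context. Since plugging an elimination context of type $A$ (with hole of type $A\wedge B$) into a hole of type $A$ again yields an elimination context, $\kk{'}{X}{A\wedge B}$ is a valid elimination context of type $X$ with a hole of type $A\wedge B$. Next I would note that the projection constructor contributes nothing to the multiset of occurring terms, so $\T(\kk{'}{X}{A\wedge B})=\T(\kk{}{X}{A})$; in particular all of its terms are in $\SN$.

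Then, since $r\in\interp{A\wedge B}$, instantiating Definition~\ref{def:interpretation} with the context $\kk{'}{X}{A\wedge B}$ gives $\ka{'}{X}{A\wedge B}{r}\in\SN$. Finally, by associativity of hole-filling, $\ka{'}{X}{A\wedge B}{r}$ is syntactically $\ka{}{X}{A}{\proja{A}{r}}$: plugging $r$ into the inner hole of $\proja{A}{\hole{A\wedge B}}$ produces $\proja{A}{r}$, which then sits in the hole of $\kk{}{X}{A}$. Hence $\ka{}{X}{A}{\proja{A}{r}}\in\SN$, as required.

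I expect no genuine difficulty here, as the argument is the projection analogue of Lemmas~\ref{lem:appSF} and~\ref{lem:tappSF}. The only point requiring a little care is the bookkeeping of types and holes in the composition step, namely confirming that the third clause of the elimination-context definition lets us prepend $\proja{A}{-}$, and that the occurring-terms multiset is genuinely unchanged by this constructor; both are immediate from the definitions of elimination contexts and of $\T$.
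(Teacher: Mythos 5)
Your proof is correct and is essentially the paper's own argument: both form the composite context $\kk{'}{X}{{A\wedge B}}=\ka{}{X}{A}{\pi_A(\hole{A\wedge B})}$ and instantiate the reducibility of $r$ at that context to conclude $\ka{}{X}{A}{\pi_A(r)}\in\SN$. Your extra bookkeeping (that $\T$ is unchanged by the projection constructor and that composition of contexts is again an elimination context) only makes explicit what the paper leaves implicit.
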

\begin{proof}
  We need to prove that $\ka{}{X}{A}{{\pi_A(r)}}\in\SN$. Take
  $\kk{'}{X}{{A\wedge B}}= \ka{}{X}{A}{{\pi_A(\hole{A\wedge B})}}$, and since
  $r\in\interp{A\wedge B}$,
  we have $\ka{'}{X}{A\wedge B}{{r}}=\ka{}{X}{A}{{\pi_A(r)}}\in\SN$.
\end{proof}

\begin{lemma}
  [Adequacy of application]\label{lem:appOfInterp}
  For all $r, s, A, B$ such that $r\in\interp{A\Rightarrow B}$ and $s\in\interp
  A$, we have $rs\in\interp B$.
\end{lemma}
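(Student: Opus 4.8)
The plan is to unfold Definition~\ref{def:interpretation} and reduce the goal to a statement about a single elimination context, exactly as in the System~F argument of Lemma~\ref{lem:appSF}. To prove $rs\in\interp B$, I would fix an arbitrary elimination context $\kk{}{X}{B}$ all of whose occurring terms are in $\SN$, and show that $\ka{}{X}{B}{rs}\in\SN$.

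The key step is to absorb the application to $s$ into a larger elimination context whose hole has type $\timpl{A}{B}$. Since $s\in\interp A$, Lemma~\ref{lem:CR1} gives $s\in\SN$, and as $s$ has type $A$ the application clause of the elimination-context definition lets me form $\kk{'}{X}{{\timpl{A}{B}}}=\ka{}{X}{B}{\hole{\timpl{A}{B}}s}$, an elimination context of type $X$ with a hole of type $\timpl{A}{B}$. Its occurring terms are precisely those of $\T(\kk{}{X}{B})$ together with $s$, all of which are in $\SN$; hence $\kk{'}{X}{{\timpl{A}{B}}}$ is an admissible context for testing the reducibility of $r$. Since $r\in\interp{\timpl{A}{B}}$, Definition~\ref{def:interpretation} yields $\ka{'}{X}{{\timpl{A}{B}}}{r}\in\SN$, and by construction $\ka{'}{X}{{\timpl{A}{B}}}{r}=\ka{}{X}{B}{rs}$, which closes the argument.

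I do not expect a genuine obstacle here: this is the same plain-context reasoning used for System~F, and neither the equivalence $\eq$ nor the new projection contexts interfere, because all of that is already absorbed into the definition of $\SN$ (taken with respect to $\to$) and into the grammar of elimination contexts. The only point requiring a little care is the verification that plugging $\hole{\timpl{A}{B}}s$ into $\kk{}{X}{B}$ again produces a legal elimination context and that the multiset of its occurring terms is the disjoint union of the two pieces; this is immediate from the fact, noted just after the definition, that every elimination context has the shape $\hole{A}\,\alpha_1\cdots\alpha_n$.
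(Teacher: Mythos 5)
Your proof is correct and takes essentially the same approach as the paper's: form $\kk{'}{X}{{A\Rightarrow B}}=\ka{}{X}{B}{\hole{A\Rightarrow B}s}$, admissible because $s\in\interp A\subseteq\SN$ (Lemma~\ref{lem:CR1}), and test $r\in\interp{A\Rightarrow B}$ against it to conclude $\ka{'}{X}{{A\Rightarrow B}}{r}=\ka{}{X}{B}{rs}\in\SN$. The only slip is your appeal to the shape $\hole{A}\,\alpha_1\cdots\alpha_n$, a remark the paper makes only for System~F contexts (PSI contexts may also contain projections $\pi_C(\cdot)$); this is harmless, since the definition of $\ka{}{B}{A}{\cdot}$ explicitly allows plugging an elimination context into the hole, which is what makes the composition legal.
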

\begin{proof}
  We need to prove that $\ka{}{X}{B}{rs}\in\SN$. Take $\kk{'}{X}{{A\Rightarrow
      B}}=\ka{}{X}{B}{\hole{A\Rightarrow B}s}$, and since $r\in\interp{A\Rightarrow
    B}$,
    we have $\ka{'}{X}{A\Rightarrow B}{{r}}=\ka{}{X}{B}{rs}\in\SN$.
\end{proof}

\begin{lemma}[Adequacy of type application]\label{lem:tapp}
  For all $r$, $X$, $A$, $B$ such that $r\in \interp{\tfor{X}{A}}$, we have
  $\tapp{r}{B} \in \interp{\substa{X}{B}{A}}$.
\end{lemma}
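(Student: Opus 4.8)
The plan is to mirror the System~F argument of Lemma~\ref{lem:tappSF}, since the type-application clause of the elimination-context definition is unchanged in the move to PSI and type arguments contribute nothing to $\T(\cdot)$. By Definition~\ref{def:interpretation}, I must show that for every elimination context $\kk{}{Y}{\substa{X}{B}{A}}$ all of whose occurring terms $\T(\kk{}{Y}{\substa{X}{B}{A}})$ lie in $\SN$, the term $\ka{}{Y}{\substa{X}{B}{A}}{\tapp{r}{B}}$ is in $\SN$.

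First I would absorb the leading type application into the context: set $\kk{'}{Y}{\tfor{X}{A}} = \ka{}{Y}{\substa{X}{B}{A}}{\tapp{\hole{\tfor{X}{A}}}{B}}$, i.e. plug $\tapp{\hole{\tfor{X}{A}}}{B}$ into the hole of the given context. By the type-application clause of the elimination-context definition this is a legitimate elimination context: its hole has type $\tfor{X}{A}$, applying $[B]$ yields a subterm of type $\substa{X}{B}{A}$, which matches the hole type of $\kk{}{Y}{\substa{X}{B}{A}}$ and so composes to give outer type $Y$.

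Next I would check the side condition needed to invoke reducibility of $r$. Since $\T(\tapp{\kk{}{\tfor{X}{B}}{A}}{C}) = \T(\kk{}{\tfor{X}{B}}{A})$, introducing the type argument $[B]$ adds no new occurring terms, so $\T(\kk{'}{Y}{\tfor{X}{A}}) = \T(\kk{}{Y}{\substa{X}{B}{A}})$, and every element of this multiset is in $\SN$ by hypothesis. Thus $r \in \interp{\tfor{X}{A}}$ applies to $\kk{'}{Y}{\tfor{X}{A}}$ and gives $\ka{'}{Y}{\tfor{X}{A}}{r} \in \SN$. Finally, by construction $\ka{'}{Y}{\tfor{X}{A}}{r} = \ka{}{Y}{\substa{X}{B}{A}}{\tapp{r}{B}}$, which is exactly the term required to be in $\SN$.

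I do not expect a genuine obstacle here: unlike the abstraction and pair cases, adequacy of type application needs no induction on a measure and no appeal to Lemma~\ref{lem:prodOfSN} or the equivalence analysis, because the eliminator $[B]$ can be pushed into the context without creating a redex at the hole. The only point requiring a little care is that the substitution $\substa{X}{B}{A}$ be read up to $\alpha$-equivalence, so that the top-level variable $Y$ and any free variables of $B$ do not clash; this is already handled by our convention of working modulo $\alpha$-equivalence.
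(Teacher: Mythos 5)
Your proof is correct and is essentially identical to the paper's own argument: you absorb the type application $[B]$ into the given elimination context to form $\kk{'}{Y}{\tfor{X}{A}}$, observe that $\T(\cdot)$ is unchanged so the $\SN$ side condition carries over, and then invoke $r\in\interp{\tfor{X}{A}}$ to conclude $\ka{}{Y}{\substa{X}{B}{A}}{\tapp{r}{B}}\in\SN$. Nothing further is needed.
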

\begin{proof}
  We need to prove that $\ka{}{Y}{\substa{X}{B}{A}}{{\tapp{r}{B}}}\in\SN$.
  Take
  $\kk{'}{Y}{\tfor{X}{A}}=\ka{}{Y}{\substa{X}{B}{A}}{{\tapp{\hole{\tfor{X}{A}}}{B}}}
  \in \SN$, and since $r\in\interp{\tfor{X}{A}}$,
  we have $\ka{'}{Y}{\tfor{X}{A}}{{r}} = \ka{}{Y}{\substa{X}{B}{A}}{{\tapp{r}{B}}}\in\SN$.
\end{proof}

\begin{lemma}[Adequacy of product]\label{lem:adequacyOfProd}
  For all $r, s, A, B$ such that $r\in\interp A$ and $s\in\interp B$, we have $
  \pair{r}{s}\in\interp{A\wedge B}$.
\end{lemma}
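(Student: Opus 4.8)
The plan is to unfold the definition of reducibility and reduce the goal to a strong normalisation statement about the product $\pair rs$ sitting inside an elimination context, which I then attack by a well-founded induction showing that all one-step reducts are strongly normalising. First I would record the easy ingredients: since $r\in\interp A$ and $s\in\interp B$, Lemma~\ref{lem:CR1} gives $r,s\in\SN$, and Lemma~\ref{lem:prodOfSN} then gives $\pair rs\in\SN$. By Definition~\ref{def:interpretation} it remains to fix an arbitrary elimination context $\kk{}{X}{A\wedge B}$ all of whose occurring terms $\T(\kk{}{X}{A\wedge B})$ are in $\SN$, and to prove $\ka{}{X}{A\wedge B}{\pair rs}\in\SN$.

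Next I would exploit the shape of the context. Because the hole has the conjunction type $A\wedge B$, the innermost elimination applied to it can be neither an application nor a type application, so it must be a projection $\pi_A$. Hence $\kk{}{X}{A\wedge B}=\ka{'}{X}{A}{\pi_A(\hole{A\wedge B})}$ for an elimination context $\kk{'}{X}{A}$ with $\T(\kk{'}{X}{A})=\T(\kk{}{X}{A\wedge B})$, still all in $\SN$, and therefore $\ka{}{X}{A\wedge B}{\pair rs}=\ka{'}{X}{A}{\pi_A(\pair rs)}$. Two facts will drive the argument: $\pi_A(\pair rs)\re r$ (and $\re s$ when $s$ also has type $A$), and $\ka{'}{X}{A}{r}\in\SN$ because $r\in\interp A$ and $\kk{'}{X}{A}$ has all occurring terms in $\SN$ (symmetrically $\ka{'}{X}{A}{s}\in\SN$ when $s:A$, using Lemma~\ref{lem:unicity} to get $A\equiv B$ and Lemma~\ref{lem:eqInterp} to get $\interp A=\interp B$).

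I would then prove $\ka{'}{X}{A}{\pi_A(\pair rs)}\in\SN$ by a well-founded induction on a measure bounding the structural reductions, e.g.\ $|\ka{'}{X}{A}{r}|+|\ka{'}{X}{A}{s}|$ together with $|\kk{'}{X}{A}|$ and $M(\pair rs)$, showing that every one-step $\to$-reduct is in $\SN$. Writing a $\to$-step as $\eq^*\circ\re\circ\eq^*$, the $\re$-redex falls into three kinds: (i) a reduction inside an occurring term of the context or inside $r$ or $s$, which strictly decreases the measure and leaves a term of the same shape, so the induction hypothesis applies; (ii) the firing of the projection, which yields $\ka{'}{X}{A}{r}$ or $\ka{'}{X}{A}{s}$, already known to be in $\SN$; and (iii) a step enabled only after $\pair rs$ has been pushed through the context by the distribution equivalences (rules DIST and P-DIST) and the projection through type abstractions and type applications. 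For case (iii) I would invoke Lemma~\ref{lem:eqProd} to see that any $\eq^*$-rearrangement keeps the relevant subterm in ``projection of a product'' form, and Lemma~\ref{lem:reProd} to see that the ensuing reduct is again equivalent to a projection of a product whose components are reducts of $r$ and $s$, so that it falls back under cases (i)/(ii). Invariance of the measure under $\eq$ (Lemmas~\ref{lem:eqPr} and~\ref{lem:eqM}), and invariance of $\SN$ and of $|\cdot|$ under $\eq$, guarantee that the induction is well founded across the equivalence steps.

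The main obstacle is precisely case (iii): the equivalence relation lets $\pair rs$ migrate upward through the elimination context and interleave with its eliminations, so $\ka{}{X}{A\wedge B}{\pair rs}$ is only a ``projection wrapping a product'' up to $\eq^*$, not syntactically. Controlling all such rearrangements together with their $\re$-reducts is exactly what forces the prior development—the characterisations of the terms equivalent to a product and of the reducts of a product (Lemmas~\ref{lem:eqProd} and~\ref{lem:reProd}) and the measure lemmas—and is the reason this lemma is markedly harder than the adequacy lemmas for the eliminators, where the eliminator could simply be absorbed into the context.
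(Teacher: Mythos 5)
There is a genuine gap, in two places, and both are at the heart of what makes this lemma hard in PSI.

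First, your shape analysis of the elimination context is wrong. You claim that since the hole has type $A\wedge B$, the innermost elimination ``can be neither an application nor a type application, so it must be a projection $\pi_A$''. In PSI types are identified modulo the isomorphisms, so a conjunction can be equivalent to an arrow or to a universal type: by \eqref{iso:dist}, $(C\Rightarrow D)\wedge(C\Rightarrow E)\equiv C\Rightarrow(D\wedge E)$, and by \eqref{iso:pdist}, $\forall X.D\wedge\forall X.E\equiv\forall X.(D\wedge E)$. Hence a hole of type $A\wedge B$ may perfectly well receive term and type arguments before any projection occurs, and the eventual projection is some $\pi_C$ where $C$ need not be $A$ (it can be a sub-conjunct of $A$, or mix pieces of $A$ and $B$). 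The paper's proof accordingly takes the weaker normal form $\ka{'}{X}{C}{\pi_C(\hole{A\wedge B}\alpha_1\dots\alpha_n)}$, with arguments $\alpha_i$ between the hole and the projection, and proves the generalised statement that $\ka{'}{X}{C}{\pi_C(\pair{r'}{s'})}\in\SN$ for \emph{all} reducts $r',s'$ of $r\alpha_1\dots\alpha_n$ and $s\alpha_1\dots\alpha_n$. Your simplification to $\ka{'}{X}{A}{\pi_A(\hole{A\wedge B})}$ erases exactly the interleaving of the pair with the eliminations that makes the lemma nontrivial.

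Second, and more fundamentally, your induction cannot close because it lacks the paper's outer induction on the number of projections in the context. When the $\pi$-redex fires after an $\eq^*$-rearrangement, i.e.\ $\pair{r'}{s'}\eq^*\pair{v}{w}$ and the contractum is $v$, Lemma~\ref{lem:eqProd} shows that $v$ need not be equivalent to $r'$ or to $s'$: it can itself be a product, e.g.\ $v\eq^*\pair{r_1}{s_1}$ with $r'\eq^*\pair{r_1}{r_2}$ and $s'\eq^*\pair{s_1}{s_2}$. The resulting reduct $\ka{'}{X}{C}{v}$, with $v$ a product of two reducible terms, is neither your case (i) nor your case (ii), and your case (iii)'s assertion that everything ``falls back under cases (i)/(ii)'' is precisely false here: to conclude one must apply the statement being proved (adequacy of product) to $v$ inside the context $\kk{'}{X}{C}$. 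The paper makes this recursion well-founded by an outer induction on the number of projections in the elimination context, noting that $\kk{'}{X}{C}$ has one projection fewer than $\kk{}{X}{A\wedge B}$; the inner induction on $|\kk{'}{X}{C}|+|r'|+|s'|$ only handles reductions inside the context terms or inside $r'$, $s'$. Without that outer induction, the product-contractum case of your argument is circular, so the proof as proposed does not go through.
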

\begin{proof}
  We need to prove that $\ka{}{X}{A\wedge B}{{\pair{r}{s}}}\in\SN$. We proceed
  by induction on the number of projections in $\kk{}{X}{{A\wedge B}}$. Since the
  hole of $\kk{}{X}{{A\wedge B}}$ has type $A\wedge B$, and $\ka{}{X}{A\wedge
    B}{{t}}$ has type $X$ for any $t$ of type $A\wedge B$, we can assume, without
  lost of generality, that the context $\kk{}{X}{{A\wedge B}}$ has the form
  $\ka{'}{X}{C}{\pi_C (\hole{A\wedge B}\alpha_1\dots \alpha_n)}$, where each
  $\alpha_i$ is either a term or a type argument. We prove that all
  $\ka{'}{X}{C}{\pi_C (\pair{r\alpha_1\dots \alpha_n}{s\alpha_1\dots
      \alpha_n})}\in\SN$ by showing, more generally, that if $r'$ and $s'$ are two
  reducts of $r\alpha_1\dots \alpha_n$ and $s\alpha_1\dots \alpha_n$, then
  $\ka{'}{X}{C}{\pi_C(\pair{r'}{s'})}\in\SN$. For this, we show that all its one
  step reducts are in $\SN$, by induction on $|\kk{'}{X}{C}| + |r'|+|s'|$. The
  full details are given in the technical appendix at~\cite{arXiv}.
\end{proof}

\begin{lemma}[Adequacy of abstraction]\label{lem:lamOfInterp}
  For all $t, r, x, A, B$ such that $t \in \interp{A}$ and $\substa{x}{t}{r} \in
  \interp{B}$, we have $\lambda x^A.r \in \interp{A \Rightarrow B}$. 
\end{lemma}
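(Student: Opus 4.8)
My plan is to follow the System~F reformulation (Lemma~\ref{lem:lambdaSF}) but to replace the plain term size, which is not stable under $\eq$, by the $\eq$-invariant measure $M(\cdot)$ of Section~\ref{sec:MT} (Lemma~\ref{lem:eqM}). First I would analyse the shape of an arbitrary elimination context $\kk{}{X}{A\Rightarrow B}$ all of whose occurring terms are in $\SN$. Since the hole has the arrow type $A\Rightarrow B$, neither projection (which needs a conjunction) nor type application (which needs a quantifier) can be applied to it, so the innermost construction must be an application: every such context factors as $\ka{}{X}{A\Rightarrow B}{z}=\ka{'}{X}{B}{zs}$ for some $s\in\SN$ of type $A$ and some elimination context $\kk{'}{X}{B}$ whose terms are again in $\SN$. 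Hence it suffices to prove $\ka{'}{X}{B}{(\abs{x^A}{r})s}\in\SN$. From the hypotheses, $t\in\interp A\subseteq\SN$ and $\substa{x}{t}{r}\in\interp B\subseteq\SN$ by Lemma~\ref{lem:CR1}; lifting any reduction of $r$ to one of $\substa{x}{t}{r}$ shows $r\in\SN$, and $s$ together with all terms of $\kk{'}{X}{B}$ are in $\SN$ as well.

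I would then prove $\ka{'}{X}{B}{(\abs{x^A}{r})s}\in\SN$ by showing that all of its one-step $\toreq$-reducts are in $\SN$ and concluding, as usual, that a term whose every reduct is strongly normalising is itself strongly normalising. Throughout I would use that each $\interp{C}$ is closed under $\toreq$: this is immediate from Definition~\ref{def:interpretation} together with subject reduction (Theorem~\ref{subjectreduction}), since $u\toreq u'$ gives $\ka{}{X}{C}{u}\toreq\ka{}{X}{C}{u'}$, so that the hypothesis $\substa{x}{t}{r}\in\interp B$ survives any reduction of $r$. The induction is carried on $M(r)$, with $|s|+|\kk{'}{X}{B}|$ as a secondary component, both $M$ (Lemma~\ref{lem:eqM}) and the reduction-length sizes being invariant under $\eq$.

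Because $\toreq$ equals $\eq^*\circ\re\circ\eq^*$, the real work is to enumerate, up to $\eq$, the $\re$-redexes reachable from $\ka{'}{X}{B}{(\abs{x^A}{r})s}$. Reductions taking place entirely inside $s$, inside the terms of $\kk{'}{X}{B}$, or inside $r$ strictly decrease the measure and are closed by the induction hypothesis (using closure of $\interp B$ under reduction for the last kind). The genuinely new phenomenon with respect to System~F is that equations \eqref{si:eq:dist-abs} and \eqref{si:eq:dist-app} can expose the head as a product: if $r\eq^*\pair{r_1}{r_2}$ then $\abs{x^A}{r}\eq^*\pair{\abs{x^A}{r_1}}{\abs{x^A}{r_2}}$ and $(\abs{x^A}{r})s\eq^*\pair{(\abs{x^A}{r_1})s}{(\abs{x^A}{r_2})s}$. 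I would treat this as a separate case: writing $B\equiv\tconj{B_1}{B_2}$, one extracts $\substa{x}{t}{r_i}\in\interp{B_i}$ from $\substa{x}{t}{r}\in\interp B$ by projection (Lemma~\ref{lem:SNimpliespiSN}), and since $M(r_i)<M(\pair{r_1}{r_2})=M(r)$ (Lemmas~\ref{lem:newlemma} and~\ref{lem:eqM}) the induction hypothesis yields $\abs{x^A}{r_i}\in\interp{\timpl{A}{B_i}}$; Lemma~\ref{lem:adequacyOfProd} then makes the pair reducible, and Lemma~\ref{lem:eqInterp} transports this back along isomorphism~\eqref{iso:dist} to $\abs{x^A}{r}\in\interp{A\Rightarrow B}$.

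The remaining and hardest case is the head $\beta_\lambda$-step \eqref{red:smallbeta}, $(\abs{x^A}{r})s\re\substa{x}{s}{r}$, which leaves $\ka{'}{X}{B}{\substa{x}{s}{r}}$. Here $s$ is only known to be in $\SN$, \emph{not} in $\interp A$, so I cannot simply invoke reducibility of $\substa{x}{s}{r}$; this is exactly the point where the elimination-context method must do its work, and it is the step I expect to be the main obstacle. The plan is to establish, as part of the same induction, a substitution property: substituting an $\SN$ term of type $A$ for $x$ in the reducible body $r$ and plugging the result into an $\SN$-filled context of type $B$ stays in $\SN$. This is delicate precisely because the copies of $s$ introduced by the substitution may meet eliminations sitting at the $x$-occurrences of $r$ and create fresh redexes; these must be controlled by the reducibility of $r$ (equivalently of $\substa{x}{t}{r}$) rather than by the mere strong normalisation of $s$. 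The accompanying bookkeeping—absorbing the rearrangements caused by \eqref{si:eq:dist-abs}, \eqref{si:eq:dist-app} and the P-COMM equations, in the same spirit as the product characterisations of Lemmas~\ref{lem:eqProd} and~\ref{lem:reProd}—is what the detailed argument in the technical appendix carries out.
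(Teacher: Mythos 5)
Your product case ($r\eq^*\pair{r_1}{r_2}$) coincides with the paper's, and using $M(\cdot)$ as the outer induction measure is also the paper's choice. The non-product case, however, contains two genuine gaps. First, your factorisation claim --- that a context whose hole has type $A\Rightarrow B$ must consume the hole by an application, so that every context is $\ka{'}{X}{B}{\hole{A\Rightarrow B}s}$ --- is false in PSI, and it fails exactly where PSI differs from System~F. Elimination contexts, like typing judgements, must be read modulo $\equiv$ (this is what makes Lemma~\ref{lem:eqInterp}, which you yourself invoke, immediate); hence when $B\equiv B_1\wedge B_2$ the hole may sit under a projection (harmless here: by Lemma~\ref{lem:eqProd} no \eqref{red:pi}-redex can form while $r\nneq^*\pair{r_1}{r_2}$), and when $B\equiv\tfor{X}{B'}$ with $X\notin\ftva{A}$ it may sit under a type application $[C]$. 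The latter does create a head redex as soon as $r\eq^*\Lambda X.r'$: by rule \eqref{spcommei}, $(\lambda x^A.\Lambda X.r')[C]\eq\lambda x^A.((\Lambda X.r')[C])\re_{\beta_\Lambda}\lambda x^A.\substa{X}{C}{r'}$. This is the third bullet of the paper's proof, closed there by the induction hypothesis since $M(\substa{X}{C}{r'})=M(r')<M(\Lambda X.r')=M(r)$; your proposal excludes this case by fiat, so its enumeration of one-step reducts is incomplete.

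Second, the head \eqref{red:smallbeta} case --- which you yourself call ``the main obstacle'' --- is never actually proved: you announce a plan (a substitution property to be established ``as part of the same induction'', with bookkeeping deferred to the technical appendix), but do not carry it out, and the appendix contains no such argument; the main-text proof \emph{is} the paper's complete proof of this lemma. The paper closes this case in one line: the substitution hypothesis of the lemma is applied to the argument $s$ supplied by the context, giving $\substa{x}{s}{r}\in\interp{B}$, whence $\ka{'}{X}{B}{\substa{x}{s}{r}}\in\SN$ directly by Definition~\ref{def:interpretation}; no auxiliary substitution machinery is introduced. (Your observation that the context only guarantees $s\in\SN$ rather than $s\in\interp{A}$ is a fair reading of the definition of elimination contexts, but it does not repair the proposal: you neither prove the case under that reading nor adopt the paper's resolution.) A further technical defect: your induction measure $(M(r),\,|s|+|\kk{'}{X}{B}|)$ does not decrease under a reduction internal to $r$, since $M$ is invariant under $\eq$ but not decreasing under $\re$ (substitution in a $\beta$-step can duplicate subterms and increase $M$), and closure of $\interp{B}$ under reduction only supplies the hypothesis for the reduct $r''$, not a measure on which $r''$ is smaller. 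The paper avoids this by taking the lexicographic order on $(|\kk{}{X}{{A\Rightarrow B}}|+|r|,\,M(r))$, whose first component, the longest-reduction length, does strictly decrease for such steps.
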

\begin{proof}
  By induction on $M(r)$.
  \begin{itemize}
  \item If $r\eq^* \pair{r_1}{r_2}$, then by Lemma~\ref{psi-gen}, we have
    $B\equiv B_1\wedge B_2$ with $r_1$ of type $B_1$ and $r_2$ of type $B_2$, and so
    by Lemma~\ref{substitution}, $\substa{x}{t}{r_1}$ has type $B_1$ and
    $\substa{x}{t}{r_2}$ has type $B_2$. Since $\substa{x}{t}{r}\in\interp B$, we
    have $\pair{\substa{x}{t}{r_1}}{\substa{x}{t}{r_2}}\in\interp{B}$. By
    Lemma~\ref{lem:SNimpliespiSN}, $\substa{x}{t}{r_1}\in\interp{B_1}$ and
    $\substa{x}{t}{r_2}\in\interp{B_2}$. By the induction hypothesis, $\lambda
    x^A.r_1\in\interp{A\Rightarrow B_1}$ and $\lambda x^A.r_2\in\interp{A\Rightarrow
      B_2}$, thus, by Lemma~\ref{lem:adequacyOfProd}, $\lambda x^A.r\eq^*\pair{\lambda
      x^A.r_1}{\lambda x^A.r_2}\in\interp{(A\Rightarrow B_1)\wedge(A\Rightarrow
      B_2)}$. Finally, by Lemma~\ref{lem:eqInterp}, we have $\interp{(A\Rightarrow
      B_1)\wedge(A\Rightarrow B_2)}=\interp{A\Rightarrow B}$.

  \item If $r\nneq^* \pair{r_1}{r_2}$, we need to prove that for any elimination
    context $\kk{}{X}{{A\Rightarrow B}}$, we have $\ka{}{X}{{A\Rightarrow
        B}}{\Lambda X. r}\in\SN$.
    Since $r$ and all the terms in $\T(\kk{}{X}{{A\Rightarrow B}})$ are in $\SN$,
    we proceed by induction on the lexicographical order of
    $(|\kk{}{X}{{A\Rightarrow B}}| + |r|,M(r))$ to show that all the one step
    reducts of $\ka{}{X}{{A\Rightarrow B}}{\lambda x^A.r}$ are in $\SN$. Since $r$
    is not a product, its only one step reducts are the following.
    \begin{itemize}
    \item A term where the reduction took place in one of the terms in
      $\T(\kk{}{X}{{A\Rightarrow B}})$ or in $r$, and so we apply the induction
      hypothesis.
    \item $\ka{'}{X}{B}{\substa{x}{s}{r}}$, with $\ka{}{X}{{A\Rightarrow
          B}}{\lambda x^A.r}=\kk{'}{X}{{B}}[(\lambda x^A.r) s]$. Since
      $\substa{x}{s}{r}\in\interp B$, we have $\ka{'}{X}{B}{ \substa{x}{s}{r}}\in\SN$.
    \item $\ka{'}{X}{\timpl{A}{B'}}{\lambda x^A. \substa{X}{C}{r'}}$, with $r$
      $\eq^*$-equivalent to $\Lambda X . r'$, $B\equiv\tfor{X}{B'}$, and
      $\ka{}{X}{\timpl{A}{B}}{\lambda x^A. \Lambda X. r'}$ is equal to
      $\ka{'}{X}{\timpl{A}{B'}}{(\lambda x^A. \Lambda X. r')[C]}$. Since
      $M(\substa{X}{C}{r'}) < M(\Lambda X. r')$, we apply the induction hypothesis.
      \qedhere
    \end{itemize}
  \end{itemize}
\end{proof}

\begin{lemma}[Adequacy of type abstraction]\label{lem:tlambda}
For all $r$, $X$, $A$, $B$ such that $\substa{X}{B}{r} \in \interp{\substa{X}{B}{A}}$,
we have $\Lambda X . r \in \interp{\tfor{X}{A}}$.
\end{lemma}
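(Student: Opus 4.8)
The plan is to unfold Definition~\ref{def:interpretation} and follow the structure of Lemma~\ref{lem:lamOfInterp}, the sibling statement for $\Rightarrow$-introduction, since the two arguments are dual. Fixing an arbitrary elimination context $\kk{}{Y}{\tfor{X}{A}}$ all of whose occurring terms $\T(\kk{}{Y}{\tfor{X}{A}})$ lie in $\SN$, I must show $\ka{}{Y}{\tfor{X}{A}}{\Lambda X.r}\in\SN$. As in Lemma~\ref{lem:lamOfInterp}, the whole argument is an induction on $M(r)$, split according to whether $r$ is equivalent to a pair.

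If $r\eq^*\pair{r_1}{r_2}$, then by rule~(\ref{spdistii}) we have $\Lambda X.r\eq^*\pair{\Lambda X.r_1}{\Lambda X.r_2}$. By Lemma~\ref{psi-gen} there are $A_1,A_2$ with $A\equiv\tconj{A_1}{A_2}$ and $r_i$ of type $A_i$, so $\substa{X}{B}{A}\equiv\tconj{\substa{X}{B}{A_1}}{\substa{X}{B}{A_2}}$. Since reducibility is invariant under $\eq^*$ (immediate from the definition, as $\SN$ is) and $\substa{X}{B}{r}\eq^*\pair{\substa{X}{B}{r_1}}{\substa{X}{B}{r_2}}$, Lemma~\ref{lem:eqInterp} gives $\pair{\substa{X}{B}{r_1}}{\substa{X}{B}{r_2}}\in\interp{\tconj{\substa{X}{B}{A_1}}{\substa{X}{B}{A_2}}}$, and Lemma~\ref{lem:SNimpliespiSN} extracts $\substa{X}{B}{r_i}\in\interp{\substa{X}{B}{A_i}}$. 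As $M(r_i)<M(r)$ (Lemmas~\ref{lem:eqM} and~\ref{lem:newlemma}), the induction hypothesis yields $\Lambda X.r_i\in\interp{\tfor{X}{A_i}}$, whence $\pair{\Lambda X.r_1}{\Lambda X.r_2}\in\interp{\tconj{\tfor{X}{A_1}}{\tfor{X}{A_2}}}$ by Lemma~\ref{lem:adequacyOfProd}. Finally $\tfor{X}{A}\equiv\tconj{\tfor{X}{A_1}}{\tfor{X}{A_2}}$ by isomorphism~\eqref{iso:pdist}, so Lemma~\ref{lem:eqInterp} together with $\eq^*$-invariance closes the case.

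If $r\nneq^*\pair{r_1}{r_2}$, I show that every one-step reduct of $\ka{}{Y}{\tfor{X}{A}}{\Lambda X.r}$ is in $\SN$, arguing by induction on the lexicographic order of $(|\kk{}{Y}{\tfor{X}{A}}|+|r|,M(r))$. A reduct internal to a term of $\T(\kk{}{Y}{\tfor{X}{A}})$ or to $r$ strictly decreases the first component and is handled by this inner induction hypothesis. Because the hole has type $\tfor{X}{A}$, the only frame the context can place immediately around it is a type application $[C]$, so the sole reduct at the interface is $(\Lambda X.r)[C]\re_{\beta_\Lambda}\substa{X}{C}{r}$ sitting in the residual context. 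The remaining reducts come from $\eq^*$-reshaping: if $r\eq^*\abs{x^{A'}}{r'}$ with $X\notin\ftva{A'}$, then by rules~(\ref{spcommii}) and~(\ref{spcommei}) the term is equivalent to one exposing a $\beta_\lambda$-redex, exactly mirroring the third bullet in the second case of Lemma~\ref{lem:lamOfInterp}; since the measures ignore types, $M(\substa{X}{C}{r'})=M(r')<M(r)$ and the outer induction hypothesis on $M$ applies.

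The main obstacle is the reduct $(\Lambda X.r)[C]\re_{\beta_\Lambda}\substa{X}{C}{r}$: the hypothesis only supplies reducibility of $\substa{X}{B}{r}$ for the single type $B$, whereas the context may instantiate an arbitrary $C$. The fact that unlocks this, already used tacitly in Lemma~\ref{lem:tlambdaSF}, is that reduction commutes with type substitution and that $P$ and $M$ are invariant under it ($P(\substa{X}{C}{r})=P(r)$ and $M(\substa{X}{C}{r})=M(r)$, by inspection of their clauses, none of which mentions types). Hence $\substa{X}{B}{r}\in\interp{\substa{X}{B}{A}}\subseteq\SN$ (Lemma~\ref{lem:CR1}) gives $r\in\SN$, and since well-typedness prevents type substitution from creating a fresh $\beta_\lambda$- or $\pi$-redex, $\substa{X}{C}{r}\in\SN$ and stays a non-pair for every $C$; placing it in the residual context, whose terms are in $\SN$, and reapplying the inner induction then yields $\ka{'}{Y}{\substa{X}{C}{A}}{\substa{X}{C}{r}}\in\SN$, closing the argument.
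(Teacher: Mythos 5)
Your pair case and the overall two-level induction follow the paper's proof, but the handling of the $\beta_\Lambda$ reduct---which you yourself identify as the crux---rests on a false claim. You assert that ``well-typedness prevents type substitution from creating a fresh $\beta_\lambda$- or $\pi$-redex,'' and from this conclude $r\in\SN$ implies $\substa{X}{C}{r}\in\SN$ for every $C$. In PSI this is wrong, precisely because reduction is type-sensitive. Take free variables $z$ of type $\timpl{Y}{Z}$ and $y$ of type $Y$, and consider $(\abs{x^X}{z})y$. It is well typed: by isomorphisms \eqref{iso:curry} and \eqref{iso:comm}, $\timpl{X}{(\timpl{Y}{Z})}\equiv\timpl{(\tconj{X}{Y})}{Z}\equiv\timpl{(\tconj{Y}{X})}{Z}\equiv\timpl{Y}{(\timpl{X}{Z})}$, so the application has type $\timpl{X}{Z}$. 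It is a normal form, since rule \eqref{red:smallbeta} is blocked by $Y\not\equiv X$. Yet $\substa{X}{Y}{((\abs{x^X}{z})y)}=(\abs{x^Y}{z})y$ fires $(\beta_\lambda)$. So type substitution \emph{can} unblock redexes in well-typed terms; preservation of $\SN$ under type substitution is therefore not an innocuous commutation fact but essentially as strong as the strong-normalisation theorem itself (of which it is a consequence), and invoking it inside this lemma is circular.

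Even granting $\substa{X}{C}{r}\in\SN$, your closing step---``placing it in the residual context, whose terms are in $\SN$, and reapplying the inner induction''---does not go through. The inner induction statement concerns terms of the shape $\ka{}{Y}{\tfor{X}{A}}{\Lambda X.r}$ with measure $|\kk{}{Y}{\tfor{X}{A}}|+|r|$; after the $\beta_\Lambda$ step the term $\ka{'}{Y}{\substa{X}{C}{A}}{\substa{X}{C}{r}}$ no longer has that shape, and $|\substa{X}{C}{r}|$ may exceed $|r|$ (again because of newly unblocked redexes). More fundamentally, ``an $\SN$ term placed in a context whose terms are $\SN$ is $\SN$'' is exactly the statement the reducibility method exists to avoid: the context can apply and project the term, creating interactions that $\SN$ of the parts does not control. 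The paper instead uses the hypothesis at full strength: $\substa{X}{B}{r}\in\interp{\substa{X}{B}{A}}$ is \emph{reducibility}, not mere normalisation, so by Definition~\ref{def:interpretation} it immediately gives $\ka{'}{Y}{\substa{X}{B}{A}}{\substa{X}{B}{r}}\in\SN$, where the type argument supplied by the context is read as the $B$ of the statement (this universal reading of $B$ is also how the lemma is invoked in Theorem~\ref{thm:adequacy}). That is the correct resolution of your ``single $B$ versus arbitrary $C$'' worry; the detour through $\SN$-preservation under substitution cannot replace it.
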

\begin{proof}
  We proceed by induction on $M(r)$ with a proof similar to that of
  Lemma~\ref{lem:lamOfInterp}. Full details are given in the technical appendix at~\cite{arXiv}.
\end{proof}

\begin{definition}[Adequate substitution]
  A substitution $\theta$ is adequate for a context $\Gamma$ (notation
  $\theta\vDash\Gamma$) if for all $x:A \in \Gamma$, we have
  $\theta(x)\in\interp{A}$.
\end{definition}

\begin{theorem}[Adequacy]\label{thm:adequacy}
  For all $\Gamma, r, A$, and substitution $\theta$ such that $\Gamma \vdash
  r:A$ and $\theta\vDash\Gamma$, we have $\theta r\in\interp{A}$.
\end{theorem}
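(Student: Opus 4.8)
The plan is to proceed by induction on the structure of the term $r$, mirroring the proof of the System~F analogue (Theorem~\ref{thm:adequacySF}). The essential new ingredient is that, because of the rule $(\equiv)$, the top-level constructor of $r$ no longer determines $A$ on the nose but only up to $\equiv$; I would therefore invoke the Generation Lemma (Lemma~\ref{psi-gen}) in each case to recover a type of the expected shape, and Lemma~\ref{lem:eqInterp} to transport the conclusion back into $\interp A$. Throughout, $\theta$ is a term substitution that does not touch type variables, so it commutes with the term constructors and with type substitution.

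For the base case $r=x^{A_0}$, Generation forces $A\equiv A_0$, while $\theta\vDash\Gamma$ gives $\theta(x)\in\interp{A_0}$, so $\theta(x)\in\interp A$ by Lemma~\ref{lem:eqInterp}. The application, pair, projection and type-application cases are immediate: from Generation I would extract the premises of the corresponding rule, apply the induction hypothesis to the immediate subterms, and conclude with Lemma~\ref{lem:appOfInterp}, Lemma~\ref{lem:adequacyOfProd}, Lemma~\ref{lem:SNimpliespiSN} or Lemma~\ref{lem:tapp} respectively, absorbing the residual equivalence with Lemma~\ref{lem:eqInterp} (for the projection $r=\proja{A_0}{s}$ one uses $A_0\equiv A$ to read $\theta s\in\interp{A_0\wedge C}$ before projecting).

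The two binder cases require extending the substitution. For $r=\abs{x^{A_0}}{s}$, Generation gives $\hastypeg{x:A_0}{s}{C}$ with $A\equiv\timpl{A_0}{C}$; taking $\theta'=\theta\cup\{x\mapsto x\}$, which is adequate since $x^{A_0}\in\interp{A_0}$ by Lemma~\ref{lem:var}, the induction hypothesis gives $\theta' s=\theta s\in\interp C$ (using the freshness convention so that $x$ is not in the domain of $\theta$), and Lemma~\ref{lem:lamOfInterp} applied with the witness $t=x^{A_0}$, for which $\substa{x}{x}{\theta s}=\theta s$, yields $\abs{x^{A_0}}{\theta s}\in\interp{\timpl{A_0}{C}}=\interp A$. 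For $r=\pAbs{X}{s}$, Generation gives $\hastypei{s}{C}$ with $A\equiv\tfor{X}{C}$ and $X\notin\ftva{\Gamma}$; the induction hypothesis gives $\theta s\in\interp C$, and Lemma~\ref{lem:tlambda} instantiated at $B=X$, for which $\substa{X}{X}{\theta s}=\theta s$ and $\substa{X}{X}{C}=C$, yields $\pAbs{X}{\theta s}\in\interp{\tfor{X}{C}}=\interp A$.

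The only delicate point is the interaction of the $(\equiv)$ rule with the case analysis: one must check that Generation always delivers a type whose shape matches the constructor of $r$, and that the resulting equivalences are exactly those that Lemma~\ref{lem:eqInterp} can discharge. I expect the abstraction case to be the most subtle, since it combines the extension of $\theta$, the freshness convention needed to guarantee $\theta' s=\theta s$, and the choice of the variable witness $t=x^{A_0}$ demanded by the particular formulation of Lemma~\ref{lem:lamOfInterp}; every other case is a routine application of an already established adequacy lemma.
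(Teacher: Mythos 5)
Your proof follows the same skeleton as the paper's: induction on $r$, Lemma~\ref{psi-gen} to recover a type of the right shape, the constructor-wise adequacy lemmas, and Lemma~\ref{lem:eqInterp} to return to $\interp A$. The variable, application, pair, projection, type-application and even type-abstraction cases coincide with the paper's (in the type-abstraction case the paper likewise applies Lemma~\ref{lem:tlambda} from the bare fact $\theta s\in\interp B$, i.e.\ with the identity type substitution, exactly as you do). The genuine gap is in the term-abstraction case, and it is precisely the point you yourself flag as delicate: you apply Lemma~\ref{lem:lamOfInterp} with the single witness $t=x^{A_0}$, so the only hypothesis you feed it is $\substa{x}{x}{\theta s}=\theta s\in\interp C$. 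The paper does something strictly stronger: it instantiates the induction hypothesis at \emph{every} extension $\theta\cup\{x\mapsto t\}$ with $t\in\interp{A_0}$ (each of which is adequate for $\Gamma,x:A_0$), obtaining ``for all $t\in\interp{A_0}$, $\substa{x}{t}{\theta s}\in\interp C$'' before invoking the lemma.

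This difference is not cosmetic. Although Lemma~\ref{lem:lamOfInterp} is worded with a single $t$, its proof consumes the universal statement: the critical one-step reduct of $\ka{}{X}{{A_0\Rightarrow C}}{\abs{x^{A_0}}{\theta s}}$ is $\ka{'}{X}{C}{\substa{x}{u}{\theta s}}$, where $u$ is an arbitrary $\SN$ term of type $A_0$ supplied by the elimination context, and that case is closed by citing $\substa{x}{u}{\theta s}\in\interp C$. Your instance $t=x^{A_0}$ gives no control over $\substa{x}{u}{\theta s}$ for other $u$: nowhere is it shown that reducibility is stable under substituting reducible terms for free variables, and securing exactly that stability is the purpose of quantifying adequacy over all adequate substitutions. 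So under the single-witness reading the lemma you invoke is not the one the paper's argument actually supports, and under the reading its proof requires, your hypothesis is too weak; either way the burden you discharge is merely a restatement of the induction hypothesis. The repair is the paper's move: apply the induction hypothesis to $s$ with all substitutions $\theta\cup\{x\mapsto t\}$, $t\in\interp{A_0}$, and only then conclude $\abs{x^{A_0}}{\theta s}\in\interp{\timpl{A_0}{C}}=\interp A$ by Lemmas~\ref{lem:lamOfInterp} and~\ref{lem:eqInterp}.
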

\begin{proof}
  By induction on $r$. 
  \begin{itemize}
  \item If $r$ is a variable $x:A \in \Gamma$, then, since $\theta \vDash
    \Gamma$, we have $\theta r\in\interp A$.
    
  \item If $r$ is a product $\pair{s}{t}$, then by Lemma~\ref{psi-gen}, $\Gamma
    \vdash s:B$, $\Gamma \vdash t:C$, and $A\equiv B\wedge C$, thus, by the induction
    hypothesis, $\theta s\in\interp B$ and $\theta t\in\interp C$. By
    Lemma~\ref{lem:adequacyOfProd}, $\pair{\theta s}{\theta t}\in\interp{B\wedge
      C}$, hence by Lemma~\ref{lem:eqInterp}, $\theta r\in\interp A$.
    
  \item If $r$ is a projection $\pi_A(s)$, then by Lemma~\ref{psi-gen},
    $\Gamma \vdash s:A\wedge B$, and by the induction hypothesis, $\theta
    s\in\interp{A\wedge B}$. By Lemma~\ref{lem:SNimpliespiSN}, $\pi_A(\theta
    s)\in\interp{A}$, hence $\theta r\in\interp A$.
    
  \item If $r$ is an abstraction $\lambda x^B.s$, with $\Gamma \vdash s:C$, then
    by Lemma~\ref{psi-gen}, $A\equiv B\Rightarrow C$, hence by the induction
    hypothesis, for all $\theta$ and for all $t\in\interp B$, $\subst{x}{t}{\theta
      s}\in\interp C$. Hence, by Lemma~\ref{lem:lamOfInterp}, ${\lambda x^B.\theta
      s}\in\interp{B\Rightarrow C}$, so, by Lemma~\ref{lem:eqInterp}, $\theta
    r\in\interp A$.
    
  \item If $r$ is an application $st$, then by Lemma~\ref{psi-gen}, $\Gamma
    \vdash s:B\Rightarrow A$ and $\Gamma \vdash t:B$, thus, by the induction
    hypothesis, $\theta s\in\interp{B\Rightarrow A}$ and $\theta t\in\interp B$.
    Hence, by Lemma~\ref{lem:appOfInterp}, we have $\theta r=\theta s\theta
    t\in\interp A$.

  \item If $r$ is a type abstraction $\Lambda X.s$, with $\Gamma \vdash s:B$,
    then by Lemma~\ref{psi-gen}, $A\equiv \tfor{X}{B}$, hence by the induction
    hypothesis, for all $\theta$, $\theta s \in \interp B$. Hence, by Lemma~\ref{lem:tlambda},
    ${\Lambda X.\theta s}\in\interp{\tfor{X}{B}}$, hence, by
    Lemma~\ref{lem:eqInterp}, $\theta r\in\interp A$.
    
  \item If $r$ is a type application $\tapp{s}{C}$, then by Lemma~\ref{psi-gen},
    $\Gamma \vdash s:\tfor{X}{B}$ with $A \equiv \substa{X}{C}{B}$, thus, by the
    induction hypothesis, $\theta s\in\interp{\tfor{X}{B}}$. Hence, by
    Lemma~\ref{lem:tapp}, we have $\theta r=\theta \tapp{s}{C}\in\interp A$.
    \qedhere
  \end{itemize}
\end{proof}

\begin{theorem}[Strong normalisation]\label{thm:SN}
  For all $\Gamma, r, A$ such that $\Gamma \vdash r:A$, we have $r\in\SN$.
\end{theorem}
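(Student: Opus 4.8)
The plan is to obtain strong normalisation as an immediate corollary of the adequacy theorem (Theorem~\ref{thm:adequacy}), mirroring exactly the argument already used for System~F in Section~\ref{sec:SNinSF}. The key observation is that adequacy is stated relative to an adequate substitution $\theta$, producing $\theta r\in\interp A$ rather than a statement about the bare term $r$. So to recover a claim about $r$ itself, I would instantiate $\theta$ with the identity substitution $\theta_{\mathrm{id}}$, which maps each variable $x^A$ to itself, and show that this particular substitution is adequate for any context $\Gamma$.

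First I would verify that $\theta_{\mathrm{id}}\vDash\Gamma$. By the definition of adequate substitution this amounts to checking $x^A\in\interp A$ for every $x:A\in\Gamma$, which is precisely the content of Lemma~\ref{lem:var} (adequacy of variables). Hence $\theta_{\mathrm{id}}$ is adequate for every context, regardless of $\Gamma$.

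Next, given the hypothesis $\Gamma\vdash r:A$, I would apply Theorem~\ref{thm:adequacy} with $\theta=\theta_{\mathrm{id}}$ to conclude $\theta_{\mathrm{id}}r\in\interp A$. Since $\theta_{\mathrm{id}}r=r$, this gives $r\in\interp A$. Finally, Lemma~\ref{lem:CR1} provides the inclusion $\interp A\subseteq\SN$, whence $r\in\SN$, completing the proof.

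The hard part is entirely absent from this final step: all the difficulty has already been absorbed into the adequacy theorem and its supporting lemmas, in particular the adequacy of product (Lemma~\ref{lem:adequacyOfProd}) and of abstraction (Lemma~\ref{lem:lamOfInterp}), which are where the PSI-specific equivalence $\rightleftarrows$ forces the non-standard reasoning—splitting on whether $r\eq^*\pair{r_1}{r_2}$ and relying on the measure $M(\cdot)$ together with Lemma~\ref{lem:prodOfSN}. The theorem itself is thus a three-line corollary; the only point requiring any care is to make the quantifier structure explicit, namely that adequacy quantifies over all adequate $\theta$ and that it suffices to choose the identity so as to collapse $\theta r$ back to $r$.
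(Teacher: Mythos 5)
Your proof is correct and follows exactly the paper's own argument: instantiate Theorem~\ref{thm:adequacy} with the identity substitution, whose adequacy follows from Lemma~\ref{lem:var}, and conclude via the inclusion $\interp{A}\subseteq\SN$ of Lemma~\ref{lem:CR1}. In fact your citation of Lemma~\ref{lem:var} for the adequacy of the identity substitution is the right one, whereas the paper's text cites Lemma~\ref{lem:CR1} at that point (evidently a slip, since the parallel System~F proof correctly invokes the adequacy-of-variables lemma there).
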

\begin{proof}
  By Lemma~\ref{lem:CR1}, the identity substitution is adequate. Thus, by
  Theorem~\ref{thm:adequacy} and Lemma~\ref{lem:CR1},
  $r\in\interp{A}\subseteq\SN$.
\end{proof}

\section{Conclusion, Discussion and Future Work}\label{sec:conclusiones}
System I is a simply-typed lambda calculus with pairs, extended with an
equational theory obtained from considering the type isomorphisms as equalities.
In this way, the system allows a programmer to focus on the meaning of programs,
ignoring the rigid syntax of terms within the safe context provided by type
isomorphisms. In this paper we have extended System I with polymorphism, and its
corresponding isomorphisms, enriching the language with a feature that most
programmers expect.

From a logical perspective, System I is a proof system for propositional logic,
where isomorphic propositions have the same proofs, and PSI extends System I
with the universal quantifier.

The main theorems in this paper prove subject reduction
(Theorem~\ref{subjectreduction}) and strong normalisation
(Theorem~\ref{thm:SN}). The proof of the latter is a non-trivial adaptation of
Girard's proof~\cite{Girard89} for System F.

\subsection{Swap}\label{sec:swap}
As mentioned in Section~\ref{sec:SystemI}, two isomorphisms for System F with
pairs, as defined by Di Cosmo \cite{DiCosmo95}, are not considered explicitly:
isomorphisms \eqref{iso:alpha} and \eqref{iso:swap}. However, the
isomorphism~\eqref{iso:alpha} is just the $\alpha$-equivalence, which has been
given implicitly, and so it has indeed been considered. The isomorphism that
actually was not considered is \eqref{iso:swap}, which allows to swap the type
abstractions:
\(
\forall X.\forall Y.A \equiv \forall Y.\forall X.A
\).
This isomorphism is the analogous to the isomorphism $A\Rightarrow B\Rightarrow
C\equiv B\Rightarrow A\Rightarrow C$ at the first order level, which is a
consequence of isomorphisms \eqref{iso:curry} and \eqref{iso:comm}. At this
first order level, the isomorphism induces the following equivalence:
\begin{align*}
  (\lambda x^A.\lambda y^B.r)st
  &\rightleftarrows(\lambda x^A.\lambda y^B.r)\langle s,t \rangle\\
  &\rightleftarrows(\lambda x^A.\lambda y^B.r)\langle t,s \rangle\\
  &\rightleftarrows(\lambda x^A.\lambda y^B.r)ts
\end{align*}

An alternative approach would have been to introduce an equivalence between
$\lambda x^A.\lambda y^B.r$ and $\lambda y^B.\lambda x^A.r$. However, in any
case, to keep subject reduction, the $\beta_\lambda$ reduction must verify that
the type of the argument matches the type of the variable before reducing. This
solution is not easily implementable for the $\beta_\Lambda$ reduction, since it
involves using the type as a labelling for the term and the variable, to
identify which term corresponds to which variable (leaving the posibility for
non-determinism if the ``labellings'' are duplicated), but at the level of types
we do not have a natural labelling.

Another alternative solution, in the same direction, is the one implemented by
the selective lambda calculus~\cite{GarrigueAitkaciPOPL94}, where only arrows,
and not conjunctions, were considered, and so only the ismorphism $A\Rightarrow
B\Rightarrow C\equiv B\Rightarrow A\Rightarrow C$ is treated. In the selective
lambda calculus the solution is indeed to include external labellings (not
types) to identify which argument is being used at each time. We could have
added a labelling to type applications, $t[A_X]$, together with the following
rule:
\(
  r[A_X][B_Y] \rightleftarrows r[B_Y][A_X]
\)
and so modifying the $\beta_\Lambda$ to
\(
  (\Lambda X.r)[A_X]\re [X:=A]r
\).

Despite that such a solution seems to work, we found that it does not contribute
to the language in any aspect, while it does make the system less readable.
Therefore, we have decided to exclude the isomorphism \eqref{iso:swap} for PSI.

Another remark is that while a rule such as $r[A] \rightleftarrows r[B]$ with $A \equiv B$ seems to be admissible in the system,  it may not be necessary. Indeed, the only rule that could benefit from it seems to be \eqref{spdistei} when used from right to left, which is only worthy when there is a $\Lambda$ that can be factorised, in which case the rule \eqref{spdistii} can be applied.

\subsection{Future work}
\subsubsection{Eta-expansion rule}
An extended fragment of an early version~\cite{DiazcaroDowekLSFA12} of System I
has been implemented~\cite{DiazcaroMartinezlopezIFL15} in Haskell. In such an 
implementation, we have added some ad-hoc rules in order to have a progression
property (that is, having only introductions as normal forms of closed terms).
For example, ``If $s$ has type $B$, then $(\lambda x^A.\lambda y^B.r)s\re\lambda
x^A.((\lambda y^B.r)s)$''. Such a rule, among others introduced in this
implementation, is a particular case of a more general $\eta$-expansion rule.
Certainly, with the rule $r\re \lambda x^A.rx$ we can derive
\begin{align*}
  (\lambda x^A.\lambda y^B.r)s
  &\re \lambda z^A. (\lambda x^A.\lambda y^B.r)sz\\
  &\rightleftarrows^* \lambda z^A. (\lambda x^A.\lambda y^B.r)zs\\
  &\re \lambda z^A.((\lambda y^B.\substa{x}{z}{r})s)
\end{align*}

In~\cite{DiazcaroDowek2020} we have showed that it is indeed the case that all
the ad-hoc rules from \cite{DiazcaroDowekLSFA12} can be lifted by adding
extensional rules.

In addition, the proof of the consistency of PSI as a language of proof-terms
for second-order logic has been intentionally left out of this paper. Indeed, as
shown in~\cite{DiazcaroDowekFSCD19}, it would require to restrict variables to
only have ``prime types'', that is non-conjunctive types. Such a restriction has
also been shown to be not necessary when the language is extended with eta
rules~\cite{DiazcaroDowek2020}. Therefore, we preferred to delay the proof of
consistency for a future version of PSI with $\eta$-rules.

\subsubsection{Implementation}
The mentioned implementation of an early version of System I, included a fix
point operator and numbers, showing some interesting programming examples. We
plan to extend such an implementation for polymorphism, following the design of
PSI.

\balance
\bibliographystyle{plain}
\bibliography{biblio}

\begin{thebibliography}{10}

\bibitem{ArrighiDiazcaroLMCS12}
Pablo Arrighi and Alejandro D{\'\i}az-Caro.
\newblock A {S}ystem {F} accounting for scalars.
\newblock {\em LMCS}, 8(1:11):1--32, 2012.

\bibitem{ArrighiDiazcaroValironIC17}
Pablo Arrighi, Alejandro D{\'\i}az-Caro, and Beno\^it Valiron.
\newblock The vectorial lambda-calculus.
\newblock {\em Inf. and Comp.}, 254(1):105--139, 2017.

\bibitem{ArrighiDowekLMCS17}
Pablo Arrighi and Gilles Dowek.
\newblock Lineal: A linear-algebraic lambda-calculus.
\newblock {\em LMCS}, 13(1:8):1--33, 2017.

\bibitem{BoudolIC94}
G\'erard Boudol.
\newblock Lambda-calculi for (strict) parallel functions.
\newblock {\em Inf. and Comp.}, 108(1):51--127, 1994.

\bibitem{BucciarelliEhrhardManzonettoAPAL12}
Antonio Bucciarelli, Thomas Ehrhard, and Giulio Manzonetto.
\newblock A relational semantics for parallelism and non-determinism in a
  functional setting.
\newblock {\em APAL}, 163(7):918--934, 2012.

\bibitem{CoquandHuetIC88}
Thierry Coquand and G\'erard Huet.
\newblock The calculus of constructions.
\newblock {\em Inf. and Comp.}, 76(2--3):95--120, 1988.

\bibitem{deLiguoroPipernoIC95}
Ugo de'Liguoro and Adolfo Piperno.
\newblock Non deterministic extensions of untyped $\lambda$-calculus.
\newblock {\em Inf. and Comp.}, 122(2):149--177, 1995.

\bibitem{DezaniciancagliniDeliguoroPipernoSIAM98}
Mariangiola Dezani-Ciancaglini, Ugo de'Liguoro, and Adolfo Piperno.
\newblock A filter model for concurrent $\lambda$-calculus.
\newblock {\em SIAM JComp.}, 27(5):1376--1419, 1998.

\bibitem{DiCosmo95}
Roberto {Di Cosmo}.
\newblock {\em Isomorphisms of types: from $\lambda$-calculus to information
  retrieval and language design}.
\newblock {Progress in Theoretical Computer Science}. Birkhauser, Switzerland,
  1995.

\bibitem{DiazcaroDowekLSFA12}
Alejandro D{\'\i}az-Caro and Gilles Dowek.
\newblock Non determinism through type isomorphism.
\newblock {\em EPTCS (LSFA'12)}, 113:137--144, 2013.

\bibitem{DiazcaroDowekTPNC17}
Alejandro D\'{\i}az-Caro and Gilles Dowek.
\newblock Typing quantum superpositions and measurement.
\newblock {\em LNCS (TPNC'17)}, 10687:281--293, 2017.

\bibitem{DiazcaroDowekFSCD19}
Alejandro D{\'\i}az-Caro and Gilles Dowek.
\newblock Proof normalisation in a logic identifying isomorphic propositions.
\newblock {\em LIPIcs (FSCD'19)}, 131:14:1--14:23, 2019.

\bibitem{DiazcaroDowek2020}
Alejandro D\'{\i}az-Caro and Gilles Dowek.
\newblock Extensional proofs in a propositional logic modulo isomorphisms.
\newblock {\tt arXiv:2002.03762}, 2020.

\bibitem{DiazcaroGuillermoMiquelValironLICS19}
Alejandro D\'{\i}az-Caro, Mauricio Guillermo, Alexandre Miquel, and Beno\^{\i}t
  Valiron.
\newblock Realizability in the unitary sphere.
\newblock In {\em Proceedings of the 34th Annual ACM/IEEE Symposium on Logic in
  Computer Science (LICS 2019)}, pages 1--13, Vancouver, BC, Canada, 2019.
  IEEE.

\bibitem{DiazcaroMartinezlopezIFL15}
Alejandro D\'iaz-Caro and Pablo~E. Mart\'inez~L\'opez.
\newblock Isomorphisms considered as equalities: Projecting functions and
  enhancing partial application through an implementation of $\lambda^+$.
\newblock {\em ACM IFL}, 2015(9):1--11, 2015.

\bibitem{DowekHardinKirchnerJAR03}
Gilles Dowek, Th\'er\`ese Hardin, and Claude Kirchner.
\newblock Theorem proving modulo.
\newblock {\em JAR}, 31(1):33--72, 2003.

\bibitem{DowekWernerJSL98}
Gilles Dowek and Benjamin Werner.
\newblock Proof normalization modulo.
\newblock {\em JSL}, 68(4):1289--1316, 2003.

\bibitem{GarrigueAitkaciPOPL94}
Jacques Garrigue and Hassan A\"{\i}t-Kaci.
\newblock The typed polymorphic label-selective $\lambda$-calculus.
\newblock In {\em Proceedings of the 21st ACM SIGPLAN-SIGACT Symposium on
  Principles of Programming Languages}, POPL ’94, page 35–47, New York, NY,
  USA, 1994. Association for Computing Machinery.

\bibitem{GeuversKrebbersMcKinnaWiedijkLFMTP10}
Herman Geuvers, Robbert Krebbers, James McKinna, and Freek Wiedijk.
\newblock Pure type systems without explicit contexts.
\newblock In Karl Crary and Marino Miculan, editors, {\em Proceedings of LFMTP
  2010}, volume~34 of {\em EPTCS}, pages 53--67, 2010.

\bibitem{Girard89}
Jean-Yves Girard, Paul Taylor, and Yves Lafont.
\newblock {\em Proofs and types}.
\newblock Cambridge U.P., UK, 1989.

\bibitem{MartinLof84}
Per Martin-L{\"o}f.
\newblock {\em Intuitionistic type theory}.
\newblock Bibliopolis, Napoli, Italy, 1984.

\bibitem{PaganiRonchidellaroccaFI10}
Michele Pagani and Simona {Ronchi Della Rocca}.
\newblock Linearity, non-determinism and solvability.
\newblock {\em Fund. Inf.}, 103(1--4):173--202, 2010.

\bibitem{ParkSeoParkLeeJAR13}
Jonghyun Park, Jeongbong Seo, Sungwoo Park, and Gyesik Lee.
\newblock Mechanizing metatheory without typing contexts.
\newblock {\em Journal of Automated Reasoning}, 52(2):215--239, 2014.

\bibitem{RittriCADE90}
Mikael Rittri.
\newblock Retrieving library identifiers via equational matching of types.
\newblock In {\em Proceedings of CADE 1990}, volume 449 of {\em LNCS}, pages
  603--617, 1990.

\bibitem{arXiv}
Cristian~F. Sottile, Alejandro D\'{\i}az-Caro, and Pablo~E.
  Mart\'{\i}nez~L\'opez.
\newblock Polymorphic system i.
\newblock arXiv:2101.03215, 2021.

\bibitem{HoTT}
{The Univalent Foundations Program}.
\newblock {\em {HoTT: Univalent Foundations of Mathematics}}.
\newblock Institute for Advanced Study, Princeton, NJ, USA, 2013.

\bibitem{VauxMSCS09}
Lionel Vaux.
\newblock The algebraic lambda calculus.
\newblock {\em MSCS}, 19(5):1029--1059, 2009.

\end{thebibliography}

\newpage
\appendix
\onecolumn

\section{Detailed proofs of Section~\ref{sec:SR}}\label{app:SR}
\recap{Lemma}{psi-equiv-tfor-tconj}
{For all $X,A,B,C$ such that $\tfor{X}{A} \equiv \tconj{B}{C}$, there exist $B',C'$ such that $B \equiv \tfor{X}{B'}$, $C \equiv \tfor{X}{C'}$ and $A \equiv \tconj{B'}{C'}$. }
\begin{proof}
By Lemma \ref{lem:stability}, $\pf{\tfor{X}{A}}\sim \pf{\tconj{B}{C}}=\pf{B}\uplus \pf{C}$.

By Lemma~\ref{lem:correctnessOne}, let
$\pf A  = [\tfor{\vec Y_i}{(\timpl{A_i}{Z_i})}]_{i=1}^n$, 
$\pf B  = [\tfor{\vec W_j}{(\timpl{D_j}{Z'_j})}]_{j=1}^k$, and
$\pf C  = [\tfor{\vec W_j}{(\timpl{D_j}{Z'_j})}]_{j=k+1}^m$.

Hence, $[\tfor X{\tfor{\vec Y_i}{(\timpl{A_i}{Z_i})}}]_{i=1}^n\sim[\tfor{\vec W_j}{(\timpl{D_j}{Z'_j})}]_{j=1}^m$.
So, by definition of $\sim$, $n=m$ and for $i=1,\dots,n$ and a permutation $p$,
we have
$\tfor X{\tfor{\vec Y_i}{(\timpl{A_i}{Z_i})}} \equiv \tfor{\vec W_{p(i)}}{(\timpl{D_{p(i)}}{Z'_{p(i)}})}$.
Thus, by Lemma~\ref{lem:eqprimes}, we have $X,\vec Y_i = \vec W_{p(i)}$,
$A_i\equiv D_{p(i)}$, and $Z_i=Z'_{p(i)}$.
Therefore, there exists $I$ such that $I\cup \bar I=\{1,\dots,n\}$, such that
$\pf B =[{\tfor{\vec W_{p(i)}}{(D_{p(i)}\Rightarrow Z'_{p(i)})}}]_{i\in I}$ and 
$\pf C =[{\tfor{\vec W_{p(i)}}{(D_{p(i)}\Rightarrow Z'_{p(i)})}}]_{i\in\bar I}$.
Hence, by Corollary~\ref{cor:correctness}, we have,
  $B\equiv\bigwedge_{i\in I}\tfor{\vec W_{p(i)}}{(D_{p(i)}\Rightarrow Z'_{p_i})}
\equiv
\bigwedge_{i\in I}\tfor X{{\tfor{\vec Y_i}{(A_i\Rightarrow Z_i)}}}$ and 
$C\equiv\bigwedge_{i\in\bar I}\tfor X{{\tfor{\vec Y_i}{(A_i\Rightarrow Z_i)}}}$.

  Let $B'={\bigwedge_{i\in I}{\tfor{\vec Y_i}{(A_i\Rightarrow Z_i)}}}$ and
  $C'={\bigwedge_{i\in\bar I}{\tfor{\vec Y_i}{(A_i\Rightarrow Z_i)}}}$. So,
  $B\equiv\forall X.B'$ and $C\equiv\forall X.C'$.
  Hence, also by Corollary~\ref{cor:correctness}, we have
  $A\equiv\bigwedge_{i=1}^n{\tfor{\vec Y_i}{(A_i\Rightarrow Z_i)}}\equiv B'\wedge C'$.
\end{proof}

\recap{Lemma}{psi-equiv-tfor-timpl}{
  For all $X,A,B,C$ such that $\tfor{X}{A} \equiv \timpl{B}{C}$, there exists $C'$ such that $C \equiv \tfor{X}{C'}$ and $A \equiv \timpl{B}{C'}$.
}
\begin{proof}
By Lemma \ref{lem:stability}, $\pf{\tfor{X}{A}}\sim \pf{\timpl{B}{C}}$.

By Lemma~\ref{lem:correctnessOne}, let
  $\pf A  = [\tfor{\vec Y_i}{(\timpl{A_i}{Z_i})}]_{i=1}^n$ and 
  $\pf C  = [\tfor{\vec W_j}{(\timpl{D_j}{Z'_j})}]_{j=1}^m$.
Hence, $[\tfor X{\tfor{\vec Y_i}{(\timpl{A_i}{Z_i})}}]_{i=1}^n\sim[\tfor{\vec
  W_j}{(\timpl{(B\wedge D_j)}{Z'_j})}]_{j=1}^m$.
So, by definition of $\sim$, $n=m$ and for $i=1,\dots,n$ and a permutation $p$,
we have
\(
  \tfor X{\tfor{\vec Y_i}{(\timpl{A_i}{Z_i})}} \equiv \tfor{\vec
    W_{p(i)}}{(\timpl{(B\wedge D_{p(i)})}{Z'_{p(i)}})}
 \)

Hence, by Lemma~\ref{lem:eqprimes}, we have $X,\vec Y_i = \vec W_{p(i)}$,
$A_i\equiv B\wedge D_{p(i)}$, and $Z_i=Z'_{p(i)}$.
Hence, by Corollary~\ref{cor:correctness},
\[
  C\equiv \bigwedge_{j=1}^n\tfor{\vec W_j}{(D_j\Rightarrow Z'_j)}
  \equiv\bigwedge_{i=1}^n\tfor{\vec W_{p(i)}}{(D_{p(i)}\Rightarrow Z'_{p(i)})}
  \equiv \bigwedge_{i=1}^n\tfor X{{\tfor{\vec Y_i}{(D_{p(i)}\Rightarrow Z_i)}}}
\]

Let
  $C'={\bigwedge_{i=1}^n{\tfor{\vec Y_i}{(D_{p(i)}\Rightarrow Z_i)}}}$. So,
  $C\equiv\forall X.C'$.
  
  Hence, also by Corollary~\ref{cor:correctness}, we have
  \[
  A\equiv\bigwedge_{i=1}^n{\tfor{\vec Y_i}{(A_i\Rightarrow Z_i)}}
  \equiv \bigwedge_{i=1}^n{\tfor{\vec Y_i}{((B\wedge D_{p(i)})\Rightarrow Z_i)}}
  \equiv B\Rightarrow\bigwedge_{i=1}^n{\tfor{\vec Y_i}{(D_{p(i)}\Rightarrow Z_i)}}
  \equiv \timpl B{C'}
  \qedhere
\]
\end{proof}

\xrecap{Lemma}{Unicity modulo}{lem:unicity}{
  For all $\Gamma,r,A,B$ such that $\Gamma\vdash r:A$ and $\Gamma\vdash r:B$, we have $A\equiv B$.
  }
\begin{proof}~
  \begin{itemize}
  \item If the last rule of the derivation of $\Gamma\vdash r:A$ is
    $(\equiv)$, then we have a shorter derivation of $\Gamma\vdash r:C$ with
    $C\equiv A$, and, by the induction hypothesis, $C\equiv B$, hence $A\equiv
    B$.
  \item If the last rule of the derivation of $\Gamma\vdash r:B$ is $(\equiv)$ we proceed in the same way.
  \item All the remaining cases are syntax directed.
\qedhere
  \end{itemize}
\end{proof}

\xrecap{Lemma}{Substitution}{substitution}{
  \begin{enumerate}
  \item For all $\Gamma,x,r,s,A,B$ such that $\hastypeg{x:B}{r}{A}$ and $\hastypei{s}{B}$, we have $\hastypei{\substa{x}{s}{r}}{A}$.
  \item For all $\Gamma,r,X,A,B$ such that $\Gamma\vdash r:A$, we have $[X:=B]\Gamma\vdash [X:=B]r:[X:=B]A$.
  \end{enumerate}
}
\begin{proof}~
  \begin{enumerate}
  \item 
    By structural induction on $r$.
    
    \begin{itemize}
    \item Let $r = x$. By Lemma \ref{psi-gen}, $A \equiv B$, thus \hastypei{s}{A}. Since \substa{x}{s}{x} $= s$, we have \hastypei{\substa{x}{s}{x}}{A}.
      
    \item Let $r = y$, with $y \not = x$. Since \substa{x}{s}{y} $= y$, we have \hastypei{\substa{x}{s}{y}}{A}.
      
    \item Let $r = \abs{x^C}{t}$. We have \subst{x}{s}{\abs{x^C}{t}} $=$ \abs{x^C}{t}, so \hastypei{\subst{x}{s}{\abs{x^C}{t}}}{A}.
      
    \item Let $r = \abs{y^C}{t}$, with $y \not = x$. By Lemma \ref{psi-gen}, $A \equiv \timpl{C}{D}$ and \hastypeg{\hastype{y}{C}}{t}{D}. By the induction hypothesis, \hastypeg{\hastype{y}{C}}{\substa{x}{s}{t}}{D}, and so, by rule \impli, \hastypei{\abs{y^C}{\substa{x}{s}{t}}}{\timpl{C}{D}}. Since \abs{y^C}{\substa{x}{s}{t}} $=$ \subst{x}{s}{\abs{y^C}{t}}, using rule \req, \hastypei{\subst{x}{s}{\abs{x^C}{t}}}{A}.
      
    \item Let $r = \app{t}{u}$. By Lemma \ref{psi-gen}, \hastypei{t}{\timpl{C}{A}} and \hastypei{u}{C}. By the induction hypothesis, \hastypei{\substa{x}{s}{t}}{\timpl{C}{A}} and \hastypei{\substa{x}{s}{u}}{C}, and so, by rule \imple, \hastypei{\app{(\substa{x}{s}{t})}{(\substa{x}{s}{u})}}{A}. Since \app{(\substa{x}{s}{t})}{(\substa{x}{s}{u})} $=$ \subst{x}{s}{\app{t}{u}}, we have \hastypei{\subst{x}{s}{\app{t}{u}}}{A}.
      
    \item Let $r = \pair{t}{u}$. By Lemma \ref{psi-gen}, \hastypei{t}{C} and \hastypei{u}{D}, with $A \equiv \tconj{C}{D}$. By the induction hypothesis, \hastypei{\substa{x}{s}{t}}{C} and \hastypei{\substa{x}{s}{u}}{D}, and so, by rule \conji, \hastypei{\pair{\substa{x}{s}{t}}{\substa{x}{s}{u}}}{\tconj{C}{D}}. Since \pair{\substa{x}{s}{t}}{\substa{x}{s}{u}} $=$ \substa{x}{s}{\pair{t}{u}}, using rule \req, we have \hastypei{\substa{x}{s}{\pair{t}{u}}}{A}.
      
    \item Let $r = \proja{A}{t}$. By Lemma \ref{psi-gen}, \hastypei{t}{\tconj{A}{C}}. By the induction hypothesis, \hastypei{\substa{x}{s}{t}}{\tconj{A}{C}}, and so, by rule \conje, \hastypei{\proj{A}{\substa{x}{s}{t}}}{A}. Since \proj{A}{\substa{x}{s}{t}} $=$ \subst{x}{s}{\proja{A}{t}}, we have \hastypei{\subst{x}{s}{\proja{A}{t}}}{A}.
      
    \item Let $r = \pAbs{X}{t}$. By Lemma \ref{psi-gen}, $A \equiv \tfor{X}{C}$ and \hastypei{t}{C}. By the induction hypothesis, \hastypei{\substa{x}{s}{t}}{C}, and so, by rule \fori, \hastypei{\pAbs{X}{\substa{x}{s}{t}}}{\tfor{X}{C}}. Since \pAbs{X}{\substa{x}{s}{t}} $=$ \subst{x}{s}{\pAbs{X}{t}}, using rule \req, we have \hastypei{\subst{x}{s}{\pAbs{X}{t}}}{A}.
      
    \item Let $r = \pApp{t}{C}$. By Lemma \ref{psi-gen}, $A \equiv \substa{X}{C}{D}$ and \hastypei{t}{\tfor{X}{D}}. By the induction hypothesis, \hastypei{\substa{x}{s}{t}}{\tfor{X}{D}}, and so, by rule \fore, \hastypei{\pApp{(\substa{x}{s}{t})}{C}}{\substa{X}{C}{D}}. Since \pApp{(\substa{x}{s}{t})}{C} $=$ \subst{x}{s}{\pApp{t}{C}}, using rule \req, we have \hastypei{\subst{x}{s}{\pApp{t}{C}}}{A}.
    \end{itemize}
  \item By induction on the typing relation.
    \newcommand{\hastypexb}[2]{\ensuremath{\substa{X}{B}{\Gamma} \vdash #1 : #2}}
    \newcommand{\hastypexbg}[3]{\ensuremath{\substa{X}{B}{\Gamma}, #1 \vdash #2 : #3}}
    \begin{itemize}
    \item \ax: Let \hastypeg{\hastype{x}{A}}{x}{A}. Then, using rule \ax, we have \hastypexbg{\hastype{x}{\substa{X}{B}{A}}}{\substa{X}{B}{x}}{\substa{X}{B}{A}}.
    \item \req: Let \hastypei{r}{A}, with $A \equiv C$. By the induction hypothesis, \hastypexb{\substa{X}{B}{r}}{\substa{X}{B}{C}}. Since $A \equiv C$, $\substa{X}{B}{A} \equiv \substa{X}{B}{C}$. Using rule \req, we have \hastypexb{\substa{X}{B}{r}}{\substa{X}{B}{A}}.
    \item \impli: Let \hastypei{\abs{x^C}{t}}{\timpl{C}{D}}. By the induction hypothesis, \hastypexbg{\hastype{x}{\substa{X}{B}{C}}}{\substa{X}{B}{t}}{\substa{X}{B}{D}}. Using rule \impli, \hastypexb{\abs{x^{\substa{X}{B}{C}}}{\substa{X}{B}{t}}}{\timpl{\substa{X}{B}{C}}{\substa{X}{B}{D}}}. Since \abs{x^{\substa{X}{B}{C}}}{\substa{X}{B}{t}} $=$ \subst{X}{B}{\abs{x^C}{t}}, we have \hastypexb{\subst{X}{B}{\abs{x^C}{t}}}{\subst{X}{B}{\timpl{C}{D}}}.
    \item \imple: Let \hastypei{\app{t}{s}}{D}. By the induction hypothesis,
      \hastypexb{\substa{X}{B}{t}}{\subst{X}{B}{\timpl{C}{D}}} and
      \hastypexb{\substa{X}{B}{s}}{\substa{X}{B}{C}}.
      Since \subst{X}{B}{\timpl{C}{D}} $=$ \timpl{\substa{X}{B}{C}}{\substa{X}{B}{D}}, using rule \imple, we have \hastypexb{\app{(\substa{X}{B}{t})}{(\substa{X}{B}{s})}}{\substa{X}{B}{D}}. Since \app{(\substa{X}{B}{t})}{(\substa{X}{B}{s})} $=$ \subst{X}{B}{\app{t}{s}}, we have \hastypexb{\subst{X}{B}{\app{t}{s}}}{\substa{X}{B}{D}}.
    \item \conji: Let \hastypei{\p{t}{s}}{\tconj{C}{D}}. By the induction hypothesis, \hastypexb{\substa{X}{B}{t}}{\substa{X}{B}{C}} and \hastypexb{\substa{X}{B}{s}}{\substa{X}{B}{D}}. Using rule \conji, \hastypexb{\p{\substa{X}{B}{t}}{\substa{X}{B}{s}}}{\tconj{\substa{X}{B}{C}}{\substa{X}{B}{D}}}. Since \p{\substa{X}{B}{t}}{\substa{X}{B}{s}} $=$ \substa{X}{B}{\p{t}{s}}, and \tconj{\substa{X}{B}{C}}{\substa{X}{B}{D}} $=$ \subst{X}{B}{\tconj{C}{D}}, we have \hastypexb{\substa{X}{B}{\p{t}{s}}}{\subst{X}{B}{\tconj{C}{D}}}.
    \item \conje: Let \hastypei{t}{\tconj{C}{D}}. By the induction hypothesis, \hastypexb{\substa{X}{B}{t}}{\subst{X}{B}{\tconj{C}{D}}}. Since \subst{X}{B}{\tconj{C}{D}} $=$ \tconj{\subst{X}{B}{C}}{\subst{X}{B}{D}}, using rule \conje we have \hastypexb{\proj{\substa{X}{B}{C}}{\substa{X}{B}{t}}}{\subst{X}{B}{C}}. Since \proja{\substa{X}{B}{C}}{\substa{X}{B}{t}} $=$ \substa{X}{B}{\proja{C}{t}}, we have \hastypexb{\substa{X}{B}{\proja{C}{t}}}{\subst{X}{B}{C}}.
    \item \fori: Let \hastypei{\pAbs{Y}{t}}{\tfor{Y}{C}}, with $X \not \in \ftva{\Gamma}$. By the induction hypothesis, \hastypexb{\substa{X}{B}{t}}{\substa{X}{B}{C}}. Since $X \not \in \ftva{\Gamma}$, $X \not \in \fvt{\substa{X}{B}{\Gamma}}$. Using rule \fori, we have \hastypexb{\pAbs{Y}{\substa{X}{B}{t}}}{\pAbs{Y}{\substa{X}{B}{C}}}. Since \pAbs{Y}{\substa{X}{B}{t}} $=$ \substa{X}{B}{\pAbs{Y}{t}}, and \tfor{Y}{\substa{X}{B}{C}} $=$ \substa{X}{B}{\tfor{Y}{C}}, we have \hastypexb{\substa{X}{B}{\pAbs{Y}{t}}}{\substa{X}{B}{\tfor{Y}{C}}}.
    \item \fore: Let \hastypei{\pApp{t}{D}}{\substa{Y}{D}{C}}. By the
      induction hypothesis,
      \hastypexb{\substa{X}{B}{t}}{\substa{X}{B}{\tfor{Y}{C}}}.
      Since \substa{X}{B}{\tfor{Y}{C}} $=$ \tfor{Y}{\substa{X}{B}{C}}, using
      rule \fore, we have
      \hastypexb{\pApp{(\substa{X}{B}{t})}{\substa{X}{B}{D}}}{\substa{Y}{\substa{X}{B}{D}}{{\substa{X}{B}{C}}}}.

      Since \pApp{(\substa{X}{B}{t})}{\substa{X}{B}{D}} $=$ \subst{X}{B}{\pApp{t}{D}}, and \substa{Y}{\substa{X}{B}{D}}{{\substa{X}{B}{C}}} $=$ \substa{X}{B}{\substa{Y}{D}{C}}, we have \hastypexb{\subst{X}{B}{\pApp{t}{D}}}{\substa{X}{B}{\substa{Y}{D}{C}}}. \qedhere
    \end{itemize}
  \end{enumerate}
\end{proof}

\xrecap{Theorem}{Subject reduction}{subjectreduction}{
  For all $\Gamma,r,s,A$ such that $\hastypei{r}{A}$ and $r \hookrightarrow s$ or $r \rightleftarrows s$, we have $\hastypei{s}{A}$.
}
\begin{proof}
  
  By induction on the rewrite relation.
  
  \begin{description}
    \item[\eqref{si:eq:comm}:]
      \pair{t}{r} $\rightleftarrows$ \pair{r}{t}
      \begin{description}
      \item[$(^{\rightarrow})$]~
        \deductionsb
        \deductiona
        {\hastypei{\p{t}{r}}{A}}
        {Hypothesis}
        \deductionc{A \equiv \tconj{B}{C}}{\hastypei{t}{B}}{\hastypei{r}{C}}{1, Lemma \ref{psi-gen}}
        \deductiona{\tconj{B}{C} \equiv \tconj{C}{B}}{Iso. \eqref{iso:comm}}
      \item \hfill
        \begin{center}
          \begin{prooftree}
            \hypo{\judi{r}{C}}
            \hypo{\judi{t}{B}}
            \infer2[\conji]{\judi{\p{r}{t}} {\tconj{C}{B}}}
            \infer[left label=[3]]1[\req]{\judi{\p{r}{t}} {\tconj{B}{C}}}
            \infer[left label=[2]]1[\req]{\judi{\p{r}{t}} {A}}
          \end{prooftree}
        \end{center} 
        \deductionse
        
      \item[$(_{\leftarrow})$] analogous to $(^{\rightarrow})$.
      \end{description}
      
    \item[\eqref{si:eq:asso}:]
      \pair{t}{\p{r}{s}} $\rightleftarrows$ \p{\pair{t}{r}}{s}
      \begin{description}
      \item [$(^{\rightarrow})$]~
        \deductionsb
        \deductiona
        {\hastypei{\pair{t}{\p{r}{s}}}{A}}
        {Hypothesis}
        \deductionc{A \equiv \tconj{B}{C}}{\hastypei{t}{B}}{\hastypei{\p{r}{s}}{C}}{1, Lemma \ref{psi-gen}}
        \deductionc{C \equiv \tconj{D}{E}}{\hastypei{r}{D}}{\hastypei{s}{E}}{2, Lemma \ref{psi-gen}}
        \deductiona{\tconj{B}{(\tconj{D}{E})} \equiv \tconj{(\tconj{B}{D})}{E}}{Iso. \eqref{iso:asso}}
        \deductiona{A \equiv \tconj{B}{(\tconj{D}{E})}}{2, 3, congr. \req}
      \item \hfill
        \begin{center}
          \begin{prooftree}
            \hypo{\judi{t}{B}}
            \hypo{\judi{r}{D}}
            \infer2[\conji]{\judi{\p{t}{r}} {\tconj{B}{D}}}
            \hypo{\judi{s}{E}}
            \infer2[\conji]{\judi{\p{\p{t}{r}}{s}} {\tconj{(\tconj{B}{D})}{E}}}
            \infer[left label=[4]]1[\req]{\judi{\p{\p{t}{r}}{s}} {\tconj{B}{(\tconj{D}{E})}}}
            \infer[left label=[5]]1[\req]{\judi{\p{\p{t}{r}}{s}} {A}}
          \end{prooftree}
        \end{center} 
        \deductionse
        
      \item [$(_{\leftarrow})$] analogous to $(^{\rightarrow})$.
      \end{description}
      
    \item[\eqref{si:eq:dist-abs}:]
      \abs{x^A}{\p{t}{r}} $\rightleftarrows$ \p{\abs{x^A}{t}}{\abs{x^A}{r}}
      \begin{description}
      \item [$(^{\rightarrow})$]~
        \deductionsb
        \deductiona
        {\hastypei{\abs{x^A}{\p{t}{r}}}{B}}
        {Hypothesis}
        \deductionb{B \equiv \timpl{A}{C}}{\hastypeg{\hastype{x}{A}}{\p{t}{r}}{C}}{1, Lemma \ref{psi-gen}}
        \deductionc{C \equiv \tconj{D}{E}}{\hastypeg{\hastype{x}{A}}{t}{D}}{\hastypeg{\hastype{x}{A}}{r}{E}}{2, Lemma \ref{psi-gen}}
        \deductiona{\timpl{A}{(\tconj{D}{E})} \equiv \tconj{(\timpl{A}{D})}{(\timpl{A}{E})}}{Iso. \eqref{iso:dist}}
        \deductiona{B \equiv \timpl{A}{(\tconj{D}{E})}}{2, 3, congr. \req}
      \item ~

	\scalebox{0.89}{\centering
          \begin{prooftree}
            \hypo{\judg{\hastype{x}{A}}{t}{D}}
            \infer1[\impli]{\judi{\abs{x^A}{t}} {\timpl{A}{D}}}
            \hypo{\judg{\hastype{x}{A}}{r}{E}}
            \infer1[\impli]{\judi{\abs{x^A}{r}} {\timpl{A}{E}}}
            \infer2[\conji]{\judi{\p{\abs{x^A}{t}}{\abs{x^A}{r}}} {\tconj{(\timpl{A}{D})}{(\timpl{A}{E})}}}
            \infer[left label=[4]]1[\req]{\judi{\p{\abs{x^A}{t}}{\abs{x^A}{r}}} {\timpl{A}{(\tconj{D}{E})}}}
            \infer[left label=[5]]1[\req]{\judi{\p{\abs{x^A}{t}}{\abs{x^A}{r}}} {B}}
          \end{prooftree}
      }
        \deductionse
        
      \item [$(_{\leftarrow})$]~
        \deductionsb
        \deductiona
        {\hastypei{\p{\abs{x^A}{t}}{\abs{x^A}{r}}}{B}}
        {Hypothesis}
        \deductionc{B \equiv \tconj{C}{D}}{\hastypei{\abs{x^A}{t}}{C}}{\hastypei{\abs{x^A}{r}}{D}}{1, Lemma \ref{psi-gen}}
        \deductionb{C \equiv \timpl{A}{C'}}{\hastypeg{\hastype{x}{A}}{t}{C'}}{2, Lemma \ref{psi-gen}}
        \deductionb{D \equiv \timpl{A}{D'}}{\hastypeg{\hastype{x}{A}}{r}{D'}}{2, Lemma \ref{psi-gen}}
        \deductiona{\tconj{(\timpl{A}{C'})}{(\timpl{A}{D'})} \equiv \timpl{A}{(\tconj{C'}{D'})}}{Iso. \eqref{iso:dist}}
        \deductiona{B \equiv \tconj{(\timpl{A}{C'})}{(\timpl{A}{D'})}}{2, 3, 4, congr. \req}
      \item \hfill
        \begin{center}
          \begin{prooftree}
            \hypo{\judg{\hastype{x}{A}}{t}{C'}}
            \hypo{\judg{\hastype{x}{A}}{r}{D'}}
            \infer2[\conji]{\judg{\hastype{x}{A}}{\p{t}{r}} {\tconj{C'}{D'}}}
            \infer1[\impli]{\judi{\abs{x^A}{\p{t}{r}}} {\timpl{A}{(\tconj{C'}{D'})}}}
            \infer[left label=[5]]1[\req]{\judi{\abs{x^A}{\p{t}{r}}} {\tconj{(\timpl{A}{C'})}{(\timpl{A}{D'})}}}
            \infer[left label=[6]]1[\req]{\judi{\abs{x^A}{\p{t}{r}}} {B}}
          \end{prooftree}
        \end{center} 
        \deductionse
      \end{description}
      
    \item[\eqref{si:eq:dist-app}:]
      \app{\p{t}{r}}{s} $\rightleftarrows$ \p{\app{t}{s}}{\app{r}{s}}
      \begin{description}
      \item [$(^{\rightarrow})$]~
        \deductionsb
        \deductiona
        {\hastypei{\app{\p{t}{r}}{s}}{A}}
        {Hypothesis}
        \deductionb{\hastypei{\p{t}{r}}{\timpl{B}{A}}}{\hastypei{s}{B}}{1, Lemma \ref{psi-gen}}
        \deductionc{\timpl{B}{A} \equiv \tconj{C}{D}}{\hastypei{t}{C}}{\hastypei{r}{D}}{2, Lemma \ref{psi-gen}}
        \deductionc{C \equiv \timpl{B}{C'}}{D \equiv \timpl{B}{D'}}{A \equiv \tconj{C'}{D'}}{3, Lemma \ref{psi-equiv-timpl-tconj}}
      \item~
        \begin{center}
          \begin{prooftree}
            \hypo{\judi{t}{C}}
            \infer[left label=[4]]1[\req]{\judi{t} {\timpl{B}{C'}}}
            \hypo{\judi{s}{B}}
            \infer2[\imple]{\judi{\app{t}{s}}{C'}}
          \end{prooftree}
        \end{center}
        \needspace{3em}
      \item~
        \begin{center}
          \begin{prooftree}
            \hypo{\judi{r}{D}}
            \infer[left label=[4]]1[\req]{\judi{r} {\timpl{B}{D'}}}
            \hypo{\judi{s}{B}}
            \infer2[\imple]{\judi{\app{r}{s}}{D'}}
          \end{prooftree}
        \end{center} 
      \item~ 
        \begin{center}
          \begin{prooftree}
            \hypo{$(5)$}
            \infer1{\judi{\app{t}{s}}{C'}}
            \hypo{$(6)$}
            \infer1{\judi{\app{r}{s}}{D'}}
            \infer2[\conji]{\judi{\p{\app{t}{s}}{\app{r}{s}}} {\tconj{C'}{D'}}}
            \infer[left label=[4]]1[\req]{\judi{\p{\app{t}{s}}{\app{r}{s}}} {A}}
          \end{prooftree}
        \end{center} 
        \deductionse
        
      \item [$(_{\leftarrow})$]~
        \deductionsb
        \deductiona
        {\hastypei{\p{\app{t}{s}}{\app{r}{s}}}{A}}
        {Hypothesis}
        \deductionc{A \equiv \tconj{B}{C}}{\hastypei{\app{t}{s}}{B}}{\hastypei{\app{r}{s}}{C}}{1, Lemma \ref{psi-gen}}
        \deductionb{\hastypei{t}{\timpl{D}{B}}}{\hastypei{s}{D}}{2, Lemma \ref{psi-gen}}
        \deductionb{\hastypei{r}{\timpl{E}{B}}}{\hastypei{s}{E}}{2, Lemma \ref{psi-gen}}
        \deductiona{D \equiv E}{3, 4, Lemma \ref{lem:unicity}}
        \deductiona{\timpl{D}{(\tconj{B}{C})} \equiv \tconj{(\timpl{D}{B})}{(\timpl{D}{C})}}{Iso. \eqref{iso:dist}}
        \deductiona{\timpl{E}{C} \equiv \timpl{D}{C}}{6, congr. \req}
      \item \hfill
        \begin{center}
          \begin{prooftree}
            \hypo{\judi{t}{\timpl{D}{B}}}
            \hypo{\judi{r}{\timpl{E}{C}}}
            \infer[left label=[7]]1[\req]{\judi{r}{\timpl{D}{C}}}
            \infer2[\conji]{\judi{\p{t}{r}} {\tconj{(\timpl{D}{B})}{(\timpl{D}{C})}}}
            \infer[left label=[5]]1[\req]{\judi{\p{t}{r}} {\timpl{D}{(\tconj{B}{C})}}}
            \infer1[\imple]{\judi{\app{\p{t}{r}}{s}} {\tconj{B}{C}}}
            \infer[left label=[2]]1[\req]{\judi{\app{\p{t}{r}}{s}} {A}}
          \end{prooftree}
        \end{center} 
        \deductionse
      \end{description}
      
    \item[\eqref{si:eq:curry}:]
      $\app{t}{\p{r}{s}} \rightleftarrows \app {\app{t}{r}{s}}$
      \begin{description}
      \item [$(^{\rightarrow})$]~
        \deductionsb
        \deductiona
        {\hastypei{\app{t}{\p{r}{s}}}{A}}
        {Hypothesis}
        \deductionb{\hastypei{t}{\timpl{B}{A}}}{\hastypei{\p{t}{r}}{B}}{1, Lemma \ref{psi-gen}}
        \deductionc{B \equiv \tconj{C}{D}}{\hastypei{r}{C}}{\hastypei{s}{D}}{2, Lemma \ref{psi-gen}}
        \deductiona{\timpl{B}{A} \equiv \timpl{(\tconj{C}{D})}{A}}{3, congr. \req}
        \deductiona{\timpl{(\tconj{C}{D})}{A} \equiv \timpl{C}{(\timpl{D}{A})}}{Iso. \eqref{iso:curry}}
      \item \hfill
        \begin{center}
          \begin{prooftree}
            \hypo{\judi{t}{\timpl{B}{A}}}
            \infer[left label=[4]]1[\req]{\judi{t} {\timpl{(\tconj{C}{D})}{A}}}
            \infer[left label=[5]]1[\req]{\judi{t} {\timpl{C}{(\timpl{D}{A})}}}
            \hypo{\judi{r}{C}}
            \infer2[\imple]{\judi{\app{t}{r}}{\timpl{D}{A}}}
          \end{prooftree}
        \end{center} \hfill
      \item \hfill
        \begin{center}
          \begin{prooftree}
            \hypo{$(6)$}
            \infer1{\judi{\app{t}{r}}{\timpl{D}{A}}}
            \hypo{\judi{s}{D}}
            \infer2[\imple]{\judi{\app{\app{t}{r}}{s}}{A}}
          \end{prooftree}
        \end{center} 
        \deductionse
        
      \item [$(_{\leftarrow})$]~
        \deductionsb
        \deductiona
        {\hastypei{\app{\app{t}{r}}{s}}{A}}
        {Hypothesis}
        \deductionb{\hastypei{\app{t}{r}}{\timpl{B}{A}}}{\hastypei{s}{B}}{1, Lemma \ref{psi-gen}}
        \deductionb{\hastypei{t}{\timpl{C}{(\timpl{B}{A})}}}{\hastypei{r}{C}}{2, Lemma \ref{psi-gen}}
        \deductiona{\timpl{C}{(\timpl{B}{A})} \equiv \timpl{(\tconj{C}{B})}{A}}{Iso. \eqref{iso:curry}}
      \item ~ 

        \scalebox{0.82}{\centering
            \begin{prooftree}
              \hypo{\judi{t}{\timpl{C}{(\timpl{B}{A})}}}
              \infer[left label=[4]]1[\req]{\judi{t}{\timpl{(\tconj{C}{B})}{A}}}
              \hypo{\judi{r}{C}}
              \hypo{\judi{s}{B}}
              \infer2[\conji]{\judi{\p{r}{s}} {\tconj{C}{B}}}
              \infer2[\imple]{\judi{\app{t}{\p{r}{s}}} {A}}
            \end{prooftree}}
        \deductionse
      \end{description}
      
    \item[\eqref{spcommii}:]
      $\pAbs{X}{\abs{x^A}{t}}\rightleftarrows\abs{x^A}{\pAbs{X}{t}}$
      \begin{description}
      \item[$(^{\rightarrow})$]~
        \deductionsb
        \deductiona
        {X \not \in \ftva{A}}
        {Hypothesis}
        \deductiona
        {\hastypei{\pAbs{X}{\abs{x^A}{t}}} {B}}
        {Hypothesis}
        \deductionc
        {B \equiv \tfor{X}{C}}
        {\hastypei{\abs{x^A}{t}} {C}}
        {X \not \in \ftva{\Gamma}}
        {2, Lemma \ref{psi-gen}}
        \deductionb
        {C \equiv \timpl{A}{D}}
        {\hastypeg{\hastype{x}{A}}{t}{D}}
        {3, Lemma \ref{psi-gen}}
        \deductiona
        {\tfor{X}{(\timpl{A}{D})} \equiv \timpl{A}{\tfor{X}{D}}}
        {1, Iso. \eqref{iso:pcommi}}
        \deductiona
        {\tfor{X}{C} \equiv \tfor{X}{(\timpl{A}{D})}}
        {4, congr. \req}
      \item \hfill
        \begin{center}
          \begin{prooftree}
            \hypo{\judg{\hastype{x}{A}}{t}{D}}
            \infer[left label=[1 3]]1[\fori]{\judg{\hastype{x}{A}}{\pAbs{X}{t}} {\tfor{X}{D}}}
            \infer1[\impli]{\judi{\abs{x^A}{\pAbs{X}{t}}} {\timpl{A}{\tfor{X}{D}}}}
            \infer[left label=[5]]1[\req]{\judi{\abs{x^A}{\pAbs{X}{t}}} {\tfor{X}{(\timpl{A}{D})}}}
            \infer[left label=[6]]1[\req]{\judi{\abs{x^A}{\pAbs{X}{t}}} {\tfor{X}{C}}}
            \infer[left label=[3]]1[\req]{\judi{\abs{x^A}{\pAbs{X}{t}}} {B}}
          \end{prooftree}
        \end{center} 
        \deductionse
        
      \item[$(_\leftarrow)$]~
        \deductionsb
        \deductiona
        {X \not \in \ftva{A}}
        {Hypothesis}
        \deductiona
        {\hastypei{\abs{x^A}{\pAbs{X}{t}}} {B}}
        {Hypothesis}
        \deductionb
        {B \equiv \timpl{A}{C}}
        {\hastypeg{\hastype{x}{A}}{\pAbs{X}{t}} {C}}
        {2, Lemma \ref{psi-gen}}
        \deductionc
        {C \equiv \tfor{X}{D}}
        {\hastypeg{\hastype{x}{A}}{t} {D}}
        {X \not \in \ftva{\Gamma} \cup \ftva{A}}
        {3, Lemma \ref{psi-gen}}
        \deductiona
        {\tfor{X}{(\timpl{A}{D})} \equiv \timpl{A}{\tfor{X}{D}}}
        {1, Iso. \eqref{iso:pcommi}}
        \deductiona
        {\timpl{A}{C} \equiv \timpl{A}{\tfor{X}{D}}}
        {4, congr. \req}
      \item \hfill
        \begin{center}
          \begin{prooftree}
            \hypo{\judg{\hastype{x}{A}}{t}{D}}
            \infer1[\impli]{\judi{\abs{x^A}{t}} {\timpl{A}{D}}}
            \infer[left label=[4]]1[\fori]{\judi{\pAbs{X}{\abs{x^A}{t}}} {\tfor{X}{(\timpl{A}{D})}}}
            \infer[left label=[5]]1[\req]{\judi{\pAbs{X}{\abs{x^A}{t}}} {\timpl{A}{\tfor{X}{D}}}}
            \infer[left label=[6]]1[\req]{\judi{\pAbs{X}{\abs{x^A}{t}}} {\timpl{A}{C}}}
            \infer[left label=[3]]1[\req]{\judi{\pAbs{X}{\abs{x^A}{t}}} {B}}
          \end{prooftree}
        \end{center} 
        \deductionse
        
      \end{description}
      
    \item[\eqref{spcommei}:]
      $\pApp{(\abs{x^A}{t})}{B} \rightleftarrows \abs{x^A}{\pApp{t}{B}}$
      \begin{description}
      \item[$(^\rightarrow)$]~        
        \deductionsb
        \deductiona
        {X \not \in \ftva{A}}
        {Hypothesis}
        \deductiona
        {\hastypei{\pApp{(\abs{x^A}{t})}{B}} {C}}
        {Hypothesis}
        \deductionb
        {C \equiv \substa{X}{B}{D}}
        {\hastypei{\abs{x^A}{t}} {\tfor{X}{D}}}
        {2, Lemma \ref{psi-gen}}
        \deductionb
        {\tfor{X}{D} \equiv \timpl{A}{E}}
        {\hastypeg{\hastype{x}{A}}{t}{E}}
        {3, Lemma \ref{psi-gen}}
        \deductionb
        {E \equiv \tfor{X}{E'}}
        {D \equiv \timpl{A}{E'}}
        {4, Lemma \ref{psi-equiv-tfor-timpl}}
        \deductiona
        {\timpl{A}{\substa{X}{B}{E'}} = \subst{X}{B}{\timpl{A}{E'}}}
        {1, Def.}
        \deductiona
        {\subst{X}{B}{\timpl{A}{E'}} \equiv \substa{X}{B}{D}}
        {5, congr. \req}
      \item \hfill
        \begin{center}
          \begin{prooftree}
            \hypo{\judg{\hastype{x}{A}}{t}{E}}
            \infer[left label=[5]]1[\req]{\judg{\hastype{x}{A}}{t} {\tfor{X}{E'}}}
            \infer1[\fore]{\judg{\hastype{x}{A}}{\pApp{t}{B}} {\substa{X}{B}{E'}}}
            \infer1[\impli]{\judi{\abs{x^A}{\pApp{t}{B}}} {\timpl{A}{\substa{X}{B}{E'}}}}
            \infer[left label=[6]]1[\req]{\judi{\abs{x^A}{\pApp{t}{B}}} {\subst{X}{B}{\timpl{A}{E'}}}}
            \infer[left label=[7]]1[\req]{\judi{\abs{x^A}{\pApp{t}{B}}} {\substa{X}{B}{D}}}
            \infer[left label=[3]]1[\req]{\judi{\abs{x^A}{\pApp{t}{B}}} {C}}
            
          \end{prooftree}
        \end{center} 
        \deductionse
        
      \item[$(_\leftarrow)$]~ 
        \deductionsb
        \deductiona
        {X \not \in \ftva{A}}
        {Hypothesis}
        \deductiona
        {\hastypei{\abs{x^A}{\pApp{t}{B}}} {C}}
        {Hypothesis}
        \deductionb
        {C \equiv \timpl{A}{D}}
        {\hastypeg{\hastype{x}{A}}{\pApp{t}{B}} {D}}
        {1, Lemma \ref{psi-gen}}
        \deductionb
        {D \equiv \substa{X}{B}{E}}
        {\hastypeg{\hastype{x}{A}}{t}{\tfor{X}{E}}}
        {2, Lemma \ref{psi-gen}}
        \deductiona
        {\timpl{A}{\tfor{X}{E}} \equiv \tfor{X}{(\timpl{A}{E})}}
        {Iso. \eqref{iso:pcommi}}
        \deductiona
        {\subst{X}{B}{\timpl{A}{E}} = \timpl{A}{\substa{X}{B}{E}}}
        {1, Def.}
        \deductiona
        {\timpl{A}{\substa{X}{B}{E}} \equiv \timpl{A}{D}}
        {4, congr. \req}
      \item \hfill
        \begin{center}
          \begin{prooftree}
            \hypo{\judg{\hastype{x}{A}}{t}{\tfor{X}{E}}}
            \infer1[\impli]{\judi{\abs{x^A}{t}}{\timpl{A}{\tfor{X}{E}}}}
            \infer[left label=[5]]1[\req]{\judi{\abs{x^A}{t}}{\tfor{X}{(\timpl{A}{E})}}}
            \infer1[\fore]{\judi{\pApp{(\abs{x^A}{t})}{B}}{\subst{X}{B}{\timpl{A}{E}}}}
            \infer[left label=[6]]1[\req]{\judi{\pApp{(\abs{x^A}{t})}{B}}{\timpl{A}{\substa{X}{B}{E}}}}
            \infer[left label=[7]]1[\req]{\judi{\pApp{(\abs{x^A}{t})}{B}}{\timpl{A}{D}}}
            \infer[left label=[3]]1[\req]{\judi{\pApp{(\abs{x^A}{t})}{B}}{C}}
          \end{prooftree}
        \end{center}
        \deductionse
        
      \end{description}
      
    \item[\eqref{spdistii}:]
      ${\pAbs{X}{\pair{t}{r}} \rightleftarrows \pair{\pAbs{X}{t}}{\pAbs{X}{r}}}$
      \begin{description}
      \item[$(^\rightarrow)$]~        
        \deductionsb
        \deductiona
        {\hastypei{\pAbs{X}{\pair{t}{r}}} {A}}
        {Hypothesis}
        \deductionc
        {A \equiv \tfor{X}{B}}
        {\hastypei{\pair{t}{r}} {B}}
        {X \not \in \ftva{\Gamma}}
        {1, Lemma \ref{psi-gen}}
        \deductionc
        {B \equiv \tconj{C}{D}}
        {\hastypei{t}{C}}
        {\hastypei{r}{D}}
        {2, Lemma \ref{psi-gen}}
        \deductiona
        {\tfor{X}{(\tconj{C}{D})} \equiv \tconj{\tfor{X}{C}}{\tfor{X}{D}}}
        {Iso. \eqref{iso:pdist}}
        \deductiona
        {\tfor{X}{B} \equiv \tfor{X}{(\tconj{C}{D})}}
        {3, congr. \req}
      \item~ 

	\scalebox{0.88}{\centering
          \begin{prooftree}
            \hypo{\judi{t}{C}}
            \infer[left label=[2]]1[\fori]{\judi{\pAbs{X}{t}} {\tfor{X}{C}}}
            \hypo{\judi{r}{D}}
            \infer[left label=[2]]1[\fori]{\judi{\pAbs{X}{r}} {\tfor{X}{D}}}
            \infer2[\conji]{\judi{\pair{\pAbs{X}{t}}{\pAbs{X}{r}}} {\tconj{\tfor{X}{C}}{\tfor{X}{D}}}}
            \infer[left label=[4]]1[\req]{\judi{\pair{\pAbs{X}{t}}{\pAbs{X}{r}}} {\tfor{X}({\tconj{C}{D})}}}
            \infer[left label=[5]]1[\req]{\judi{\pair{\pAbs{X}{t}}{\pAbs{X}{r}}} {\tfor{X}{B}}}
            \infer[left label=[2]]1[\req]{\judi{\pair{\pAbs{X}{t}}{\pAbs{X}{r}}} {A}}
          \end{prooftree}
	}
        \deductionse
        
      \item[$(_\leftarrow)$]~ 
        \deductionsb
        \deductiona
        {\hastypei{\pair{\pAbs{X}{t}}{\pAbs{X}{r}}} {A}}
        {Hypothesis}
        \deductionc
        {A \equiv \tconj{B}{C}}
        {\hastypei{\pAbs{X}{t}} {B}}
        {\hastypei{\pAbs{X}{r}} {C}}
        {1, Lemma \ref{psi-gen}}
        \deductionc
        {B \equiv \tfor{X}{D}}
        {\hastypei{t}{D}}
        {X \not \in \ftva{\Gamma}}
        {2, Lemma \ref{psi-gen}}
        \deductionc
        {C \equiv \tfor{X}{E}}
        {\hastypei{r}{E}}
        {X \not \in \ftva{\Gamma}}
        {2, Lemma \ref{psi-gen}}
        \deductiona
        {\tfor{X}{(\tconj{D}{E})} \equiv \tconj{\tfor{X}{D}}{\tfor{X}{E}}}
        {Iso. \eqref{iso:pdist}}
        \deductiona
        {\tconj{\tfor{X}{D}}{\tfor{X}{E}} \equiv \tconj{B}{C}}
        {3, 4, congr. \req}
      \item \hfill
        \begin{center}
          \begin{prooftree}
            \hypo{\judi{t}{D}}
            \hypo{\judi{r}{E}}
            \infer2[\conji]{\judi{\pair{t}{r}} {\tconj{D}{E}}}
            \infer[left label=[3]]1[\fori]{\judi{\pAbs{X}{\pair{t}{r}}} {\tfor{X}{(\tconj{D}{E})}}}
            \infer[left label=[5]]1[\req]{\judi{\pAbs{X}{\pair{t}{r}}} {\tconj{\tfor{X}{D}}{\tfor{X}{E}}}}
            \infer[left label=[6]]1[\req]{\judi{\pAbs{X}{\pair{t}{r}}} {\tconj{B}{C}}}
            \infer[left label=[2]]1[\req]{\judi{\pAbs{X}{\pair{t}{r}}} {A}}
          \end{prooftree}
        \end{center} 
        \deductionse
      \end{description}
      
    \item[\eqref{spdistei}:]
      $\pApp{\pair{t}{r}}{B} \rightleftarrows \pair{\pApp{t}{B}}{\pApp{r}{B}}$
      \begin{description}
      \item[$(^\rightarrow)$]~
        \deductionsb
        \deductiona
        {\hastypei{\pApp{\pair{t}{r}}{B}} {A}}
        {Hypothesis}
        \deductionb
        {A \equiv \substa{X}{B}{C}}
        {\hastypei{\pair{t}{r}} {\tfor{X}{C}}}
        {1, Lemma \ref{psi-gen}}
        \deductionc
        {\tfor{X}{C} \equiv \tconj{D}{E}}
        {\hastypei{t}{D}}
        {\hastypei{r}{E}}
        {2, Lemma \ref{psi-gen}}
        \deductionc
        {D \equiv \tfor{X}{D'}}
        {E \equiv \tfor{X}{E'}}
        {C \equiv \tconj{D'}{E'}}
        {3, Lemma \ref{psi-equiv-tfor-tconj}}
        \deductiona
        {\subst{X}{B}{\tconj{D'}{E'}} = \tconj{\substa{X}{B}{D'}}{\substa{X}{B}{E'}}}
        {Def.}
        \deductiona
        {\substa{X}{B}{C} \equiv \subst{X}{B}{\tconj{D'}{E'}}}
        {4, congr. \req}
      \item~ 

	\scalebox{0.84}{\centering
          \begin{prooftree}
            \hypo{\judi{t}{D}}
            \infer[left label=[4]]1[\req]{\judi{t} {\tfor{X}{D'}}}
            \infer1[\fore]{\judi{\pApp{t}{B}} {\substa{X}{B}{D'}}}
            \hypo{\judi{r}{E}}
            \infer[left label=[4]]1[\req]{\judi{r} {\tfor{X}{E'}}}
            \infer1[\fore]{\judi{\pApp{r}{B}} {\substa{X}{B}{E'}}}
            \infer2[\conji]{\judi{\pair{\pApp{t}{B}}{\pApp{r}{B}}} {\tconj{\substa{X}{B}{D'}}{\substa{X}{B}{E'}}}}
            \infer[left label=[5]]1[\req]{\judi{\pair{\pApp{t}{B}}{\pApp{r}{B}}} {\subst{X}{B}{\tconj{D'}{E'}}}}
            \infer[left label=[6]]1[\req]{\judi{\pair{\pApp{t}{B}}{\pApp{r}{B}}} {\substa{X}{B}{C}}}
            \infer[left label=[2]]1[\req]{\judi{\pair{\pApp{t}{B}}{\pApp{r}{B}}} {A}}
          \end{prooftree}
        } 
        \deductionse
        
      \item[$(_\leftarrow)$]~
        \deductionsb
        \deductiona
        {\hastypei{\pair{\pApp{t}{B}}{\pApp{r}{B}}} {A}}
        {Hypothesis}
        \deductionc
        {A \equiv \tconj{C}{D}}
        {\hastypei{\pApp{t}{B}} {C}}
        {\hastypei{\pApp{r}{B}} {D}}
        {1, Lemma \ref{psi-gen}}
        \deductionb
        {C \equiv \substa{X}{B}{C'}}
        {\hastypei{t}{\tfor{X}{C'}}}
        {2, Lemma \ref{psi-gen}}
        \deductionb
        {D \equiv \substa{X}{B}{D'}}
        {\hastypei{r}{\tfor{X}{D'}}}
        {2, Lemma \ref{psi-gen}}
        \deductiona
        {\tfor{X}{(\tconj{C'}{D'})} \equiv \tconj{\tfor{X}{C'}}{\tfor{X}{D'}}}
        {Iso. \eqref{iso:pdist}}
        \deductiona
        {\subst{X}{B}{\tconj{C'}{D'}} = \tconj{\substa{X}{B}{C'}}{\substa{X}{B}{D'}}}
        {Def.}
        \deductiona
        {\tconj{\substa{X}{B}{C'}}{\substa{X}{B}{D'}} \equiv \tconj{C}{D}}
        {3, 4, congr. \req}
	\needspace{3em}
      \item \hfill
        \begin{center}
          \begin{prooftree}
            \hypo{\judi{t} {\tfor{X}{C'}}}
            \hypo{\judi{r} {\tfor{Y}{D'}}}
            \infer2[\conji]{\judi{\pair{t}{r}} {\tconj{\tfor{X}{C'}}{\tfor{X}{D'}}}}
            \infer[left label=[5]]1[\req]{\judi{\pair{t}{r}} {\tfor{X}{(\tconj{C'}{D'}})}}
            \infer1[\fore]{\judi{\pApp{\pair{t}{r}}{B}} {\subst{X}{B}{\tconj{C'}{D'}}}}
            \infer[left label=[6]]1[\req]{\judi{\pApp{\pair{t}{r}}{B}} {\tconj{\substa{X}{B}{C'}}{\substa{X}{B}{D'}}}}
            \infer[left label=[7]]1[\req]{\judi{\pApp{\pair{t}{r}}{B}} {\tconj{{C}}{{D}}}}
            \infer[left label=[2]]1[\req]{\judi{\pApp{\pair{t}{r}}{B}} {A}}
          \end{prooftree}
        \end{center}
        \deductionse
        
      \end{description}
      
    \item[\eqref{spdistie}:]
      $\proj{\tfor{X}{B}}{\pAbs{X}{t}} \rightleftarrows \pAbs{X}{\proja{B}{t}}$
      \begin{description}
      \item[$(^\rightarrow)$]~
        \deductionsb
        \deductiona
        {\hastypei{\proj{\tfor{X}{B}}{\pAbs{X}{t}}} {A}}
        {Hypothesis}
        \deductionb
        {A \equiv \tfor{X}{B}}
        {\hastypei{\pAbs{X}{t}} {\tconj{(\tfor{X}{B})}{C}}}
        {1, Lemma \ref{psi-gen}}
        \deductionc
        {\tconj{(\tfor{X}{B})}{C} \equiv \tfor{X}{D}}
        {\hastypei{t}{D}}
        {X \not \in \ftva{\Gamma}}
        {2, Lemma \ref{psi-gen}}
        \deductionb
        {C \equiv \tfor{X}{C'}}
        {D \equiv \tconj{B}{C'}}
        {3, Lemma \ref{psi-equiv-tfor-tconj}}
      \item \hfill
        \begin{center}
          \begin{prooftree}
            \hypo{\judi{t}{D}}
            \infer[left label=[4]]1[\req]{\judi{t} {\tconj{B}{C'}}}
            \infer1[\conje]{\judi{\proja{B}{t}} {B}}
            \infer[left label=[3]]1[\fori]{\judi{\pAbs{X}{\proja{B}{t}}} {\tfor{X}{B}}}
            \infer[left label=[2]]1[\req]{\judi{\pAbs{X}{\proja{B}{t}}} {A}}
          \end{prooftree}
        \end{center}
        \deductionse
        
      \item[$(_\leftarrow)$]~
        \deductionsb
        \deductiona
        {\hastypei{\pAbs{X}{\proja{B}{t}}} {A}}
        {Hypothesis}
        \deductionc
        {A \equiv \tfor{X}{C}}
        {\hastypei{\proja{B}{t}} {C}}
        {X \not \in \ftva{\Gamma}}
        {1, Lemma \ref{psi-gen}}
        \deductionb
        {B \equiv C}
        {\hastypei{t}{\tconj{C}{D}}}
        {2, Lemma \ref{psi-gen}}
        \deductiona
        {\tfor{X}{(\tconj{C}{D})} \equiv \tconj{\tfor{X}{C}}{\tfor{X}{D}}}
        {Iso. \eqref{iso:pdist}}
      \item \hfill
        \begin{center}
          \begin{prooftree}
            \hypo{\judi{t} {\tconj{C}{D}}}
            \infer[left label=[2]]1[\fori]{\judi{\pAbs{X}{t}} {\tfor{X}{(\tconj{C}{D})}}}
            \infer[left label=[4]]1[\req]{\judi{\pAbs{X}{t}} {\tconj{\tfor{X}{C}}{\tfor{X}{D}}}}
            \infer1[\conje]{\judi{\proj{\tfor{X}{B}}{\pAbs{X}{t}}} {\tfor{X}{C}}}
            \infer[left label=[2]]1[\req]{\judi{\proj{\tfor{X}{B}}{\pAbs{X}{t}}} {A}}
          \end{prooftree}
        \end{center}
        \deductionse
        
      \end{description}
      
    \item[\eqref{spdistee}:]
      $\pApp{(\proja{\tfor{X}{B}}{t})}{C} \rightleftarrows \proj{\substa{X}{C}{B}}{\pApp{t}{C}}$
      \begin{description}
      \item[$(^\rightarrow)$]~        
        \deductionsb
        \deductiona
        {\hastypei{t}{\tfor{X}{(\tconj{B}{D})}}}
        {Hypothesis}
        \deductiona
        {\hastypei{\pApp{(\proja{\tfor{X}{B}}{t})}{C}} {A}}
        {Hypothesis}
        \deductionb
        {A \equiv \substa{X}{C}{E}}
        {\hastypei{\proja{\tfor{X}{B}}{t}} {\tfor{X}{E}}}
        {2, Lemma \ref{psi-gen}}
        \deductionb
        {\tfor{X}{E} \equiv \tfor{X}{B}}
        {\hastypei{t} {\tconj{\tfor{X}{E}}{F}}}
        {3, Lemma \ref{psi-gen}}
        \deductiona
        {E \equiv B}
        {4}
        \deductiona
        {\subst{X}{C}{\tconj{B}{D}} = \tconj{\substa{X}{C}{B}}{\substa{X}{C}{D}}}
        {Def.}
        \deductiona
        {\substa{X}{C}{B} \equiv \substa{X}{C}{E}}
        {5, congr. \req}
	\needspace{3em}
      \item \hfill
        \begin{center}
          \begin{prooftree}
            \hypo{\judi{t}{\tfor{X}{\tconj{B}{D}}}}
            \infer1[\fore]{\judi{\pApp{t}{C}} {\subst{X}{C}{\tconj{B}{D}}}}
            \infer[left label=[6]]1[\req]{\judi{\pApp{t}{C}} {\tconj{\substa{X}{C}{B}}{\substa{X}{C}{D}}}}
            \infer1[\conje]{\judi{\proj{\substa{X}{C}{B}}{\pApp{t}{C}}} {\substa{X}{C}{B}}}
            \infer[left label=[7]]1[\req]{\judi{\proj{\substa{X}{C}{B}}{\pApp{t}{C}}} {\substa{X}{C}{E}}}
            \infer[left label=[3]]1[\req]{\judi{\proj{\substa{X}{C}{B}}{\pApp{t}{C}}} {A}}
          \end{prooftree}
        \end{center}
        \deductionse
        
      \item[$(_\leftarrow)$]~
        \deductionsb
        \deductiona
        {\hastypei{t}{\tfor{X}{(\tconj{B}{D})}}}
        {Hypothesis}
        \deductiona
        {\hastypei{\proj{\substa{X}{C}{B}}{\pApp{t}{C}}} {A}}
        {Hypothesis}
        \deductionb
        {A \equiv \substa{X}{C}{B}}
        {\hastypei{\pApp{t}{C}} {\tconj{A}{E}}}
        {2, Lemma \ref{psi-gen}}
        \deductiona
        {\tfor{X}{(\tconj{B}{D})} \equiv \tconj{\tfor{X}{B}}{\tfor{X}{D}}}
        {Iso. \eqref{iso:pdist}}
      \item \hfill
        \begin{center}
          \begin{prooftree}
            \hypo{\judi{t}{\tfor{X}{(\tconj{B}{D})}}}
            \infer[left label=[4]]1[\req]{\judi{t} {\tconj{\tfor{X}{B}}{\tfor{X}{D}}}}
            \infer1[\conje]{\judi{\proja{\tfor{X}{B}}{t}} {\tfor{X}{B}}}
            \infer1[\fore]{\judi{\pApp{(\proja{\tfor{X}{B}}{t})}{C}} {\substa{X}{C}{B}}}
            \infer[left label=[3]]1[\req]{\judi{\pApp{(\proja{\tfor{X}{B}}{t})}{C}} {A}}
          \end{prooftree}
        \end{center}
        \deductionse
        
      \end{description}
      
    \item[\eqref{red:smallbeta}:] If \hastypei{s}{A}, \app{(\abs{x^A}{r})}{s}  $\hookrightarrow$ \substa{x}{s}{r}
      \deductionsb
      \deductiona{\hastypei{s}{A}}{Hypothesis}
      \deductiona{\hastypei{\abs{x^A}{r}}{B}}{Hypothesis}
      \deductiona{\hastypei{\abs{x^A}{r}}{\timpl{A}{B}}}{2, Lemma \ref{psi-gen}}
      \deductionb{\timpl{A}{B} \equiv \timpl{A}{C}}{\hastypeg{\hastype{x}{A}}{r}{C}}{3, Lemma \ref{psi-gen}}
      \deductiona{B \equiv C}{4, congr. \req}
      \deductiona{\hastypei{\substa{x}{s}{r}}{C}}{1, 4, Lemma \ref{substitution}}
      \deductiona{\hastypei{\substa{x}{s}{r}}{B}}{5, 6, rule \req}
      \deductionse
      \strut \hfill
      
    \item[\eqref{red:bigbeta}:] $(\Lambda X.r)[A] \hookrightarrow [X:=A]r$
      \deductionsb
      \deductiona{\hastypei{\pApp{(\pAbs{X}{r})}{A}}{B}}{Hypothesis}
      \deductionb{B \equiv \substa{X}{A}{C}}{\hastypei{\pAbs{X}{r}}{\tfor{X}{C}}}{1, Lemma \ref{psi-gen}}
      \deductionc{\tfor{X}{C} \equiv \tfor{X}{D}}{\hastypei{r}{D}}{X \not \in \ftva{\Gamma}}{2, Lemma \ref{psi-gen}}
      \deductiona{C \equiv D}{3}
      \deductiona{\hastypei{r}{C}}{4, rule \req}
      \deductiona{\substa{X}{A}{\Gamma} \vdash \hastypei{\substa{X}{A}{r}}{\substa{X}{A}{C}}}{5, Lemma \ref{substitution}}
      \deductiona{\hastypei{\substa{X}{A}{r}}{B}}{2, 3, 7, rule \req}
      \deductionse 
      \strut \hfill
      
    \item[\eqref{red:pi}:] If \hastypei{r}{A}, \proja{A}{\pair{r}{s}} $\hookrightarrow$ $r$
      \deductionsb
      \deductiona{\hastypei{r}{A}}{Hypothesis}
      \deductiona{\hastype{\proja{A}{\pair{r}{s}}}{B}}{Hypothesis}
      \deductionb{B \equiv A}{\hastypei{\p{r}{s}}{\tconj{A}{C}}}{2, Lemma \ref{psi-gen}}
      \deductiona{\hastypei{\proja{A}{\pair{r}{s}}}{A}}{2, 3, rule \req}
      \deductionse
      \qedhere
    \end{description}
\end{proof}

\section{Detailed proof of Section~\ref{sec:RP}}\label{app:RP}
    
\recap{Lemma}{lem:eqProd}
{
  For all $r, s, t$ such that $\pair{r}{s}\eq^*t$, we have either
  \begin{enumerate}
  \item $t=\pair{u}{v}$ where either
    \begin{enumerate}
    \item $u\eq^*\pair{t_{11}}{t_{21}}$ and $v\eq^*       \pair{t_{12}}{t_{22}}$ with $r\eq^*\pair{t_{11}}{t_{12}}$ and
      $s\eq^*\pair{t_{21}}{t_{22}}$, or
    \item $v\eq^*\pair{w}{s}$ with $r\eq^*\pair{u}{w}$, or
      any of the three symmetric cases, or
    \item $r\eq^*u$ and $s\eq^*v$, or the symmetric case.
    \end{enumerate}
  \item $t=\lambda x^A.a$ and $a\eq^*\pair{a_1}{a_2}$ with $r\eq^*\lambda x^A.a_1$ and $s\eq^*\lambda x^A.a_2$.
  \item $t=av$ and $a\eq^*\pair{a_1}{a_2}$,
    with $r\eq^*a_1v$ and $s\eq^*a_2v$.
  \item $t=\Lambda X.a$ and $a\eq^*\pair{a_1}{a_2}$ with $r\eq^*\Lambda X.a_1$ and $s\eq^*\Lambda X.a_2$.
  \item $t=\tapp{a}{A}$ and $a\eq^*\pair{a_1}{a_2}$, with $r\eq^*a_1[A]$ and $s\eq^*a_2[A]$.
  \end{enumerate}
}
\begin{proof}
  By a double induction, first on $M( t)$ and then on the length of the relation
  $\eq^*$. Consider an equivalence proof $ \pair{r}{s} \eq^* t' \eq   t$ with a shorter proof $ \pair{r}{s} \eq^* t'$. By the second
  induction hypothesis, the term $ t'$ has the form prescribed by the lemma.
  We consider the five cases and in each case, the possible rules transforming
  $ t'$ in $ t$.
  \begin{enumerate}
  \item Let $ \pair{r}{s}\eq^* \pair{u}{v}\eq t$. The possible
    equivalences from $ \pair{u}{v}$ are
    \begin{itemize}
    \item $ t= \pair{u'}{v}$ or $ \pair{u}{v'}$ with $ u\eq       u'$ and $ v\eq v'$, and so the term $ t$ is in case \ref{it:eqProd-sum}.
    \item Rules \rulelabel{(comm)} and \rulelabel{(asso)} preserve the conditions of case
      \ref{it:eqProd-sum}.
    \item $ t=\lambda x^A.\pair{u'}{v'}$, with $ u=\lambda x^A.       u'$ and $ v=\lambda x^A. v'$, and so the term $t$ is in case \ref{it:eqProd-lam}.
    \item $ t=\pair{u'}{v'} t'$, with $ u= u' t'$ and
      $ v= v' t'$, and so the term $t$ is in case \ref{it:eqProd-app}.
    \item $ t=\Lambda X.\pair{u'}{v'}$, with $ u=\Lambda X.
      u'$ and $ v=\Lambda X. v'$, and so the term $t$ is in case \ref{it:eqProd-tlam}.
    \item $ t=\pair{u'}{v'} [A]$, with $ u= u' [A]$ and
      $ v= v' [A]$, and so the term $t$ is in case \ref{it:eqProd-tapp}.
    \end{itemize}

  \item Let $ \pair{r}{s}\eq^*\lambda x^A. a\eq t$, with $     a\eq^*\pair{a_1}{a_2}$, $ r\eq^*\lambda x^A. a_1$, and $     s\eq^*\lambda x^A. a_2$. Hence, possible equivalences from $\lambda
    x^A. a$ to $ t$ are
    \begin{itemize}
    \item $ t=\lambda x^A. a'$ with $ a\eq^* a'$, hence $       a'\eq \pair{a_1}{a_2}$, and so the term $t$ is in case \ref{it:eqProd-lam}.
    \item $ t=\pair{\lambda x^A. u}{\lambda x^A. v}$, with $ \pair{a_1}{a_2}\eq^* a= \pair{u}{v}$. Hence, by the first induction
      hypothesis (since $M( a)<M( t)$), either
      \begin{enumerate}
      \item $ a_1\eq^* u$ and $ a_2\eq^* v$, and so $         r\eq^*\lambda x^A. u$ and $ s\eq^*\lambda x^A. v$, or
      \item $ v\eq^*  \pair{t_1}{t_2}$ with $ a_1\eq^* \pair{u}{t_1}$ and $ a_2\eq^* t_2$, and so $\lambda x^A. v\eq^*\pair{\lambda x^A. t_1}{\lambda x^A. t_2}$, $ r\eq^*\pair{\lambda x^A.         u}{\lambda x^A. t_1}$ and $ s\eq^*\lambda x^A. t_2$, or
      \item $ u\eq^* \pair{t_{11}}{t_{21}}$ and $ v\eq^*         \pair{t_{12}}{t_{22}}$ with $ a_1\eq^* \pair{t_{11}}{t_{12}}$
        and $ a_2\eq^* \pair{t_{21}}{t_{22}}$, and so $\lambda x^A.         u\eq^*\pair{\lambda x^A. t_{11}}{\lambda x^A. t_{21}}$, $\lambda
        x^A. v\eq^*\pair{\lambda x^A. t_{12}}{\lambda x^A. t_{22}}$, $         r\eq^*\pair{\lambda x^A. t_{11}}{\lambda x^A. t_{12}}$ and $         s\eq^*\pair{\lambda x^A. t_{21}}{\lambda x^A. t_{22}}$.
      \end{enumerate} (the symmetric cases are analogous), and so the term $t$
      is in case \ref{it:eqProd-sum}.
    \item $ t=\Lambda X. \lambda x^A. a'$ with $a = \Lambda X. a'$, hence $\Lambda X. a'\eq^* \pair{a_1}{a_2}$. Since $M(\pair{a_1}{a_2}) < M(\pair{r}{s})$, by the first induction hypothesis, the term $t$ is in case \ref{it:eqProd-tlam}.
    \item $t = \abs{x^A}{a'}[B]$ with $a = a'[B]$, hence $a'[B] \eq^* \pair{a_1}{a_2}$. Since $M(\pair{a_1}{a_2}) < M(\pair{r}{s})$, by the first induction hypothesis, the term $t$ is in case \ref{it:eqProd-tapp}.
    \end{itemize}
  \item Let $ \pair{r}{s}\eq^* a w\eq t$, with $ a\eq^*     \pair{a_1}{a_2}$, $ r\eq^* a_1 w$, and $ s\eq^* a_2 w$.
    The possible equivalences from $ a w$ to $ t$ are
    \begin{itemize}
    \item $ t= a' w$ with $ a\eq^* a'$, hence $ a'\eq^*       \pair{a_1}{a_2}$, and so the term $t$ is in case \ref{it:eqProd-app}.
    \item $ t= a w'$ with $ w\eq^* w'$ and so the term $t$ is in case \ref{it:eqProd-app}.
    \item $ t= \pair{u w}{v w}$, with $ \pair{a_1}{a_2}
      a_2\eq^* a= \pair{u}{v}$. Hence, by the first induction hypothesis
      (since $M( a)<M( t)$), either
      \begin{enumerate}
      \item $ a_1\eq^* u$ and $ a_2\eq^* v$, and so $ r\eq^*         u w$ and $ s\eq^* v w$, or
      \item $ v\eq^*  \pair{t_1}{t_2}$ with $ a_1\eq^* \pair{u}{t_1}$ and $ a_2\eq^* t_2$, and so $ v w\eq^* \pair{t_1 w}{t_2 w}$, $ r\eq^* \pair{u w}{t_1 w}$ and
        $ s\eq^* t_2 w$, or
      \item $ u\eq^* \pair{t_{11}}{t_{21}}$ and $ v\eq^*         \pair{t_{12}}{t_{22}}$ with $ a_1\eq^* \pair{t_{11}}{t_{12}}$
        and $ a_2\eq^* \pair{t_{21}}{t_{22}}$, and so $ u         w\eq^* \pair{t_{11} w}{t_{21} w}$, $ v w\eq^*         \pair{t_{12} w}{t_{22} w}$, $ r\eq^* \pair{t_{11} w}{t_{12} w}$ and $ s\eq^* \pair{t_{21} w}{t_{22} w}$.
      \end{enumerate} (the symmetric cases are analogous), and so the term $t$
      is in case \ref{it:eqProd-sum}.
    \item $ t= a'\pair{v}{w}$ with $ a= a' v$, thus $       a' v= a\eq^* \pair{a_1}{a_2}$. Hence, by the first induction
      hypothesis, $ a'\eq^* \pair{a'_1}{a'_2}$, with $ a_1\eq^*       a'_1 v$ and $ a_2\eq^* a'_2 v$. Therefore, $ r\eq^*       a'_1\pair{v}{w}$ and $ s\eq^* a'_2\pair{v}{w}$, and
      so the term $t$ is in case \ref{it:eqProd-app}.
    \end{itemize}
  \item Let $\pair{r}{s}\eq^* \Lambda X. a\eq t$, with $     a\eq^*\pair{a_1}{a_2}$, $ r\eq^*\Lambda X. a_1$, and $     s\eq^*\Lambda X. a_2$. Hence, possible equivalences from $\lambda
    x. a$ to $ t$ are
    \begin{itemize}
    \item $ t=\Lambda X. a'$ with $ a\eq^* a'$, hence $       a'\eq \pair{a_1}{a_2}$, and so the term $t$ is in case \ref{it:eqProd-tlam}.
    \item $ t=\pair{\Lambda X. u}{\Lambda X. v}$, with $ \pair{a_1}{a_2}\eq^* a= \pair{u}{v}$. Hence, by the first induction
      hypothesis (since $M( a)<M( t)$), either
      \begin{enumerate}
      \item $ a_1\eq^* u$ and $ a_2\eq^* v$, and so $         r\eq^*\Lambda X. u$ and $ s\eq^*\Lambda X. v$, or
      \item $ v\eq^*  \pair{t_1}{t_2}$ with $ a_1\eq^* \pair{u}{t_1}$ and $ a_2\eq^* t_2$, and so $\Lambda X. v\eq^*\pair{\Lambda        X. t_1}{\Lambda X. t_2}$, $ r\eq^*\pair{\Lambda X.         u}{\Lambda X. t_1}$ and $ s\eq^*\Lambda X. t_2$, or
      \item $ u\eq^* \pair{t_{11}}{t_{21}}$ and $ v\eq^*         \pair{t_{12}}{t_{22}}$ with $ a_1\eq^* \pair{t_{11}}{t_{12}}$
        and $ a_2\eq^* \pair{t_{21}}{t_{22}}$, and so $\Lambda X.         u\eq^*\pair{\Lambda X. t_{11}}{\Lambda X. t_{21}}$, $\Lambda        X. v\eq^*\pair{\Lambda X. t_{12}}{\Lambda X. t_{22}}$, $         r\eq^*\pair{\Lambda X. t_{11}}{\Lambda X. t_{12}}$ and $         s\eq^*\pair{\Lambda X. t_{21}}{\Lambda X. t_{22}}$.
      \end{enumerate} (the symmetric cases are analogous), and so the term $t$
      is in case \ref{it:eqProd-sum}.
      \item $ t=\lambda x^A. \Lambda X. a'$ with $a = \lambda x^A. a'$, hence $\lambda x^A. a'\eq^* \pair{a_1}{a_2}$. Since $M(\pair{a_1}{a_2}) < M(\pair{r}{s})$, by the first induction hypothesis, the term $t$ is in case \ref{it:eqProd-lam}.
      \item $t = (\Lambda X.a')[B]$ with $a = a'[B]$, hence $a'[B] \eq^* \pair{a_1}{a_2}$. Since $M(\pair{a_1}{a_2}) < M(\pair{r}{s})$, by the first induction hypothesis, the term $t$ is in case \ref{it:eqProd-tapp}.
    \end{itemize}

    \item Let $ \pair{r}{s}\eq^* a [A]\eq t$, with $ a\eq^* \pair{a_1}{a_2}$, $ r\eq^* a_1 [A]$, and $ s\eq^* a_2 [A]$.
    The possible equivalences from $ a [A]$ to $ t$ are
    \begin{itemize}
      \item $ t= a' [A]$ with $ a\eq^* a'$, hence $ a'\eq^* \pair{a_1}{a_2}$, and so the term $t$ is in case \ref{it:eqProd-tapp}.
      \item $ t= \pair{u [A]}{v [A]}$, with $ \pair{a_1}{a_2}\eq^* a= \pair{u}{v}$. Hence, by the first induction hypothesis
      (since $M( a)<M( t)$), either
      \begin{enumerate}
      \item $ a_1\eq^* u$ and $ a_2\eq^* v$, and so $ r\eq^* u [A]$ and $ s\eq^* v [A]$, or
      \item $ v\eq^*  \pair{t_1}{t_2}$ with $ a_1\eq^* \pair{u}{t_1}$ and $ a_2\eq^* t_2$, and so $ v [A]\eq^* \pair{t_1 [A]}{t_2 [A]}$, $ r\eq^* \pair{u [A]}{t_1 [A]}$ and
        $ s\eq^* t_2 [A]$, or
      \item $ u\eq^* \pair{t_{11}}{t_{21}}$ and $ v\eq^* \pair{t_{12}}{t_{22}}$ with $ a_1\eq^* \pair{t_{11}}{t_{12}}$
        and $ a_2\eq^* \pair{t_{21}}{t_{22}}$, and so $ u [A]\eq^* \pair{t_{11} [A]}{t_{21} [A]}$, $ v [A]\eq^* \pair{t_{12} [A]}{t_{22} [A]}$, $ r\eq^* \pair{t_{11} [A]}{t_{12} [A]}$ and $ s\eq^* \pair{t_{21} [A]}{t_{22} [A]}$.
      \end{enumerate} (the symmetric cases are analogous), and so the term $t$
      is in case \ref{it:eqProd-sum}.
      \item $t = \lambda x^B.(a'[A])$ with $a = \lambda x^B.a'$, hence $\lambda x^B.a' \eq^* \pair{a_1}{a_2}$. Since $M(\lambda x^B.a') < M(\pair{r}{s})$, by the first induction hypothesis, the term $t$ is in case \ref{it:eqProd-lam}.
      \item $t = \pi_{\substa{X}{A}{B}}(a'[A])$ with $a = \pi_{\forall X.B}(a')$, hence $ \pi_{\forall X.B}(a') \eq^* \pair{a_1}{a_2}$. Since $M(\pi_{\forall X.B}(a')) < M(\pair{r}{s})$, by the first induction hypothesis, this case is absurd. Indeed, $ \pair{a_1}{a_2}$ is never equivalent to $\pi_{\forall X.B}(a')$.
      \qedhere
    \end{itemize}
  \end{enumerate}
\end{proof}

\recap{Lemma}{lem:reProd} {
  For all $r_1, r_2, s, t$ such that $\pair{r_1}{r_2}\eq^* s\re t$, there exists
  $ u_1, u_2$ such that $ t\eq^*  \pair{u_1}{u_2}$ and either
  \begin{enumerate}
  \item $ r_1\toreq   u_1$ and $ r_2\toreq u_2$, 
  \item $ r_1\toreq u_1$ and $   r_2\eq^* u_2$, or
  \item $ r_1\eq^* u_1$ and $ r_2\toreq u_2$.
  \end{enumerate}
}
\begin{proof}
  By induction on $M( \pair{r_1}{r_2})$. By Lemma~\ref{lem:eqProd}, $s$ is
  either a product, an abstraction, an application, a type abstraction or a type
  application with the conditions given in the lemma. The different terms $ s$
  reducible by $\re$ are
  \begin{itemize}
  \item $(\lambda x^A. a) s'$ that reduces by the \rulelabel{($\beta_\lambda$)}
    rule to $ \substa{x}{s'}{a}$.
  \item $(\Lambda X. a)[A]$ that reduces by the \rulelabel{($\beta_\Lambda$)}
    rule to \substa{X}{A}{a}.
  \item $ \pair{s_1}{s_2}$, $\lambda x^A. a$, $ a s'$, $\Lambda X.a$,
    $\tapp{a}{A}$ with a reduction in the subterm $ s_1$, $ s_2$, $ a$, or $ s'$.
  \end{itemize}
  Notice that rule \rulelabel{($\pi$)} cannot apply since $s\nneq^*\pi_C( s')$.
  
  We consider each case:
  \begin{itemize}
  \item $ s=(\lambda x^A. a) s'$ and $ t= \substa{x}{s'}{a}$. Using twice
    Lemma~\ref{lem:eqProd}, we have $ a\eq^* \pair{a_1}{a_2}$, $ r_1\eq^*(\lambda
    x^A. a_1) s'$ and $ r_2\eq^*(\lambda x^A. a_2) s'$. Since $ t\eq^*
    \pair{\substa{x}{s'}{a_1}}{\substa{x}{s'}{a_2}}$, we take $ u_1=
    \substa{x}{s'}{a_1}$ and $ u_2= \substa{x}{s'}{a_2}$.
  \item $ s=(\Lambda X. a) [A]$ and $ t= \substa{X}{A}{a}$. Using twice
    Lemma~\ref{lem:eqProd}, we have $ a\eq^* \pair{a_1}{a_2}$, $ r_1\eq^*(\Lambda X.
    a_1)[A]$ and $ r_2\eq^*(\Lambda X. a_2)[A]$. Since $ t\eq^*
    \pair{\substa{X}{A}{a_1}}{\substa{X}{A}{a_2}}$, we take $ u_1=
    \substa{X}{A}{a_1}$ and $ u_2= \substa{X}{A}{a_2}$.
  \item $ s= \pair{s_1}{s_2}$, $ t= \pair{t_1}{s_2}$ or $ t= \pair{s_1}{t_2}$,
    with $ s_1\re t_1$ and $ s_2\re t_2$. We only consider the first case since the
    other is analogous. One of the following cases happen
    \begin{enumerate}
    \item[(a)] $ r_1\eq^* \pair{w_{11}}{w_{21}}$, $ r_2\eq^*
      \pair{w_{12}}{w_{22}}$, $ s_1= \pair{w_{11}}{w_{12}}$ and $ s_2=
      \pair{w_{21}}{w_{22}}$. Hence, by the induction hypothesis, either $ t_1=
      \pair{w'_{11}}{w_{12}}$, or $ t_1= \pair{w_{11}}{w'_{12}}$, or $ t_1=
      \pair{w'_{11}}{w'_{12}}$, with $ w_{11}\re w_{11}'$ and $ w_{12}\re w_{12}'$. We
      take, in the first case $ u_1= \pair{w_{11}'}{w_{21}}$ and $ u_2=
      \pair{w_{12}}{w_{22}}$, in the second case $ u_1= \pair{w_{11}}{w_{21}}$ and $
      u_2= \pair{w_{12}'}{w_{22}}$, and in the third $ u_1= \pair{w'_{11}}{w_{21}}$
      and $ u_2= \pair{w_{12}'}{w_{22}}$.
    \item[(b)] We consider two cases, since the other two are symmetric.
      \begin{itemize}
      \item $ r_1\eq^* \pair{s_1}{w}$ and $ s_2\eq^* \pair{w}{r_2}$, in which
        case we take $ u_1= \pair{t_1}{w}$ and $ u_2= r_2$.
      \item $ r_2\eq^* \pair{w}{s_2}$ and $ s_1= \pair{r_1}{w}$. Hence, by the
        induction hypothesis, either $ t_1= \pair{r'_1}{w}$, or $ t_1= \pair{r_1}{w'}$
        or $ t_1= \pair{r'_1}{w'}$, with $ r_1\re r_1'$ and $ w\re w'$. We take, in the
        first case $ u_1= r'_1$ and $ u_2= \pair{w}{s_2}$, in the second case $ u_1=
        r_1$ and $ u_2= \pair{w'}{s_2}$, and in the third case $ u_1= r'_1$ and $ u_2=
        \pair{w'}{s_2}$.
      \end{itemize}
    \item[(c)] $ r_1\eq^* s_1$ and $ r_2\eq^* s_2$, in which case we take $ u_1=
      t_1$ and $ u_2= s_2$.

    \end{enumerate}
  \item $ s=\lambda x^A. a$, $ t=\lambda x^A. t'$, and $ a\re t'$, with $ a\eq^*
    \pair{a_1}{a_2}$ and $ s\eq^*\lambda x^A. \pair{a_1}{a_2} x^A. a_2$. Therefore,
    by the induction hypothesis, there exists $ u'_1$, $ u'_2$ such that either ($
    a_1\toreq u'_1$ and $ a_2\toreq u'_2$), or ($ a_1\eq^* u'_1$ and $ a_2\toreq
    u'_2$), or ($ a_1\toreq u'_1$ and $ a_2\eq^* u'_2$). Therefore, we take $
    u_1=\lambda x^A. u_1'$ and $ u_2=\lambda x^A. u_2'$.
  \item $ s= a s'$, $ t= t' s'$, and $ a\re t'$, with $ a\eq^* \pair{a_1}{a_2}$
    and $ s\eq^* \pair{a_1 s'}{a_2 s'}$. Therefore, by the induction hypothesis,
    there exists $ u'_1$, $ u'_2$ such that either ($ a_1\toreq u'_1$ and $
    a_2\toreq u'_2$), or ($ a_1\eq^* u'_1$ and $ a_2\toreq u'_2$), or ($ a_1\toreq
    u'_1$ and $ a_2\eq^* u'_2$). Therefore, we take $ u_1= u_1' s'$ and $ u_2= u_2'
    s'$.
  \item $ s= a s'$, $ t= a t'$, and $ s'\re t'$, with $ a\eq^* \pair{a_1}{a_2}$
    and $ s\eq^* \pair{a_1 s'}{a_2 s'}$. By Lemma~\ref{lem:eqProd} several times,
    one the following cases happen
    \begin{enumerate}
    \item[(a)] $ a_1 s'\eq^* \pair{w_{11} s'}{w_{12} s'}$, $ a_2 s'\eq^*
      \pair{w_{21} s'}{w_{22} s'}$, $ r_1\eq^* \pair{w_{11} s'}{w_{21} s'}$ and $
      r_2\eq^* \pair{w_{12} s'}{w_{22} s'}$. We take $ u_1\eq^*(
      \pair{w_{11}}{w_{21}}) t'$ and $ r_2\eq^*( \pair{w_{12}}{w_{22}}) t'$.
    \item[(b)] $ a_2 s'\eq^* \pair{w_1 s'}{w_2 s'}$, $ r_1\eq^* \pair{a_1
        s'}{w_2 s'}$ and $ r_2\eq^* w_2 s'$. So we take $ u_1=( \pair{a_1}{a_2}) t'$ and
      $ u_2= w_2 t'$, the symmetric cases are analogous.
    \item[(c)] $ r_1\eq^* a_1 s'$ and $ r_2\eq^* a_2 s'$, in which case we take
      $ u_1= a_1 t'$ and $ u_2= a_2 t'$ the symmetric case is analogous.
    \end{enumerate}
  \item $ s=\Lambda X. a$, $ t=\Lambda X. t'$, and $ a\re t'$, with $ a\eq^*
    \pair{a_1}{a_2}$ and $ s\eq^*\Lambda X. \pair{a_1}{a_2} X. a_2$. Therefore, by
    the induction hypothesis, there exists $ u'_1$, $ u'_2$ such that either ($
    a_1\toreq u'_1$ and $ a_2\toreq u'_2$), or ($ a_1\eq^* u'_1$ and $ a_2\toreq
    u'_2$), or ($ a_1\toreq u'_1$ and $ a_2\eq^* u'_2$). Therefore, we take $
    u_1=\Lambda X. u_1'$ and $ u_2=\Lambda X. u_2'$.
  \item $ s= \tapp{a}{A}$, $ t= t'[A]$, and $ a\re t'$, with $ a\eq^*
    \pair{a_1}{a_2}$ and $ s\eq^* \pair{a_1[A]}{a_2[A]}$. Therefore, by the
    induction hypothesis, there exists $ u'_1$, $ u'_2$ such that either ($
    a_1\toreq u'_1$ and $ a_2\toreq u'_2$), or ($ a_1\eq^* u'_1$ and $ a_2\toreq
    u'_2$), or ($ a_1\toreq u'_1$ and $ a_2\eq^* u'_2$). Therefore, we take $ u_1=
    u_1'[A]$ and $ u_2= u_2'[A]$. \qedhere
  \end{itemize}
\end{proof}

\section{Detailed proofs of Section~\ref{sec:Adequacy}}\label{app:Adequacy}
\xrecap{Lemma}{Adequacy of product}{lem:adequacyOfProd}{
  For all $r, s, A, B$ such that $r\in\interp A$ and $s\in\interp B$, we have $
  \pair{r}{s}\in\interp{A\wedge B}$.
}
\begin{proof}
  We need to prove that $\ka{}{X}{A\wedge B}{{\pair{r}{s}}}\in\SN$.
  We proceed by induction on the number of projections in $\kk{}{X}{{A\wedge B}}$.
  Since the hole of
$\kk{}{X}{{A\wedge B}}$ has type $A\wedge B$, and $\ka{}{X}{A\wedge B}{{t}}$ has type
$X$ for any $t$ of type $A\wedge B$, we can assume, without lost of generality, that the context
$\kk{}{X}{{A\wedge B}}$ has the form $\ka{'}{X}{C}{\pi_C (\hole{A\wedge B}\alpha_1\dots
\alpha_n)}$, where each $\alpha_i$ is either a term or a type argument.
  We prove that all $\ka{'}{X}{C}{\pi_C (\pair{r\alpha_1\dots \alpha_n}{s\alpha_1\dots \alpha_n})}\in\SN$
  by showing, more generally, that if $r'$ and $s'$ are two reducts of
  $r\alpha_1\dots \alpha_n$ and $s\alpha_1\dots \alpha_n$, then
  $\ka{'}{X}{C}{\pi_C(\pair{r'}{s'})}\in\SN$. For this, we show that all its one step reducts are
  in $\SN$, by induction on $|\kk{'}{X}{C}| + |r'|+|s'|$.
  \begin{itemize}
  \item If the reduction takes place in one of the terms in $\T(\kk{'}{X}{C})$, in
    $r'$, or in $s'$, we apply the induction hypothesis.
  \item Otherwise, the reduction is a \rulelabel{($\pi$)} reduction of $\pi_C
    (\pair{r'}{s'})$, that is, $\pair{r'}{s'}\eq^*\pair{v}{w}$, the reduct is $v$, and we need to prove
    $\ka{'}{X}{C}{v}\in\SN$.
    By Lemma~\ref{lem:eqProd}, we have either:
    \begin{itemize}
    \item $v\eq^* \pair{r_1}{s_1}$, with $r'\eq^* \pair{r_1}{r_2}$ and
      $s'\eq^* \pair{s_1}{s_2}$. In such a case,
      by Lemma~\ref{lem:SNimpliespiSN}, $v$ is the product of two reducible terms, so since there is one projection less than in $\kk{}{X}{{A\wedge B}}$, the first induction hypothesis applies.
    \item $v\eq^*\pair{r'}{s_1}$, with $s'\eq^*\pair{s_1}{s_2}$. In such a case,
      by Lemma~\ref{lem:SNimpliespiSN}, $v$ is the product of two reducible terms, so since there is one projection less than in $\kk{}{X}{{A\wedge B}}$, the first induction hypothesis applies.
    \item $v\eq^*\pair{r_1}{s'}$, with $r'\eq^*\pair{r_1}{r_2}$. In such a case,
      by Lemma~\ref{lem:SNimpliespiSN}, $v$ is the product of two reducible terms, so since there is one projection less than in $\kk{}{X}{{A\wedge B}}$, the first induction hypothesis applies.
    \item $v\eq^*r'$, in which case, $C\equiv A$, and since $r\in\interp A$, we have $\ka{'}{X}{A}{r'}\eq^*\ka{'}{X}{A}{v}\in\SN$.
    \item $v\eq^*r_1$ with $r'\eq^* \pair{r_1}{r_2}$, in which case, since $r\in\interp A$, we have $\ka{'}{X}{C}{\pi_C(r')}\in\SN$ and $\ka{'}{X}{C}{\pi_C(r')}\toreq \ka{'}{X}{C}{v}$ hence $\ka{'}{X}{C}{v}\in\SN$.
    \item $v\eq^*s'$, in which case, $C\equiv B$, and since $s\in\interp B$, we have $\ka{'}{X}{B}{s'}\eq^*\ka{'}{X}{B}{v}\in\SN$.
    \item $v\eq^*s_1$ with $s'\eq^* \pair{s_1}{s_2}$, in which case, since $s\in\interp B$, we have $\ka{'}{X}{C}{\pi_C(s')}\in\SN$ and $\ka{'}{X}{C}{\pi_C(s')}\toreq \ka{'}{X}{C}{v}$ hence $\ka{'}{X}{C}{v}\in\SN$.
      \qedhere
    \end{itemize}
  \end{itemize}
\end{proof}

\xrecap{Lemma}{Adequacy of type abstraction}{lem:tlambda}
{
For all $r$, $X$, $A$, $B$ such that $\substa{X}{B}{r} \in \interp{\substa{X}{B}{A}}$,
we have $\Lambda X . r \in \interp{\tfor{X}{A}}$.
}
\begin{proof}
  We proceed by induction on $M(r)$.
  \begin{itemize}
  \item 
    If $r\eq^* \pair{r_1}{r_2}$, then by
    Lemma~\ref{psi-gen}, then $A\equiv A_1\wedge A_2$ with $r_1$ of type $A_1$ and $r_2$ of type $A_2$,
    and so by Lemma~\ref{substitution},
    $\substa{X}{B}{r_1}$ has type $\substa{X}{B}{A_1}$ and $\substa{X}{B}{r_2}$ has type $\substa{X}{B}{A_2}$. Since $\substa{X}{B}{r}\in\interp{\substa{X}{B}{A}}$, we have $\pair{\substa{X}{B}{r_1}}{\substa{X}{B}{r_2}}\in\interp{\substa{X}{B}{A}}$.
    By Lemma~\ref{lem:SNimpliespiSN}, $\substa{X}{B}{r_1}\in\interp{\substa{X}{B}{A_1}}$ and
    $\substa{X}{B}{r_2}\in\interp{\substa{X}{B}{A_2}}$. By the induction hypothesis, $\Lambda X.r_1\in\interp{\tfor{X}{A_1}}$ and $\Lambda X.r_2\in\interp{\tfor{X}{A_2}}$, then by Lemma~\ref{lem:adequacyOfProd}, 
    $\Lambda X.r\eq^*\pair{\Lambda X.r_1}{\Lambda X.r_2}\in\interp{(\tfor{X}{    A_1})\wedge(\tfor{X}{A_2})}$, and by Lemma~\ref{lem:eqInterp},
    $\interp{(\tfor{X}{A_1})\wedge(\tfor{X}{A_2})}=\interp{\tfor{X}{A}}$.

  \item If $r\nneq^*  \pair{r_1}{r_2}$, we need to prove that for any elimination context $\kk{}{X}{{\tfor{X}{A}}}$, we have
    $\ka{}{X}{{\tfor{X}{A}}}{\Lambda X. r}\in\SN$.
    
    Since $r$ and all the terms in $\T(\kk{}{X}{{\tfor{X}{A}}})$ are in \SN, we proceed by induction on $|\kk{}{X}{{\tfor{X}{A}}}| + |r|$ to show that all
    the one step reducts of $\ka{}{X}{{\tfor{X}{A}}}{\Lambda X.r}$ are in $\SN$.
    Since $r$ is not a product, its only one step reducts are the following.
    \begin{itemize}
    \item If the reduction takes place in one of the terms in $\T(\kk{}{X}{{\tfor{X}{A}}})$ or $r$, we apply the
      induction hypothesis.
    \item If $\ka{}{X}{{\tfor{X}{A}}}{\Lambda X.r} = \ka{'}{X}{{A}}{\tapp{(\Lambda X.r)}{B}}$ and it reduces to $\ka{'}{X}{A}{\substa{X}{B}{r}}$, as $\substa{X}{B}{r}\in\interp A$, we have
      $\ka{'}{X}{A}{ \substa{X}{B}{r}}\in\SN$.
    \item If $r \eq^* \lambda x^C . r'$, then by
      Lemma~\ref{psi-gen}, we have $A \equiv \timpl{C}{A'}$ with $r'$ of type $A'$ and $X \not \in \fvt{C}$,
      and so by Lemma~\ref{substitution},
      $\substa{X}{B}{r'}$ has type $\substa{X}{B}{A'}$. By Lemma~\ref{lem:var}, we have $x \in \interp{\substa{X}{B}{C}}$, and since $\substa{X}{B}{\lambda x^C . r'}\in\interp{\subst{X}{B}{\timpl{C}{A'}}}$, by Lemma~\ref{lem:appOfInterp}, we have $\substa{X}{B}{r'}\in\interp{\substa{X}{B}{A'}}$. By the induction hypothesis, we have $\Lambda X.r'\in\interp{\tfor{X}{A'}}$, then by Lemma~\ref{lem:lamOfInterp}, 
      $\Lambda X.r\eq^*\Lambda X.\lambda x^C.r'\in\interp{\timpl{C}{\tfor{X}{A'}}}$, and by Lemma~\ref{lem:eqInterp}, since $X \not \in \fvt{C}$, we have
      $\interp{\timpl{C}{\tfor{X}{A'}}}=\interp{\tfor{X}{A}}$.
      \qedhere
    \end{itemize}
  \end{itemize}
\end{proof}

\end{document}